\documentclass[aos]{imsart}
\usepackage[utf8]{inputenc}
\usepackage{amsmath,amsthm,mathtools,amssymb,lscape,mathabx}
\usepackage{thmtools,thm-restate}
\usepackage{mathrsfs}
\usepackage{proba}
\usepackage{booktabs}
\usepackage{longtable}
\usepackage{threeparttable,subfigure}
\usepackage{enumerate}
\usepackage[shortlabels]{enumitem}
\usepackage{graphicx,multirow}
\usepackage{dsfont}
\usepackage{caption}
\usepackage{xcolor}
\usepackage{chngcntr}
\usepackage{apptools}
\AtAppendix{\counterwithin{lemma}{section}}
\usepackage[english]{babel}
\usepackage[pagewise]{lineno}
\usepackage{titletoc}

\usepackage{natbib}
\bibliographystyle{imsart-nameyear} % imsart-nameyear

\usepackage{hyperref}
\hypersetup{
%    bookmarks=true,         % show bookmarks bar?
    unicode=false,          % non-Latin characters in Acrobat’s bookmarks
    pdftoolbar=true,        % show Acrobat’s toolbar?
    pdfmenubar=true,        % show Acrobat’s menu?
    pdffitwindow=false,     % window fit to page when opened
%    pdfstartview={FitW},    % fits the width of the page to the window
    pdftitle={},    % title
    pdfauthor={},     % author
    pdfsubject={},   % subject of the document
    pdfcreator={},   % creator of the document
    pdfproducer={},  % producer of the document
    pdfkeywords={}, % list of keywords
    pdfnewwindow=true,      % links in new window
    colorlinks=true,       % false: boxed links; true: colored links
    linkcolor={red!50!black},          % color of internal links
    citecolor={blue!50!black},        % color of links to bibliography
    filecolor=magenta,      % color of file links
    urlcolor={blue!80!black}           % color of external links
}

\newtheorem{Def}{Definition}
\newtheorem{assumption}{Assumption}

\newtheorem{corollary}{Corollary}
\newtheorem{theorem}{Theorem}
\newtheorem{lemma}{Lemma}

\newtheorem{Rem}{Remark}
\newtheorem{Ex}{Example}

\numberwithin{equation}{section}

\def\m{\mathcal}
\def\b{\boldsymbol}
\def\vec{\mathsf{vec}\,}

\def\diag{\mathsf{diag}\,}

\def\C{\mathbb{C}}
\def\H{\mathcal{H}}
\def\E{\mathbb{E}}
\def\R{\mathbb{R}}

\def\1{\mathds{1}}
\def\S{\mathcal{S}}

\def\d{\mathsf{d}}
\def\co{\mathsf{co}}

\newcommand{\lsim}{\lesssim}
\newcommand{\viii}[1]{{\left\vert\kern-0.25ex\left\vert\kern-0.25ex\left\vert #1 
    \right\vert\kern-0.25ex\right\vert\kern-0.25ex\right\vert}}

\DeclareSymbolFont{sfoperators}{OT1}{cmss}{m}{n}
\DeclareSymbolFontAlphabet{\mathsf}{sfoperators}

\makeatletter
\def\operator@font{\mathgroup\symsfoperators}
\makeatother

\begin{document}

\begin{frontmatter}
\title{Bridging factor and sparse models}
%\title{A sample article title with some additional note\thanksref{t1}}
\runtitle{Bridging factor and sparse models}
%\thankstext{T1}{A sample additional note to the title.}

\begin{aug}
%%%%%%%%%%%%%%%%%%%%%%%%%%%%%%%%%%%%%%%%%%%%%%%
%% Only one address is permitted per author. %%
%% Only division, organization and e-mail is %%
%% included in the address.                  %%
%% Additional information can be included in %%
%% the Acknowledgments section if necessary. %%
%%%%%%%%%%%%%%%%%%%%%%%%%%%%%%%%%%%%%%%%%%%%%%%
\author[A]{\fnms{Jianqing} \snm{Fan}\thanks{Supported by ONR grant N00014-22-1-2340 and NSF grants DMS-2210833, DMS-2053832, and DMS-2052926.}\ead[label=e1]{jqfan@princeton.edu}},
\author[B]{\fnms{Ricardo P.} \snm{Masini}\ead[label=e2]{rmasini@princeton.edu}}
\and
\author[C]{\fnms{Marcelo C.} \snm{Medeiros}\ead[label=e3]{mcm@econ.puc-rio.br}}
%%%%%%%%%%%%%%%%%%%%%%%%%%%%%%%%%%%%%%%%%%%%%%
%% Addresses                                %%
%%%%%%%%%%%%%%%%%%%%%%%%%%%%%%%%%%%%%%%%%%%%%%
\address[A]{Department of Operations Research and Financial Engineering, Princeton University
\printead{e1}}

\address[B]{Center for Statistics and Machine Learning, Princeton University
\printead{e2}}

\address[C]{Department of Economics, Pontifical Catholic University of Rio de Janeiro
\printead{e3}}
\end{aug}

\begin{abstract}
Factor and sparse models are two widely used methods to impose a low-dimensional structure in high-dimensions. However, they are seemingly mutually exclusive. We propose a lifting method that combines the merits of these two models in a supervised learning methodology that allows for efficiently exploring all the information in high-dimensional datasets. The method is based on a flexible model for high-dimensional panel data, called factor-augmented regression model with observable and/or latent common factors, as well as idiosyncratic components. This model not only includes both principal component regression and sparse regression as specific models but also significantly weakens the cross-sectional dependence and facilitates model selection and interpretability. The method consists of several steps and a novel test for (partial) covariance structure in high dimensions to infer the remaining cross-section dependence at each step. We develop the theory for the model and demonstrate the validity of the multiplier bootstrap for testing a high-dimensional (partial) covariance structure. The theory is supported by a simulation study and applications.
\end{abstract}

\begin{keyword}[class=MSC]
\kwd[Primary ]{62H25}
\kwd{62F03}
\end{keyword}

\begin{keyword}
\kwd{factor models}
\kwd{penalized least-squares}
\kwd{high-dimensional inference}
\kwd{(partial) covariance structure}
\kwd{prediction}
\kwd{machine learning}
\kwd{supervised learning}
\end{keyword}

\end{frontmatter}

\section{Introduction}
With the emergence of new and large datasets in almost all disciplines, the correct characterization of the dependence among variables is of substantial importance. Usually, to achieve this goal, the literature has followed two seemingly orthogonal tracks over the last two decades. On the one hand, factor models have become an essential tool to summarize information in large datasets under the assumption that the remaining dependence structure is negligible. For instance, panel factor models are now applied to various problems, ranging from forecasting to causal inference and network analysis. However, on the other hand, there have been significant advances in parameter estimation in ultra high-dimensions under the assumption of sparsity or weak sparsity. That is, a variable depends only on a (very) small subset of the other variables. For an overview of these two topics and their exciting developments, see \cite{FLZZ20}.

In this paper, we take an alternative route and combine the best of the two worlds described above to better characterize the dependence structure in high-dimensional datasets. More specifically, we consider that the covariance structure of a large set of variables, organized in a panel data format, is characterized as a combination of a factor structure, where factors can be either observed, unobserved, or both, and a weakly-sparse idiosyncratic component. This formulation is general enough to accommodate many data-generating processes of interest in economics, finance, epidemiology, and energy/engineering, for example. The proposed methodology has two ingredients: a multi-step estimation procedure and a new test for structure in high dimensional (partial) covariance matrices. The steps of the estimation procedure are as follows. In the first one, we take the original data and remove the effects of any observed factors. These factors can be deterministic terms such as seasonal dummies and/or trends or other observed covariates. The first step can be parametric or nonparametric, low or high dimensional. A latent factor model is then estimated using the residuals from the first stage. In a third step, we model the dependence among idiosyncratic terms as a weakly sparse regression estimated by the Least Absolute Shrinkage and Selection Operator (LASSO). A final additional fourth step can be used to build models for out-of-sample forecasting, taking into account each of the components in the previous steps. At each step, the null hypothesis of no remaining cross-section dependence can be tested by the proposed test for the (partial) covariance structure in high dimensions.

Our approach has many statistical applications. It can enhance high-dimensional prediction, select more interpretable variables, construct counterfactuals for policy evaluations, and depict (partial) correlation networks, among many others. 

\subsection{Motivation}\label{S:motivation}
Let $\b Y_t:=(Y_{1,t},\ldots, Y_{n,t})'$ be a random vector generated as $Y_{i,t}=\b\lambda_i'\b F_t+ U_{i,t}$, for $i\in[n]$, $t\in[T]$, where $\b\Sigma:=\b\E(\b U_t\b U_t')$, with $\b U_t:=(U_{1,t},\ldots, U_{n,t})'$, is not necessarily diagonal. Fix one component of interest $i\in[n]$, which serves as a response variable. Consider the following predictive models:
\begin{equation}\label{E:conditional_expectation_models}
\mathcal{M}_1:\,\E(Y_{i,t}|\b Y_{-i,t}),\quad
\mathcal{M}_2:\,\E(Y_{i,t}|\b F_t),\quad\textnormal{and}\quad
\mathcal{M}_3:\,\E(Y_{i,t}| \b F_t,\b U_{-i,t}),
\end{equation}
where $\b Y_{-i,t}$ and $\b U_{-i,t}$ are, respectively, vectors with the elements of $\b Y_t$ and $\b U_t$ without the $i$-th entry. $\mathcal{M}_3$ is a factor augmented regression since it is the same as $\E(Y_{i,t}| \b F_t,\b Y_{-i,t})$.

Suppose that we observe both $\b F_t$ and $\b U_{-i,t}$. Which one of the three models above is best in terms of mean square error ($\mathsf{MSE}$) for prediction? Comparison between $\mathcal{M}_1$ and $\mathcal{M}_2$ is not clear since it depends, among other things, on the magnitude of $\b \Sigma$ relative to $\b\Lambda'\b\Lambda$, where $\b\Lambda :=(\b\lambda_1,\ldots,\b\lambda_n)'$.
However, since the $\sigma$-algebras generated by $\b Y_{-i,t}$ and $\b F_t$ are both included in the $\sigma$-algebra generated by $(\b F_t,\b U_{-i,t})$, it is not surprising that $\mathsf{MSE}(\mathcal{M}_3)\leq\min[\mathsf{MSE}(\mathcal{M}_1),\mathsf{MSE}(\mathcal{M}_2)]$.
The same will hold true if we replace the models in \eqref{E:conditional_expectation_models} by their best linear projections, which we denote by $\mathcal{\widetilde{M}}_j$ for $j\in\{1,2,3\}$. Therefore, we can write the ``gains'' of  $\widetilde{\mathcal{M}}_3$ when compared to $\widetilde{\mathcal{M}}_1$ and $\widetilde{\mathcal{M}}_2$:
\begin{align*}
\mathsf{MSE}(\widetilde{\mathcal{M}}_3) &- \mathsf{MSE}(\widetilde{\mathcal{M}}_1) =- \b\vartheta_i'\b\Sigma_{-i,-i}\b\vartheta_i \\
\mathsf{MSE}(\widetilde{\mathcal{M}}_3) &- \mathsf{MSE}(\widetilde{\mathcal{M}}_2)=-\b\Delta_{1,i}\b\Delta_{1,i}' - \b\Delta_{2,i}'\b\Sigma_{-i,-i}\b\Delta_{2,i},
\end{align*}
where $\b\vartheta_i$ is coefficients of the projection of $U_{i,t}$ onto $\b U_{-i,t}$; $\b\Sigma_{-i,-i}$ is $\b\Sigma$ excluding the $i$-th row and column; $\b\Delta_{1,i} := \b\Lambda_i -\b\beta_i'\b\Lambda_{-i}$; $\b\Delta_{2,i} := \b\beta_i-\b\vartheta_i$; and $\b\beta_i$ is the coefficient of the projection of $Y_{i,t}$ onto $\b Y_{-i,t}$. From the previous expressions, it becomes evident that both $\widetilde{\mathcal{M}}_1$ and $\widetilde{\mathcal{M}}_2$ are restrictions on $\widetilde{\mathcal{M}}_3$. Broadly speaking, whenever one does not expect to have an \emph{exact} factor model, there are potential gains of taking into account the contribution of the idiosyncratic components $\b U_{-i,t}$. Therefore, we use $\widetilde{\mathcal{M}}_3$ as the base model for the estimation methodology described in Section \ref{S:methodology}.

\subsection{Main Contributions and Comparison with the Literature}
The contributions of this paper are multi-fold. First, our methodology bridges the gap between two apparently competing methods for high-dimensional modeling; see, for example, the discussion in \cite{illusion2017} and \cite{jFyKkW2020}. This yields a vast number of potential applications and spin-offs. For instance, in \cite{jFrMmM2020a}, we apply the methods developed here to evaluate the effects of interventions and contribute to the literature on synthetic controls and related methods by combining the approaches of \cite{lGtM2016}, and \cite{cCrMmM2018}. Therefore, in our setup, both a common strong factor structure and weak sparsity can coexist.

Second, the methodology proposed here contributes to the forecasting literature. For instance, in the second application considered in this paper, we build forecasting models based on a large cross-section of macroeconomic variables. We call this method the \texttt{FarmPredict}. We show that combining factors and a sparse regression strongly outperforms the traditional principal component regression as in \cite{jSmW2002b}. Therefore, \texttt{FarmPredict} can be an additional contribution to the forecasting and machine learning toolkit. The method can be easily extended to a multivariate setting combining factor-augmented vector autoregressions (FAVAR) as in \cite{bBjBpE2005} and sparse vector models as in \cite{aKlC2015} and \cite{rMmMeM2019}. Our methodology can also be applied in areas beyond economics. For example, it can be useful to construct forecasting models for the spread of infectious diseases, such as COVID-19, by taking into account co-movements and spillovers among different locations.

Third, we show the consistency of factor estimation based on the residuals of a first-step regression. Our results hold for both parametric (linear or nonlinear) and nonparametric first stage. A high-dimensional first stage is also allowed. Note that current results in the literature consider that factors are estimated based on observed data, and our derivations favor a much more flexible and general setup \citep{Bai2003,jBsN2002,jBsN2006}. More specifically, our methodology allows settings where there are both observed and latent factors and trend-stationary data. In the latter, the trend can be first removed by (nonparametric) first-stage regression.  Whenever the unobserved factors and the observed covariates are correlated, the method proposed in \cite{mhP2006} can be used, and all  results follow directly.

Fourth, we contribute to the LASSO literature. LASSO can not be model selection consistent for highly correlated variables. By decomposing covariates into factors and idiosyncratic components, namely the idea of lifting, we decorrelate the variables and make the model selection condition much easier to hold; see, for example, \cite{jFyKkW2020}. We show the consistency of the estimates based on the residuals of the previous steps. Our results are derived under restrictions on the population covariance matrix of the data and not on the estimated one, as is usual in many papers. See, for example, \cite{vandegeer2009}. Furthermore, we derive our results under much mild conditions than the ones considered in \cite{jFyKkW2020}.

Fifth, we extend the results in \cite{vCdCkK2013,CCK2018} to strong-mixing data in order to construct hypothesis tests for covariance and partial covariance structure in high dimensions.\footnote{\cite{alex2020} also extended \cite{vCdCkK2013}. However, their setup differs from ours as they only consider the case of independent and identically distributed data.} We also establish the consistency of a new estimator of the partial covariance matrix in high-dimensions and strong-mixing data. Our proposed tests can be used to infer if the (partial) covariance matrix of a high-dimensional random vector is diagonal or block-diagonal. More generally, we can test any pre-defined structure. Furthermore, we show that the test remains valid when we use the residuals from a previous step estimation to compute the covariance matrix. This result allows us to apply the test to the multi-stage estimation procedure proposed here. These are important results for many applications in economics, finance, epidemiology, and many other areas. For instance, our inference procedure can serve as a diagnostic and misspecification tool. For panel data models with interactive fixed effects as in \cite{mhP2006}, \cite{jB2009}, \cite{rMmW2015} and \cite{jByL2017}, our test can be directly applied to uncover the dependence structure among cross-sectional units before and after accounting for common factor components. If the factor structure is informative enough, we expect the idiosyncratic covariance matrix to be almost sparse. If this is not the case, we may have possibly underestimated the number of factors. One popular application is in asset pricing, as discussed in \cite{pGeOoS2019} and the empirical section of this paper. There are a huge number of proposed factors as described in \cite{gFsGdX2020}. We can apply our methodology to test for omitted factors and estimate network connections among firms, as in \cite{fxDkY2014} and \cite{cBgGgL2020}. Finally, as a diagnostic tool, our paper tackles the same problem as \cite{pGeOoS2019}. However, we take an alternative solution strategy that relies on a much different set of hypotheses; see also \cite{pGeOoS2020}.

Although our results are derived under the assumption that the number of factors is known, simulation results presented in the Supplementary Material provide evidence that the test has good finite-sample properties even when the number of factors is determined by data-driven methods commonly found in the literature.  In addition, due to factor augmentation, our method is robust to overestimating the number of factors. Over the past years, a vast number of papers proposed different methods to test for covariance structure in high dimensions. See, for example, \cite{cai2017} and the references therein. To the best of our knowledge, we complement all the previous papers by simultaneously considering high-dimensions, strong-mixing data with mild distributional assumptions, and pre-estimation when constructing tests for both covariance and partial covariance structure.

Finally, it is essential to highlight the theoretical challenges that we tackle in this paper. First, to derive the properties of our proposed test for (partial) covariance structure, we prove a new high dimensional Central Limit Theorem (CLT) for strong mixing sequences. To our knowledge, this CLT is new in the literature and allows us to apply Gaussian approximation results in a much more general framework. Second, as a side result, to develop the test, we first show the consistency of kernel-based estimation of a high-dimensional long-run covariance matrix of strong-mixing processes. This is also a new result with significant consequences for the theory of high-dimensional regression with dependent errors. Second, all the non-asymptotic bounds for our multi-step estimators are derived under the assumption that the distributions of the random variables in the model have polynomial tails, and the estimation errors of previous steps are also taken into account. This introduces several difficulties in proving the results but makes our method well-suited for applications with fat-tailed data, such as those observed in financial applications. Finally, in the derived bounds, the strong mixing coefficient appears explicit and can be allowed to grow with the sample size. This not only introduces technical challenges but makes our results very general.     

%\cite{oLmW2002}, \cite{chen2010}, \cite{aOmMmH2013}, \cite{cai2013}, \cite{wLyQ2014}, \cite{sZzChZrL2019}, \cite{cai2016}, \cite{zheng2019}, and \cite{xGcyT2020}, among many others.\footnote{For a nice recent review, see \cite{cai2017}.}

Summarizing, our approach provides:
\begin{enumerate}
\item
A systematic way to unify factor and sparse models to construct statistical specifications which use all available information and that can be applied, for example, to
\begin{enumerate}
\item
Forecasting in a high-dimensional setting;
\item
Construction of counterfactuals to aggregate data; or
\item
Estimation of partial correlation networks;
\end{enumerate}
\item
An inferential procedure to test for general structures in covariance and partial covariance matrices which can be useful, among other things, for:
\begin{enumerate}
\item
Test for misspecification in factor models; or
\item
Test for nontrivial links among idiosyncratic units.
\end{enumerate}
\end{enumerate}

\subsection{Organization of the Paper}
In addition to this Introduction, the paper is organized as follows. We present the model setup and assumptions in Section \ref{S:Setup}. The theoretical results are presented in Section \ref{S:Theory}. We discuss the empirical application in Section \ref{S:Applications}. Section \ref{S:Conclusions} concludes. All proofs, additional discussion, simulation, and empirical results are deferred to the Supplementary Material due to space constraints. Tables and figures in the Supplementary Material are referenced with an ``S'' before the number.

\subsection{Notation} All random quantities (real-valued, vectors and matrices) are defined in a common probability space $(\Omega,\mathscr{F},\P)$. We denote random variables by an upper case letter, $X$, and its realization by a lower case letter, $X=x$. The expected value operator is with respect to the $\P$ law, such that $\E(X):=\int_\Omega X(\omega) \mathrm{d}\P(\omega)$. Matrices and vectors are written in bold letters $\b X$.  Except for the number of factors,  $r$,  and the number of covariates,  $k$,  defined below,  all other dimensions are allowed to depend on the sample size ($T$). However, we omit this dependency throughout the paper to avoid clustering the notation prematurely. Also, we write $[n]:=\{1,\dots,n\}$ for $n\in\N$ and denote the cardinality of a set $\m{S}$ by $|\m{S}|$.

We use $\|\,\cdot\,\|_p$ to denote the $\ell^p$ norm for $p\in[1,\infty]$, such that for a $d$-dimensional (possibly random) vector $\b{X}=(X_1,\ldots, X_d)'$, we have $\|\b{X}\|_p:=(\sum_{i=1}^d|X_i|^p)^{1/p}$ for $p\in[1,\infty)$ and $\|\b{X}\|_\infty:=\sup_{i\leq d} |X_i|$. If $\b X$ is a $(m\times n)$ possibly random matrix, then $\|\b{X}\|_p$ denotes the matrix $\ell^p$-induced norm and $\|\b X\|_{\max}$ denotes the maximum entry in absolute terms of the matrix $\b X$.  Note that whenever $\b X$ is random,  then $\|\b{X}\|_p$ for $p\in[1,\infty]$ and $\|\b{X}\|_{\max}$ are random variables.  We also reserve the symbol $\|\,\cdot\,\|$ without subscript for the Euclidean norm  $\|\cdot\| :=\|\cdot\|_2$ for both vectors and matrices. We denote the $\ell^p(\P)$ norm of $X$ by $\viii{X}_p$ for $p\in[0,\infty]$, i.e., $\viii{X}_p:=(\E |X|^p)^{1/p}$ for $p\in[1,\infty)$ and $\viii{X}_\infty$ is the essential supremum of $X$ (in respect to $\P$).  Also,  when $\b X$ is a random vector, we define $\viii{\b X}_p:=\sup_{\|\b{u}\|\leq 1}\viii{\b{u}'\b X}_p$. Note that $\|\b X\|_p$ is random variable while $\viii{\b X}_p$ is non-random scalar for $p\in[0,\infty]$.

For any vector $\b{X}$, $\diag(\b{X})$ is a diagonal matrix whose diagonal is the elements of $\b{X}$. $\1(A)$ is an indicator function ion the event $A$, i.e, $\1(A) = 1$ if $A$ is true and $0$ otherwise. For any matrix (possibly random)  $\b M\in \R^{n\times T}$, $M_{i,t}\in\R$ represents the entry for row $i$ and column $t$, $\b M_{t}\in\R^n$ is the column-vector with all rows of column $t$ and $\b M_{i,\cdot}\in \R^T$ is a column-vector with the transpose of all elements of row $i$. We decide not to write $\b M_{t}\in\R^n$ as $\b M_{\cdot,t}$ to avoid a cumbersome notation. Finally, for non-negative sequences $x_m$ and $y_m$, we write $x\lsim y$ if there is a constant $C$ independent of $m$ such that $x_m\leq C y_m$ for all $m$. Also, we write $x\asymp y$ if both $x\lsim y$ and $y \lsim x$. Similarly, for non-negative random sequences $X_m$ and $Y_m$,  we write $X_m\lsim_\P Y_m$ if for every $\epsilon>0$ there is a constant $C$ independent of $m$ such that $\P(X_m\leq C Y_m)\leq \epsilon$ for all $m$. Also, $X_m\asymp_\P Y_m$, if both $X_m\lsim_\P Y_m$ and $Y_m \lsim_\P X_m$.

\section{Setup and Method}\label{S:Setup}

\subsection{Data Generating Process}\label{S:Example}

We consider a very general panel data model rich enough to nest several important cases in economics, finance, and related areas. We define the following Data Generating Process (DGP).
\begin{assumption}[DGP]\label{Ass:DGP} For $T\geq 4$ and $n\geq 2$, the process $\{Y_{i,t}:i\in[n], t\in [T]\}$  is generated by covariate-adjusted factor model
\begin{equation}\label{E:DGP}
Y_{i,t} =\b\gamma_{i}' \b X_{i,t}+\underbrace{\b\lambda_i'\b F_t, +U_{i,t}}_{=:R_{i,t}}
\end{equation}
where $\b X_{i,t}$ is a $k$-dimensional observable (random) vector which may also include a constant term and is typically used for adjustments of heterogeneity, seasonality, and covariate dependence, $\b F_t$ is a $r$-dimensional vector of common latent factors, and $U_{i,t}$ is a zero mean idiosyncratic component.\footnote{For simplicity, we assume that all the units $i$ have the same number of covariates ($k$). The framework can certainly accommodate situations where $k_i$ depends on $i$.  It also includes cross-sectional regression as a specific example.} The unknown parameters are $\b\gamma_i\in\R^{k}$, the factor loadings $\b\lambda_i$, and the covariance matrix of the idiosyncratic components. Finally, we assume that $\b X_{i,t}$, $\b F_t$ and $U_{i,t}$ are mutually uncorrelated, but can be serially autocorrelated. 
\end{assumption}

\begin{Rem}
In Assumption \ref{Ass:DGP} we consider that $k$, the dimension of $\b X_{i,t}$, is finite and fixed. Furthermore, the relation between $Y_{i,t}$ and $\b X_{i,t}$ is linear. This is for the sake of exposition. As our theoretical results are written in terms of the consistency rate of the first-step estimation, the DGP can be made much more general by just changing the rates.
\end{Rem}

\begin{Rem}\label{R:endo}
The assumption that $\b X_{i,t}$, $\b F_t$ and $U_{i,t}$ are mutually uncorrelated can be relaxed. Whenever $\b X_{i,t}$ is correlated with $\b F_t$ and $U_{i,t}$, and the interest lies of the estimation of the parameters $\b\gamma_i$, $i\in[n]$, the method proposed by \cite{mhP2006} can applied in the first-stage of the procedure considered in this paper and our theoretical results will follow. Nevertheless, we provide several examples where the assumption that $\b X_{i,t}$, $\b F_t$ and $U_{i,t}$ are mutually uncorrelated is reasonable.
\end{Rem}

Our modeling strategy does not stop at \eqref{E:DGP}.  Instead, we attempt to further explain $U_{i,t}$ and impose dynamics on $\b F_t$.  The former allows us to use other idiosyncratic components to further explain $Y_{i,t}$ and hence, increase the information set. The latter builds a dynamic model $\b F_t$ to facilitate out-of-sample prediction, for instance.

For each $i\in[n]$, let $\b W_{i,t}$ be a vector whose elements  form a (non-empty) subset of $\left(\b U_{-i,t}', \b U_{t-1}',\ldots,\b  U_{t-l_U}'\right)'$, where $l_U$ is a non-negative integer (much) smaller than $T$. This subset of variables attempts to explain further $U_{i,t}$ by the following population regression model:
\begin{equation}\label{E:U_reg}
U_{i,t} =\b \theta_i'\b W_{i,t} + V_{i,t},
\end{equation}
where $\E\left(\b W_{i,t}'V_{i,t}\right) = \b 0$, for  $i\in[n]$ and $t\in[T]$.
For simplicity of exposition, we assume that the dimension of $\b W_{i,t}$ is the same for all $i\in[n]$, which we denote by $d_W$. Clearly, $d_W\leq n (l_U+1)$.  

Similarly, for a non-negative integer $l_F$ (much) smaller than $T$, let $\b G_{j,t}$ be a vector whose elements form a (non-empty) subset of $\left(\b F_{t}', \b F_{t-1}',\ldots,\b  F_{t-l_F}'\right)'$, for  $j\in[r]$, which attempts to explain further the $j^{th}$ latent factor. Consider the following population regression model:
\begin{equation*}
F_{j,t} =\b \rho_j'\b G_{j,t} + V_{j,t}^F,
\end{equation*}
where $\E\left(\b G_{j,t}'{V_{j,t}^F}\right) = \b 0$, for $j\in[r]$ and $t\in[T]$.  Once again, we assume that the dimension of $\b G_{j,t}$ is the same for all $j\in[r]$, which we denote by $d_G$, where $d_G\leq r(l_F+1)$. Note that we do not necessarily exclude $\b F_t$ from $\b G_{t}$ to contemplate cases when the contemporaneous factor is available in a prediction exercise. For future reference we write \eqref{E:3rd_stage} as
\begin{equation}\label{E:3rd_stage}
\b F_{t} =\b P \b G_{t} + \b V_{t}^F,
\end{equation}
where $\b G_t :=(\b G_{1,t}',\dots, \b G_{r,t}')'$, $\b V_{t}^F:=( V_{1,t}^F,\dots, V_{r,t}^F)'$, and $\b P:=(\b\rho_1:\dots:\b\rho_r)'$.

\begin{Ex}[Asset Pricing Models]\label{Ex:APM}
Suppose $Y_{i,t}$ is the return of an asset $i$ at time $t$ and let $\b X_{i,t}:=\b X_t$ be a set of $k$ observable common risk factors, such as the market returns or Fama-French factors \citep{eFkF1993,eFkF2015}. $\b F_t$ can be a set of additional, non-observable risk factors. Several asset pricing models, such as the Capital Asset Pricing Model (CAPM) or the Arbitrage Pricing Theory (APT) model, are nested into this general framework. The idiosyncratic terms, $U_{i,t}$, can be non-trivially correlated across assets, representing links among firms that are not captured by the common factor structure. In this case, we may be interested in estimating a regression where $\b W_{i,t}=\b U_{-i,t}$ such that:
\[
U_{i,t} =\b \theta_i'\b U_{-i,t} + V_{i,t}.
\]
The coefficients $\b \theta_i$ represent the links among firms after controlling for the factors. The structures of the covariance and partial covariance matrix of $\b U_t=\left(U_{1,t},\ldots,U_{n,t}\right)'$ also shed light on these potential links. 
\end{Ex}

\begin{Ex}[Networks]
Model \eqref{E:DGP} also complements the network specifications discussed in Barigozzi and Hallin (2016,2017b)\nocite{mBmH2016,mBmH2017b} and \cite{mBcB2019}. Furthermore, as discussed in the previous example, the test proposed here can be used to detect network links as in \cite{fxDkY2014}, and \cite{cBgGgL2020}. As another example, $Y_{i,t}$ can be the (realized) volatility of financial assets and $\b X_{i,t}:=\b X_t$ can be volatility factors as in \cite{eAeG2021}.
\end{Ex}

\begin{Ex}[FAVAR]
In the case where the index $i$ represents a different dependent (endogenous) variable and $U_{i,t}$ is a dependent process, model \eqref{E:DGP} turns out to be equivalent to the Factor Augmented Vector Autoregressive (FAVAR) model of \cite{bBjBpE2005}. In this case, $\b{X}_{i,t}$ may include a constant and seasonal dummies,  and $\b W_{i,t}=\left(\b U_{t-1}',\ldots,\b  U_{t-l_U}'\right)'$. Furthermore, in order to construct $h$-step-ahead out-of-sample forecasts, we can set $\b G_{j,t}=(\b F_{t-h}',\ldots,\b F_{t-l_F}')'$.
\end{Ex}

\begin{Ex}[Panel Data Models]
Model \eqref{E:DGP} is the panel model with iterative fixed-effects considered in \cite{lGtM2016}, where the authors propose an alternative to the Synthetic Control method of \cite{aAjG2003} to evaluate the effects of regional policies. Model \eqref{E:DGP} is also in the heart of the \textnormal{\texttt{FarmTreat}} method of \cite{jFrMmM2020a}, where the authors set $\b W_{i,t}=\b U_{-i,t}$.
\end{Ex}

\subsection{The Method}\label{S:methodology}
The method proposed here for estimation, inference, and prediction consists of multiple stages, where the residuals' covariance structure can be tested at the end of each stage.
\begin{enumerate}
\item
For each $i\in[n]$ run the regression:
\[
Y_{i,t}=\b\gamma_i'\b{X}_{i,t}+R_{i,t},\quad \,t\in[T],
\]
and compute $\widehat{R}_{i,t}:= Y_{i,t} - \widehat{\b\gamma}_i'\b{X}_{i,t}$. The first stage may consist of a regression on a constant, a deterministic time trend, and seasonal dummies, for instance, or, as in Example \ref{Ex:APM}, a regression on observed factors. After removing the contribution from the observables, we can use the test for the null hypothesis of no remaining (partial) covariance structure to check if the (partial) covariance of $R_{i,t}$ is dense or sparse. If it is dense, we move to Step 2. Otherwise, we jump directly to Step 3. This first parametric, low-dimensional step can be replaced by a nonlinear/nonparametric regression or by a high-dimensional model when, for example, the number of observed factors is large. As pointed out in Remark \ref{R:endo}, the Pesaran's (2006) estimator can be also used whenever correlation between $\b{X}_{i,t}$ and $R_{i,t}$ is allowed. This will be discussed more in the subsequent sections.
\item
Write $\b{R}_t:=(R_{1,t},\ldots,R_{n,t})'$ and $\b{R}_t=\b\Lambda\b{F}_t+\b{U}_t$.
This step consists of estimating $\b\Lambda$ and $\b{F}_t$, for $t\in[T]$, through principal component analysis (PCA) \footnote{Another approach is to use the joint estimation as in \cite{agarwal2012noisy}.  
The key difference between the two approaches is the optimization-based approach (fully-iterated) and the one-step approach as well as the different assumptions behind the two approaches.  The PCA approach is based on the strong factor assumption with a large eigengap but does not impose the sparse structure on the idiosyncratic component covariance matrix. See \cite{FLM2013} for additional discussion.  }
of $\widehat{\b R}_t$ and compute
\[
\widehat{\b U}_t=\widehat{\b R}_t - \widehat{\b\Lambda}\widehat{\b F}_t.
\]
After estimating the factors and loadings, we apply our testing procedure to check for the remaining covariance structure in $\b U_t$. The second-step estimation can be carried out also by dynamic factor models. In Section \ref{S:Guide} we discuss the selection of the number of factors.
\item
Now, define $\widehat{\b W}_{i,t}$ be a non-empty subset of $(\widehat{\b U}_{-i,t}',  \widehat{\b U}_{t-1}',\ldots,\widehat{\b  U}_{t-l_U}')'$ where $l_U$ is a non-negative integer (lag). This includes contemporary regression for association studies and lagged regression for prediction of $U_{i,t}$ as two specific examples. The third step consists of a sparse regression to estimate the following model for each $i\in[n]$:
\[
\widehat{U}_{i,t}=\b\theta_i'\widehat{\b{W}}_{i,t} +V_{i,t};\qquad t\in[T].
\]
The regression in Step 3 provides useful augmentation for reducing the error in explaining  $Y_{i,t}$ in \eqref{E:DGP} further from $U_{i,t}$ to $V_{i,t}$  and hence the prediction error for  $Y_{i,t}$; see \eqref{eq-2.3}.

\item 
(\emph{For partial covariance analysis only}) Estimate the following sparse regression model for each $i,j\in[n]$:
\[
\widehat{U}_{i,t}=\b\theta_{i,j}'\widehat{\b{U}}_{-ij,t} +V_{i,j,t};\qquad t\in[T],
\]
where $\widehat{\b{U}}_{-ij,t}$ is the vector $\widehat{\b{U}}_{t}$ without $i^{th}$ and $j^{th}$ elements.  Let $\{\hat V_{i,j,t}\}$ be the residuals. Then compute the partial correlation as the sample correlation of $\{(\hat V_{i,j,t}, \hat V_{j,i,t})\}_{t=1}^T$. 
\item 
(\emph{For forecasting only}) Define $\widehat{\b G}_{t}:=(\widehat{\b G}_{1,t}',\dots, \widehat{\b G}_{r,t}')'$ where $\widehat{\b G}_{j,t}$ is a non-empty subset of $(\widehat{\b F}_{t}',  \widehat{\b F}_{t-1}',\ldots,\widehat{\b  F}_{t-l_F}')'$ for each $j\in[r]$; 
and $l_F$ is a non-negative integer (lag) \footnote{We write at this generality to accommodate the cross-sectional applications in which no latent factor needs to be predicted as in the principal component regression.  In this case, $\widehat{\b{G}}_{j,t} = \widehat{F}_{j,t}$ and this step is not needed.}. This step is multiple linear regression to estimate the following model for each $j\in[r]$:
\[
\widehat{F}_{j,t}=\b\rho_j'\widehat{\b{G}}_{j,t} +V_{j,t}^F,\qquad t\in[T].
\]
This step aims at establishing a predictive model for latent factors.
The estimator $\widehat{\b P}:=(\widehat{\b\rho}_1:\dots:\widehat{\b\rho}_r)'$  will be used in the predictive model defined in \eqref{eq-2.3}.
\end{enumerate}

\subsection{Predictive Models}\label{S:prediction}

In a pure prediction exercise, one is usually interested in the linear projection of $Y_{i,t}$ onto $(\b X_{i,t}',\b G_t', \b W_{i,t}')'$ motivated by the discussion in Section \ref{S:motivation}. This results in the factor-augmented regression model (\texttt{FARM})
\begin{equation} \label{eq2.2}
 {Y}_{i,t}= {\b\gamma_i}'\b X_{i,t}+ {\b\lambda_i}'\b P {\b G}_t +{\b\theta_i}'{\b W}_{i,t} + V_{i,t}^Y; \qquad i\in[n],\quad t\in[T],
\end{equation}
in which $\b P {\b G}_t$ predicts $\b F_t$; see \eqref{E:DGP}.  This
can used for prediction, following the steps described in Section  \ref{S:methodology},  by
\begin{equation} \label{eq-2.3}
   \widehat{Y}_{i,t}:=\widehat{\b\gamma_i}'\b X_{i,t}+\widehat{\b\lambda_i}'\widehat{\b P}\widehat{\b G}_t +\widehat{\b\theta_i}'\widehat{\b W}_{i,t};\qquad i\in[n],\quad t> T.
\end{equation}
We call the prediction model {\texttt{FarmPredict}}.

Note that model \eqref{eq2.2} is equivalent to using the predictors $ \b X_{i,t}, {\b Y}_{-i,t}$ and ${\b F}_t$, which augment predictors $ \b X_{i,t}, {\b Y}_{-i,t}$ by using the common factors ${\b F}_t$. The form in \eqref{eq2.2} mitigates the collinearity issues in high dimensions. Model \eqref{eq2.2} also bridges factor regression ($\b \theta_i = \b 0$) on one end and (sparse) regression on the other end with $\b \lambda_i = \b \Lambda_{-i}' \b \theta_i$, where $\b \Lambda_{-i}$ is the loading matrix without the $i$th row. In the latter, if we set $\b W_{i,t}=\b U_{-i,t}$ and $\b G_t = \b F_t$ (hence, $\b P = \b I_r$), model \eqref{eq2.2} becomes a (sparse) regression model:
\begin{equation*} 
    {Y}_{i,t}= {\b\gamma_i}'\b X_{i,t}+ {\b\theta_i}'{\b R}_{-i,t} + V_{i,t}^Y; \qquad i\in[n],\quad t\in[T].
\end{equation*}
In this case, \eqref{eq2.2} decorrelates the predictor $\b R_{-i,t}$,  which makes the model selection consistency much easier to satisfy and forms the basis of \texttt{FarmSelect} in \cite{jFyKkW2020}. Our contribution in this specific task is to allow heterogeneity adjustments, resulting in the estimated data $\b R_t$. In general, for model \eqref{eq2.2} with sparsity, \texttt{FarmPredict} chooses additional idiosyncratic components to enhance the prediction of the factor regression.

\begin{Ex}[First-Order FAVAR]
Consider the case of model \eqref{eq2.2} where $\b X_{i,t}=1$ and $l_U=l_F=1$, and set $\b W_{i,t}=\b U_{t-1}$ and $\b G_t=\b F_{t-1}$. Therefore, we can write
\begin{equation}\label{eq:FAVAR}
\begin{split}
\b Y_t & = (\b I - \b \Theta)\b \gamma + (\b\Lambda\b P - \b\Theta\b\Lambda)\b F_{t-1} + \b\Theta\b Y_{t-1} + \b V_{t}^Y,\\
& = \b \Theta_{0} + \b \Theta_{F}\b F_{t-1} + \b\Theta \b Y_{t-1} + \b V_{t}^Y,  \end{split}
\end{equation}
where $\b Y_{t}=(Y_{1t},\ldots,Y_{nt})'$, $\b\gamma=(\gamma_1,\ldots,\gamma_n)$, and $\b\Theta=(\b\theta_1:\cdots:\b\theta_n)'$. Model \eqref{eq:FAVAR} is a first-order FAVAR model.
\end{Ex}

\subsection{Covariance Structure and Inference}\label{S:cov_inference}

In several applications, the structure of the idiosyncratic components $\b U_t=(U_{1,t},\ldots, U_{n,t})'$ is the objective of interest. An
estimator for $\b\Sigma := \E(\b U_t \b U_t')$ could be simply given by
\begin{equation} \label{eq2.5}
   \widehat{\b\Sigma}:=\frac{1}{T}\sum_{t=1}^T \widehat{\b U}_t\widehat{\b U}_t'.
\end{equation}
The task of estimating $\b\Sigma$ is well documented in literature even in high-dimensions; see, for example, \cite{FLZZ20}, \cite{oLmW2021b}, and the references therein.  Nevertheless, we show that \eqref{eq2.5} can be used within our testing framework.

In order to proper understand the (linear) relation between a pair $(U_{i,t},U_{j,t})$ of $\b U_t$, a simple covariance estimate sometimes is not enough. In applications, it is often desirable to directly measure how $U_{i,t}$ and $U_{j,t}$ are connected. By direct connection, we mean the relation between those units removing the contribution of other variables of $\b U_t$. For this purpose, we use the partial covariance between $U_{i,t}$ and $U_{j,t}$, defined for any pair $i,j\in[n]$ as $\pi_{i,j} := \E(V_{i,j,t}V_{j,i,t})$, where $V_{i,j,t}:=U_{i,t} - \mathsf{Proj}(U_{i,t}|\b U_{-ij,t})$ and $\mathsf{Proj}(U_{i,t}|\b U_{-ij,t})$ denotes the linear projection of $U_{i,t}$ onto the space spanned by all the units except $i$ and $j$, which we denote by $\b U_{-ij,t}$. As in \citet{jPpWnZjZ2009}, we suggest to estimate the partial covariance matrix $\b\Pi := (\pi_{i,j})$ by
\begin{equation} \label{eq2.6}
    \widehat{\b\Pi}:=(\widehat{\pi}_{i,j})\quad\textnormal{and}\quad \widehat{\pi}_{i,j}:=\frac{1}{T}\sum_{t=1}^T\widehat{V}_{i,j,t}\widehat{V}_{j,i,t},
\end{equation}
where $\widehat{V}_{i,j,t}:=\widehat{U}_{i,t}-\widehat{\b\theta}_{i,j}'\widehat{\b U}_{-ij,t}$ is the residual of the LASSO regression of $\widehat{U}_{i,t}$ onto $\widehat{\b U}_{-ij,t}$ obtained in step 4 for $i,j\in[n]$.

We also would like to conduct formal tests on the population structure of $\b U_t$. Specifically, we propose a test for the following null hypothesis:
\begin{equation}\label{E:null}
\H_\m{D}^\Sigma:\b\Sigma_\m{D} = \b\Sigma_\m{D}^0;\qquad \m{D}\subseteq[n]^2,
\end{equation}
for a given subset $\m{D}$, where, for a given $(n\times n)$ matrix $\b M$, the notation $\b M_\m{D}$ denotes the $d:=|\m{D}|$-dimensional vector of  $\vec(\b M)$ indexed by the elements in $\m{D}\subseteq[n]^2$. Note we allow $d$ to diverge as $n,T\to\infty$. For testing the structure on the partial covariance matrix, consider:
\begin{equation}\label{E:null_partial_cov}
\H_{\m{D}}^\Pi:\b\Pi_{\m{D}}= \b\Pi^0_\m{D};\qquad \m{D}\subseteq[n]^2.
\end{equation}

The null hypotheses \eqref{E:null} and \eqref{E:null_partial_cov} nest several cases of interest. The most common would be to test for a diagonal or a block diagonal structure in $\b\Sigma$ and/or $\b\Pi$. But it also accommodates other structures.\footnote{With minor changes, the proposed test can also be used to test the null $\b M\vec(\b\Sigma)=\b m$ for some  $d\times n^2$ matrix $\b M$ and $d$-dimensional vector $\b m$ where $d:=d_T$ is also a function of $T$.} The challenges for testing \eqref{E:null} and \eqref{E:null_partial_cov} can be summarized as follows:
\begin{enumerate}
\item As we allow for both $n$ and $d$ to diverge as $T$ grows, sometimes at a faster rate, we have a high-dimensional test where some sort of Gaussian approximation result for dependent data must be deployed as we also allow the number covariances to be tested $d$ to diverge. In this case, a high-dimensional long-run covariance matrix must be estimated if one expects to get the (asymptotic) correct test size.
\item We do not observe $\{\b U_t\}$ or $\{V_{i,j,t}\}$. Instead, we have an estimate of both from a postulated model on observable random variables. Therefore, the estimation error must be considered to claim some sort of asymptotic properties of the test. In fact, it is not uncommon to obtain estimates of both $\{\b U_t\}$  and $\{V_{i,j,t}\}$ from a multi-stage estimation procedure as we illustrate later.
\end{enumerate}

We propose to test \eqref{E:null} using the statistic
\begin{equation}\label{E:test_statistic_cov}
  S_\m{D}^\Sigma:= \|\sqrt{T}(\widehat{\b\Sigma}_\m{D} - \b\Sigma_\m{D}^0)\|_{\max}.
\end{equation}
The quantiles of $S_\m{D}^\Sigma$ are approximated by a Gaussian bootstrap. To describe the procedure, let $\b\Upsilon_{\Sigma}$ denote the $(d\times d)$ covariance matrix for the vectorized submatrix $(\widetilde{\sigma}_{i,j})_{(i,j)\in\m{D}}$, where $\widetilde{\sigma}_{i,j}:=\frac{1}{T}\sum_{t=1}^T U_{i,t}U_{j,t}$. Since the process $\{\b U_{t}\}$ might present some form of temporal dependence (refer to Assumption \ref{Ass:Moments}(c)), we estimate $\b\Upsilon_{\Sigma}$ using a Newey-West-type estimator. For a given $\m{K}\in\K$,  where $\K$ is a class of kernel functions described below in \eqref{E:kernel_class} and bandwidth $h>0$ ,
 $\b\Upsilon_\Sigma$ is estimated by
\begin{equation}\label{E:NW_cov}
    \widehat{\b\Upsilon}_\Sigma:=\sum_{|b|<T}\m{K}(b/h) \widehat{\b M}_{\Sigma,b}\qquad \textnormal{and}\qquad\widehat{\b M}_{\Sigma,b} :=\frac{1}{T}\sum_{t = b+1}^T \widehat{\b D}_{\Sigma,t}\widehat{\b D}_{\Sigma,t-b}',
\end{equation}
where $\widehat{\b D}_{\Sigma,t}$ is a $d$-dimensional vector with entries given by $\widehat{U}_{i,t}\widehat{U}_{j,t}-\widehat{\sigma}_{i,j}$ for $(i,j)\in\m{D}$, where $\widehat{\sigma}_{i,j}$ is the $(i,j)$ element of $\widehat {\b \Sigma}$ defined in \eqref{eq2.5}. Finally, let $c^*_\Sigma(\tau)$ be the  conditional $\tau$-quantile of the Gaussian bootstrap $S^*_\m{D}:=\|\b Z^*_\Sigma \|_\infty$ where $\b Z^*_{\Sigma}|\b X,\b Y \sim \m{N}(\b 0,\widehat{\b\Upsilon}_\Sigma)$, i.e.
\[c^*_\Sigma(\tau):=\inf\{q\in \R:\P(S^*_\m{D}\leq q|\b X,\b Y)\geq \tau\}\]
Theorem \ref{T:inference_cov} demonstrates the validity of  Gaussian bootstrap procedure described above, i.e., it states conditions under which the $\tau$-quantile of the test statistic \eqref{E:test_statistic_cov} can be approximated by $c^*_\Sigma(\tau)$ in the appropriate sense.

Similarly, the test statistic for \eqref{E:null_partial_cov} is given by
\begin{equation}\label{E:test_statistic_partial_cov}
  S_\m{D}^\Pi:= \|\sqrt{T}(\widehat{\b\Pi}_\m{D} - \b\Pi_\m{D}^0)\|_{\max}.
\end{equation}
Let $\b\Upsilon_{\Pi}$ denote the $(d\times d)$ covariance matrix of $(\widetilde{\pi}_{i,j})_{(i,j)\in\m{D}}$ where $\widetilde{\pi}_{i,j}:=\frac{1}{T}\sum_{t=1}^TV_{i,j,t}V_{j,i,t}$.  $\b\Upsilon_\Pi$ is estimated by
\begin{equation}\label{E:NW_partial_cov}
 \widehat{\b\Upsilon}_\Pi:=\sum_{|b|<T}\m{K}(b/h) \widehat{\b M}_{\Pi,b};\qquad \widehat{\b M}_{\Pi,b} :=\frac{1}{T}\sum_{t = b+1}^T \widehat{\b D}_{\Pi,t}\widehat{\b D}_{\Pi,t-b}',
\end{equation}
where $\widehat{\b D}_{\Pi,t}$ is a $d$-dimensional vector with entries given by $\widehat{V}_{i,j,t}\widehat{V}_{j,i,t}-\widehat{\pi}_{i,j}$ for $(i,j)\in\m{D}$. Also, let $c^*_\Pi(\tau)$ be the conditional $\tau$-quantile of the Gaussian bootstrap $S^*_\m{D}:=\|\b Z^*_\Pi \|_\infty$ where $\b Z^*_\Pi|\b X,\b Y \sim \m{N}(\b 0,\widehat{\b\Upsilon}_\Pi)$. Theorem \ref{T:inference_partial_cov} establish conditions for the validity of  Gaussian bootstrap to approximate the quantiles of \eqref{E:test_statistic_partial_cov}.

\section{Theoretical Results}\label{S:Theory}

In this section, we collect all the theoretical guarantees for estimating the model \eqref{E:DGP} by using the proposed multi-stage method described above. Specifically,  Subsection \ref{SS:estimation} presents non-asymptotic bounds for the (parametric) estimation, Subsection \ref{SS:forecast} deals similar results concerning forecasting, and Subsection \ref{SS:inference} deals with inference on the (partial) covariance structure of $\b U_t$.

To present the results, it is convenient to use a compact notation. For each $i\in[n]$, we define the $T$-dimensional vectors $\b Y_{i,\cdot}:=(Y_{i,1},\ldots, Y_{i,T})'$ and $\b U_{i,\cdot}:=(U_{i,1},\ldots, U_{i,T})'$. We also define the $(T\times k)$ matrix of covariates $\b X_{i,\cdot}:=(\b X_{i,1},\ldots, \b X_{i,T})'$, for each $i\in[n]$ and the $(T\times r)$ matrix of factors $\b F:=(\b F_1,\dots, \b F_T)'$, such that \eqref{E:DGP} can be represented as
\begin{equation}\label{E:DGP_vec_i_form}
\begin{split}
\b Y_{i,\cdot} &= \b X_{i,\cdot}\b\gamma_i +\b F\b\lambda_i+ \b U_{i,\cdot},\qquad i=1,2,\ldots,n,\\
 &= \b X_{i,\cdot}\b\gamma_i + \b R_{i,\cdot},
\end{split}
\end{equation}
for each cross-sectional unit $i$, where $\b R_{i,\cdot}:=\b F\b\lambda_i+ \b U_{i,\cdot}$.

We define for each $t\in[T]$, the $n$-dimensional vectors $\b Y_t:=(Y_{1,t},\ldots, Y_{n,t})'$ and $\b U_t:=(U_{1,t},\ldots, U_{n,t})'$; and the $nk$-dimensional vector $\b X_t:=(\b X_{1,t}',\ldots, \b X_{n,t}')'$. Also, set the $(n\times nk)$ block diagonal matrix $\b \Gamma$ whose block diagonal is given by $(\b\gamma_1',\dots, \b\gamma_n')$ and the $(n\times r)$ loading matrix $\b\Lambda:=(\b\lambda_1,\dots,\b\lambda_n)'$. Then, \eqref{E:DGP} can also be represented as panel time series
\begin{equation}\label{E:DGP_vec_t_from}
\begin{split}
\b Y_t &= \b \Gamma\b X_t+\b \Lambda \b F_t + \b U_t,\qquad t=1,2,\ldots,T\\
 &= \b \Gamma\b X_t+\b R_t,
\end{split}
\end{equation}
where $\b R_t:=\b\Lambda \b F_t+ \b U_t$.

\subsection{Estimation}\label{SS:estimation}

We start by stating the following assumption.

\begin{assumption}[\textbf{Moments and Dependency}]\label{Ass:Moments} Consider the following:
\begin{enumerate}[(a)]
    \item The stochastic process $\{\b Z_t:= (\b X_{S,t}',\b F_t',\b U_{t}')':t\in [T]\}$ is weakly stationary for each $T\in\N$, where $\b X_{S,t}$ denotes the vector $\b X_{t}$ after excluding all deterministic (non-random) components. Furthermore, the strong mixing coefficient of $\b Z_t$ is denoted by $\alpha_m$.
    
    \item Define $\b{\m{U}}_t:=(\b U_{t}',\b U_{t-1}',\dots, \b U_{t-l}')'$ for some  integer $l\geq 0$ and let $b>0$ be a finite constant such that $\lambda_{\min}\left[\E \left(\b{\m{U}}_t\b{\m{U}}_t'\right)\right]\geq b^2$, where $\lambda_{\min}(\cdot)$ is the minimum eigenvalue of $(\cdot)$.
\item
Assume there exists an universal constant $C>0$ such for all $t,s\in [T]$, $T\geq 2$ and $i\in[n]$:
\begin{enumerate}
    \item[(c.1)]  $\viii{\b Z_{t}}_{p+\epsilon}\leq C$ for some constants  $p\geq 8$ and $\epsilon>0$
    \item[(c.2)] $\viii{n^{-1/2}\left[\b U_s'\b U_t - \E(\b U_s'\b U_t)\right]}_{p}\leq C$
    \item[(c.3)] $\viii{n^{-1/2}\sum_{i=1}^n\lambda_{j,i}U_{i,t}}_{p}\leq C$
    \item[(c.4)] $\viii{\|(\b X_i'\b X_i/T)^{-1}\|}_{p}\leq C$.
\end{enumerate}
\end{enumerate}
\end{assumption}

Assumption (\ref{Ass:Moments}.a) excludes the deterministic components of $\b X_t$ to accommodate possibly non-random non-stationary, but uniformly bounded covariates as in Assumption (\ref{Ass:Moments}.c). Assumption (\ref{Ass:Moments}.b) ensures that the parameter in \eqref{E:U_reg} is well defined since, by the Cauchy interlacing theorem, we have $\inf_i \lambda_{\min}\left[\E\left(\b W_{i,t} \b W_{i,t}\right)\right]\geq \lambda_{\min}\left[\E\left( \b{\m{U}}_t\b{\m{U}}_t'\right)\right]$. Finally, Assumptions (\ref{Ass:Moments}.a) and (\ref{Ass:Moments}.c) allow us to apply a Marcinkiewicz-Zygmund type inequality for partial sums to deal with polynomial tails (see Lemma \ref{L:tail_bound_poly} in the Supplementary Material).

For each $i\in[n]$, let $\b R_i :=\b F\b\lambda_i+ \b U_i$ denote the unobservable error term in \eqref{E:DGP_vec_i_form}, $\widehat{\b\gamma}_i$ the least-squares estimator of $\b \gamma_i$ and $\widehat{\b R}_{i}:=\b Y_t-\b X_t \widehat{\b\gamma}_i$ the vector of residuals. Also, set $\widehat{\b R}:=(\widehat{\b R}_1,\dots, \widehat{\b R}_n)'$ and $\b R:=(\b R_1,\ldots, \b R_n)'$. We must control for  estimation error in the first step of the proposed methodology. The next result gives a bound for the maximum entry of the $(n\times T)$ matrix $\widehat{\b R}-\b R$ when the first stage is conducted by OLS in a linear setup. Note that in this case we assume that $\b X_{i,t}$, $\b F_t$ and $U_{i,t}$ are mutually uncorrelated.

We state the results below in terms of the strong mixing coefficient sequence, whose definition is presented here for convenience. For $m\in\{0,\ldots,T-1\}$, define
\begin{equation}\label{E:strong_mixing_coef}
    \alpha_m:=\sup\{|\P(A\cap B)-\P(A)\P(B)|:A\in\m{Z}_{1}^t,\, B\in\m{Z}_{t+m}^T, t\in[T]\},
\end{equation}
where $\m{Z}_s^t$ is the $\sigma$-algebra generated by $(\b Z_{s},\ldots, \b Z_t)$ for $1\leq s\leq t\leq T$.
Note that $\alpha_m$ might depend on both $T$ and $n$. 

\begin{theorem}\label{T:LS}  Under Assumptions \ref{Ass:DGP} and \ref{Ass:Moments}:
\[
\|\widehat{\b R}-\b R\|_{\max}\lesssim_\P \frac{\mathscr{R}_\alpha\sqrt{r}k^{1 + \tfrac{3}{p}}  n^{4/p}}{T^{1/2-1/p}},
\]
where $\mathscr{R}_\alpha:=\mathscr{R}_\alpha(T,n):=\left[\sum_{m=0}^{T-1} (m+1)^{(p/2)-2}\alpha_m^{1-p/(p+\epsilon)}\right]^{\tfrac{2}{p}}$ and $\alpha_m$ is defined by \eqref{E:strong_mixing_coef}. 
\end{theorem}

\begin{Rem}
Even though we will treat the number of factors, $r$, and the number of covariates, $k$, fixed (not depending on $T$ or $n$), we kept them explicit in the result above. Furthermore, if we assume that $\alpha_{m}\leq K (m+1)^{-c}$ for all $m$, where $K$ is a constant that might depend on $n$ and $c\geq 0$ is an universal constant, then

\[\mathscr{R}_{\alpha}(T,n)\lsim K^{\frac{2\epsilon}{p(p+\epsilon)}}\times
\begin{cases}
   1 &; c>\frac{(p-2)(p+\epsilon)}{2\epsilon}\\
   (\log T)^{\frac{2}{p}} &; c=\frac{(p-2)(p+\epsilon)}{2\epsilon}\\
   T^{\frac{2}{p}\left[\frac{(p-2)(p+\epsilon)}{2\epsilon}+1\right]} &; c<\frac{(p-2)(p+\epsilon)}{2\epsilon}.
   \end{cases}
\]
\end{Rem}

\begin{Rem}
When the first step of the method involves more complicated estimation, such as Pesaran's (2006) method, instrumental variables, or LASSO,  we write $\|\widehat{\b R}-\b R\|_{\max} \lsim_\P \varrho_R$, where $\varrho_R$ is a non-negative sequence of $n$ and $T$. This approach is adopted systematically in the following theorems.
\end{Rem}

Define the $(n\times T)$ matrices $\b Y := (\b Y_1,\ldots,\b Y_T)$ and $\b U := (\b U_1,\ldots,\b U_T)$;  and the $(nk\times T)$ matrix $\b X:=(\b X_{1},\ldots, \b X_T)$. We can write \eqref{E:DGP} in the matrix form as
\begin{equation}\label{E:DGP_matrix_form}
\b Y =\b \Gamma\b X+\b\Lambda\b F' + \b U.
\end{equation}

Notice that  $\widehat{\b R}= \b\Lambda\b F' + \widetilde{\b U}$ where $\widetilde{\b U}:=\b U + \widehat{\b R}-\b R$ and  $(\b\Lambda,\b F)$ can be estimated by Principal Component Analysis (PCA), which minimizes $\|\widehat{\b R} -\b\Lambda\b F'\|_F^2$ with respect to $\b\Lambda$ and $\b F$, subject to $\b F'\b F/T=\b I_r$.  The solution $\widehat{\b F}$ is the matrix whose columns are $\sqrt{T}$ times $r$ eigenvectors of the top $r$  eigenvalues of $\widehat{\b R}'\widehat{\b R}$ and $\widehat{\b\Lambda} = \widehat{\b R}\widehat{\b F}/T$.

Since we do not observe $\b U$, in the third step of the method we use  $\widehat{\b U}:=\widehat{\b R}-\widehat{\b\Lambda}\widehat{\b  F}'$ instead. Therefore, we must control for the estimation error in the previous steps: $\widehat{\b U} - \b U$. Also, the loading matrix $\b\Lambda$ and the factors $\b F$ are not separably identified since $\b\Lambda\b F_t = \b\Lambda\b H'\b H\b F_t$ for any matrix $\b H$ such that $\b H'\b H=\b I_r$. If we let $\b H:=T^{-1}\b V^{-1}\widehat{\b F}'\b F\b\Lambda'\b\Lambda$, where $\b V$ is the $(r\times r)$ diagonal matrix containing the $r$ largest eigenvalues of $\widehat{\b R}'\widehat{\b R}/T$ in decreasing order, we have that $\b H\b F_t$ is identified as $\b\Lambda \b F_t$ is identified.

The result below first appeared in \cite{Bai2003} for the case of  $(n,T)$ diverging and was further extended to hold uniformly in $(i\leq n,t\leq T)$ by \cite{FLM2013}. %\cite{FLZZ20} makes the conditions modular. 
However, both consider the case when the factor model is estimated using the actual data instead of the ``estimated'' ones (residuals) as in our case. Therefore, the next result is a generalization that takes into account the pre-estimation error term and quantifies how the error impacts the precision of factor analysis.

\begin{assumption}[\textbf{Factor Model}]\label{Ass:Factor_Model} Assume:
\begin{enumerate}[(a)]
    \item $\E(\b F_t) = \b 0$, $\E(\b F_t \b F_t') = \b I_r$, and $\b\Lambda'\b\Lambda$ is a diagonal matrix;
    \item All eigenvalues of $\b\Lambda'\b\Lambda/n$ are bounded away from zero and infinity as $n\to\infty$;
    \item $\|\b\Sigma - \b\Lambda\b\Lambda'\|\lsim 1$; and
    \item $\|\b\Lambda\|_{\max}\lsim 1$.
\end{enumerate}
\end{assumption}

\begin{Rem}
Assumption \ref{Ass:Factor_Model} is standard in the literature. Assumption $\E(\b F_t)=\b 0$ is not restrictive as our approach considers a first-step estimation which may include a constant in the set of regressors.  Assumption (\ref{Ass:Factor_Model}.a) is also needed for identifiability of the factor structure. Assumption (\ref{Ass:Factor_Model}.c) imposes a strong factor structure.
\end{Rem}

\begin{theorem}\label{T:FM} Under Assumptions \ref{Ass:DGP} --\ref{Ass:Factor_Model} , let $\varrho_R$ be a non-negative sequence of $n$ and $T$ such that $\|\widehat{\b R}-\b R\|_{\max} \lsim_\P \varrho_R$. Then,
\begin{enumerate}
\item[(a)] $\max_{t\leq T}\|\widehat{\b F}_t - \b H \b F_t\|_2\lsim_\P \frac{1}{\sqrt{T}} + \frac{T^{1/p}}{\sqrt{n}} +\varrho_R(nT)^{1/p}$,
\item[(b)] $\max_{i\leq n}\|\widehat{\b\lambda}_i - \b H \b\lambda_i\|_2\lsim_\P \frac{\mathscr{R}_\alpha n^{2/p}}{\sqrt{T}} + \frac{1}{\sqrt{n}} +\varrho_R$, and
\item[(c)] $
\|\widehat{\b U} - \b U\|_{\max}\lsim_\P \frac{\mathscr{R}_\alpha n^{2/p}}{T^{1/2-1/p}}+\frac{T^{1/p}}{\sqrt{n}} + \varrho_R(nT)^{1/p}$,
\end{enumerate}
provided that $\mathscr{R}_\alpha n^{4/p}/\sqrt{T} + (nT)^{1/p}\varrho_R \lsim 1$, where $\mathscr{R}_\alpha$ is defined in Theorem \ref{T:LS}.
\end{theorem}

By setting $\varrho_R=0$ we have the case of no estimation error in the first step. Note that in order to
have the error $\|\widehat{\b U} - \b U\|_{\max}$ vanishing in probability we must have the pre-estimation error $\|\widehat{\b R}-\b R\|_{\max}$ of order (in probability) smaller than  $(nT)^{-1/p}$. %in the polynomial case or $[\log(nT)]^{-1/\gamma_2}$ in the exponential case.

We have decided not to replace $\varrho_R$ in Theorem \ref{T:FM} with the rate obtained in Theorem \ref{T:LS} as the latter only applies to the least-squares estimator. In some applications, however, the first step of the procedure could be done using a different type of estimator. For instance, a penalized adaptive Huber regression as in \cite{fan2017estimation} if the number of features $k$ is comparable or even larger than $T$ and the tail of the distribution of $\b{X}_t$ is heavy. By stating Theorem \ref{T:FM} in terms of a generic rate, it is easier to account for the effect of different estimators. By combining Theorems \ref{T:LS} and \ref{T:FM} we have the following corollary.

\begin{corollary}\label{C:LS+FM} Under the same assumptions of Theorems \ref{T:LS} and \ref{T:FM}, when OLS is used in the first-stage to obtain $\widehat{\b R}$, we have 
\[
\|\widehat{\b U} - \b U\|_{\max}\lsim_\P \frac{\mathscr{R}_\alpha n^{5/p}}{T^{1/2-2/p}}+\frac{T^{1/p}}{\sqrt{n}}=:\varpi_U.
\]

\end{corollary}

We propose to estimate \eqref{E:3rd_stage} by LASSO using $\widehat{\b U}$ in place of $\b U$. Specifically, for a regularization parameter $\xi> 0$, we denote by $\widehat{\b\theta}_i$ a minimizer of $\mathcal{Q}_i$ given by
\begin{equation}\label{E:LASSO_obj_fun}
\mathcal{Q}_i(\b a):=\frac{1}{T}\sum_{t\in[T]} (\widehat{U}_{i,t} - \b a'\widehat{\b W}_{i,t} )^2+ \xi\|\b a\|_1;\qquad i\in[n].
\end{equation}

The next result presents non-asymptotic bounds for the (in sample) prediction error and the $\ell_1$-estimation error for the LASSO estimator \eqref{E:LASSO_obj_fun} based on the ``estimated data'' and quantifies how the estimation errors impact on the choice of regularization parameter and rates of convergence.

\begin{theorem}\label{T:Oracle_bounds} Let $\varrho_U$ be a non-negative sequence of $n$ and $T$ such that $\|\widehat{\b U}- \b U\|_{\max} \lsim_\P \varrho_U$ and  assume that Assumption \ref{Ass:Moments} holds. For every $\epsilon>0$ there is a constant $0<C_\epsilon<\infty$ such that if the penalty parameter is set $\xi\geq C_\epsilon\xi_0$, then for any minimizer $\widehat{\b\theta_i}$ of \eqref{E:LASSO_obj_fun},  with probability at least $1-\epsilon$:
\[
\max_{i\in[n]} \left[(\widehat{\b\theta}_i - \b\theta_i)'\E\left(\b W_{i,t} \b W_{i,t}'\right) (\widehat{\b\theta}_{i} - \b\theta_i)
+ \xi\|\widehat{\b\theta}_i - \b\theta_i\|_1\right]\leq 8 \frac{\xi^2s_0}{b^2},
\]
provided that $\frac{\xi_1 s_0}{b}\leq K_\epsilon$ where $s_k:=\max_{i\in[n]}\|\b\theta_i\|_k$ for $k\in\{0,1,2\}$, $K_\epsilon$ is a positive constant only depending on $\epsilon$, and 
\begin{align*}
\xi_0 &:=\left(1+s_2\right)\hspace{-0.06cm} \mathscr{L}_\alpha\left[n(l+1)\right]^{2/p}T^{-1/2}+\left(1+s_1\right)\left[(nT)^{1/p}\varrho_U + \varrho_U^2\right], \\
\xi_1 &:= \mathscr{L}_\alpha\left[n(l+1)\right]^{4/p}T^{-1/2}+\left[(nT)^{1/p}\varrho_U + \varrho_U^2\right]  
\end{align*}
with $\mathscr{L}_\alpha:= \mathscr{L}_\alpha(T,n,l):=\left[\sum_{t=0}^{T-1}(m+1)^{(p/2)-2}\alpha_{(m-l)\lor 0}^{1-p/(p+\epsilon)}\right]^{2/p}$.  
\end{theorem}

%\begin{Rem}
%Notice that we apply the compatibility condition on the non-random covariance matrix $\E (\b U_{-i}' \b U_{-i})/T$ instead of  the estimated random covariance matrix $\widehat{\b U}_{-i}'\widehat{\b U}_{-i}/T$ or the ``unobservable'' random matrix $ \b U_{-i}'\b U_{-i}/T$. Careful review of the proofs reveals that the same is true for the gradient of the objective function that defines our parameter via a first order condition. \textcolor{red}{Shouldn't it be $\b W$ instead of $\b U$?}
%\end{Rem}

Once again, we purposely avoided replacing $\varrho_U$ in Theorem \ref{T:Oracle_bounds} with the rate of Corollary \ref{C:LS+FM}  to make it readily applicable to the case when a different type of factor model was used or, as a matter of fact, any other pre-estimation procedure. By setting $\varrho_U$ equal to $\varpi_U$, the rate of Corollary \ref{C:LS+FM}, we have the next result.

\begin{corollary}\label{C:FM+LASSO} If $\varrho_U$ defined in Theorem \ref{T:Oracle_bounds} is taken to be rate given by Corollary \ref{C:LS+FM} and $l< T$ then under the conditions of the Theorem \ref{T:Oracle_bounds}: 
\[
\max_{i\in[n]}\|\widehat{\b\theta}_i - \b\theta_i\|_1\lsim_\P  \frac{s_0\left(1+s_1\right)}{b^2}\left[\frac{\mathscr{L}_\alpha n^{6/p}}{T^{(1/2)-(3/p)}}  +\frac{T^{2/p}}{n^{(1/2)-(1/p)}}\right]=:\varpi_\theta .
\]
%for the polynomial case; and 
%\[ \max_{i\in[n]}\|\widehat{\b\theta}_i - \b\theta_i\|_1\lsim_\P \left[ \frac{s_0(1+\max_{i}\|\b\theta_i\|_1)}{b^2}\left(\frac{ [\log (nT)]^{3/\gamma_2}[\log n]^{1/2+1/\gamma_2}}{\sqrt{T}}  +\frac{[\log T]^{2/\gamma_2}}{\sqrt{n}}\right) \right]\]
%for the exponential case.
\end{corollary}

\subsection{Forecasting}\label{SS:forecast}

Recall that in the context of out-of-sample forecasting, our object of interest is $\widetilde{Y}_{i,t}:={\b\gamma_i}'\b X_{i,t}+ {\b\lambda_i}'\b P {\b G}_t +{\b\theta_i}'{\b W}_{i,t}$ for $i\in[n]$ and $t\geq T$, which we estimate using $\widehat{Y}_{i,t}:=\widehat{\b\gamma_i}'\b X_{i,t}+\widehat{\b\lambda_i}'\widehat{\b P}\widehat{\b G}_t +\widehat{\b\theta_i}'\widehat{\b W}_{i,t}$. The next result bounds (in probability) the prediction error bound in terms of all estimation errors from previous steps.

\begin{theorem}\label{T:prediction_bounds} Under Assumptions \ref{Ass:DGP} and \ref{Ass:Moments} , let $\varrho_\gamma, \varrho_U, \varrho_\theta, \varrho_P,\varrho_\lambda$ and $\varrho_F$ be non-negative sequence on $n$ and $T$ such that, uniformly in $i\in[n]$, $\|\widehat{\b\gamma}_i-\b\gamma_i\|_{1} \lsim_\P \varrho_\gamma$, $\|\widehat{\b U}-\b U\|_{\max} \lsim_\P \varrho_U$, $\|\widehat{\b \theta}_i-\b\theta_i\|_{1} \lsim_\P \varrho_\theta$, $\|\widehat{\b P}-\b P\|_{2} \lsim_\P \varrho_P$, $\|\widehat{\b \lambda}_i-\b\lambda_i\|_{2} \lsim_\P \varrho_\lambda$, and $\|\widehat{\b F}_t-\b F_t\|_{2} \lsim_\P \varrho_F$, respectively. Then, for every $t\geq 1$, 
\[
\max_{i\in[n]}\left|\widehat{Y}_{i,t}-\widetilde{Y}_{i,t}\right|\lsim_\P (\varrho_\gamma+\varrho_\theta) n^{1/p} + \varrho_U s_1+ \varrho_P + \varrho_\lambda +\varrho_F,\]
where $s_1$ is defined in Theorem \ref{T:Oracle_bounds}.
\end{theorem}

\begin{Rem} If $\varrho_\gamma, \varrho_U, \varrho_\theta,\varrho_\lambda$ and $\varrho_F$ are taken to be rates appearing in Theorems~\ref{T:LS}--\ref{T:Oracle_bounds}, we have $\varrho_\gamma\lor \varrho_\lambda\lor \varrho_F\lsim \varrho_U$ and $\varrho_U\lsim \varrho_\theta$. If further, $\b P$ is estimated via OLS (for each $j\in[r])$ using $\{\widehat{\b F}_t,t\in[T]\}$,  we have $\varrho_P \lsim T^{-1/2}\lor \varrho_F\lsim \varrho_F$. Therefore, Theorem~\ref{T:prediction_bounds} reduces to
\[
\max_{i\in[n]}\left|\widehat{Y}_{i,t}-\widetilde{Y}_{i,t}\right|\lsim_\P \varpi_\theta\left(n^{1/p} +s_1\right),
\]
where $\varpi_\theta$ is the rate in Corollary~\ref{C:FM+LASSO}.
\end{Rem}

\subsection{Inference on Covariance and Partial Covariance Matrices}\label{SS:inference}

We now obtain the null distributions of our test statistics for the structures of the covariance and the partial covariance. Recall the setup  and notation of Section \ref{S:cov_inference}. Further define $\widetilde{\b \Sigma}:=T^{-1}\sum_{t=1}^T\b U_t \b U_t'$ and $\widetilde{\b \Sigma}_\m{D}$ as the element of $\widetilde{\b \Sigma}$ indexed by $\m{D}\subseteq[n]^2$.
Also, for any pair of random vectors $\b X,\b Y$ of the same dimensional, $d$, say; define the distance in distribution
\[
\rho(\b X, \b Y):=\sup_{A\in\m{R}}|\P( \b X\in A) - \P(\b Y \in A)|,
\]
where $\m{R}$ is the class of all rectangles in the from $\bigtimes_{j=1}^d (a_j,b_j]$ for some $-\infty\leq a_j\leq b_j\leq \infty$ and $j\in[d]$.

We assume that the kernel $\m{K}(\cdot)$ appearing in the covariance estimator defined by \eqref{E:NW_cov} belongs to the class defined in \cite{andrews91} which we reproduce below for convenience
\begin{equation}\label{E:kernel_class}
\begin{split}
\K:=\Bigg\{f:\R\to[-1,1]:f(0)=1,&f(x)=f(-x),\forall x\in\R,\\ &\int f^2(x)dx<\infty, f\text { is continuous}\Bigg\}.
\end{split}
\end{equation}
This includes most of the well-known kernels used in the literature. To avoid confusion, it is worth pointing out that our tuning parameter $h$, also called bandwidth parameter by \cite{andrews91}, is supposed to diverge, as opposed to the bandwidth in the density kernel estimation setup, which is expected to shrink to zero.

The following result shows how accurately the covariance matrix elements are estimated and validate the bootstrap method. 

\begin{theorem}\label{T:inference_cov} 
For $\m{D}\subseteq [n]^2$, let $\widetilde{\b J}:=\sqrt{T}(\widetilde{\b\Sigma}_\m{D} - \b\Sigma_\m{D})$ and $\b{\m{G}}$ be a zero-mean Gaussian vector with the same covariance matrix of $\widetilde{\b J}$, i.e., $\b{\m{G}}\sim N(\b{0},\b\Upsilon_\Sigma)$.
Under Assumptions \ref{Ass:DGP}--\ref{Ass:Factor_Model}, if further
\begin{enumerate}[(a)]
\item
$\{\b U_t:t\in[T]\}$ is fourth-order stationary process for each $T$;
\item
The strong mixing coefficient of $\{\b U_t:t\in[T]\}$,  $\alpha_m$, obeys $\alpha_m\leq  K m^{-r}$ for $r>\left(\frac{p+\epsilon}{\epsilon}\right)\left(\frac{p}{4}-1\right)\lor \frac{2p}{p-4}$ where $p\geq 8$ and $\epsilon>0$ are defined in Assumption \ref{Ass:Moments}; and a constant $K$ that might depend on $d$;
\item
the minimum eigenvalue of $\b\Upsilon_\Sigma$ is greater or equal to $\underline{c}$, for some $\underline{c}>0$, then,
\end{enumerate}
 \begin{align*}
    \rho(\widetilde{\b J},\b{\m{G}})&\lsim \frac{\log T}{\underline{c}^2}\left[ \frac{\log d}{T^{(1/2)-\kappa}}  +\frac{K^{1-4/p}\log d}{T^{(r/2)(1-4/p) -1}} + \frac{(\log d)^{3/2}}{T^{1/4}} + \frac{d^{4/p}(\log d)^2}{T^{1/2-2/p}}\right] \\
    &\qquad  +\frac{\left[  d(\log d)^{(3/4)p-4}\log T \log (dT)  \right]}{T^{1/4}\underline{c}^{p/(p-4)}}^{\frac{2}{p-4}}+ \frac{d^{2/(p+2)}\sqrt{\log (dT)}+ K}{T^{r\kappa -1/2}},
\end{align*}
where $d:=|\m{D}|$, $\kappa:=\kappa(p,r):=\frac{1 + D_p/2}{2(r+D_p/2)}\land 1/2$ and $D_p := \frac{p}{p+2}$.

\noindent Let $\widehat{\b J}:=\sqrt{T}(\widehat{\b\Sigma}_\m{D} - \b\Sigma_\m{D})$, then
\[
\rho(\widehat{\b J},\b{\m{G}})\lsim\rho(\widetilde{\b J},\b{\m{G}})  + \inf_{\delta>0}\left[\delta_1\sqrt{1\lor \log (d/\delta)} + \P(\|\widehat{\b J}-\widetilde{\b J}\|_\infty>\delta)\right]
.\]
Let $\widetilde{\b \Upsilon}_\Sigma$ be any positive semi definite estimator of $\b\Upsilon_\Sigma$ and $\b{\m{G}}^*|\b X,\b Y\sim N(\b 0,\widetilde{\b \Upsilon}_\Sigma)$, then
\[
\rho(\widehat{\b J},\b{\m{G}}^*)\lsim \rho(\widehat{\b J},\b{\m{G}}) + \inf_{\delta>0}\left[\delta \log d(1\lor |\log d|) + \P(\|\widetilde{\b\Upsilon}_\Sigma - \b\Upsilon_\Sigma\|_{\max}>\delta)\right].
\]
\end{theorem}

\begin{Rem}\label{R: cov_2} The first result in Theorem \ref{T:inference_cov} bounds the Komolgorov distance between the (unobservable) process $\left\{\frac{1}{\sqrt{T}}\sum_{t=1}^T \b U_t \b U_t '-\E\left(\b U_t \b U_t'\right)\right\}_{T\geq 1}$ and a Gaussian process with the same covariance structure. It is a direct consequence of a  more general Central Limit Theorem result for high-dimensional alpha mixing sequences (see Theorem \ref{T:CLT_HD} in the  Supplementary Material). The second one is similar but controls for the difference between $\b U_t-\widehat{\b U}_t$ and, therefore,  takes into account the estimation error.  Finally, the last result ensures  a bootstrap validity provided we can estimate the covariance matrix in an appropriate sense. 
\end{Rem}

\begin{Rem}\label{R:inference_cov}
Theorem \ref{T:inference_cov} seem complicated. However, they only depend on $d$, $T$, $p$, $r$, and the ``quality'' of the estimators  $\widehat{\b U}$ and $\widetilde{\b \Upsilon}$. The latter allows a different selection of estimators for any of the stages and the bootstrap procedure. If we were to specialized Theorem \ref{T:inference_cov}  to incorporate the rates obtained in Theorem \ref{T:LS} and \ref{T:FM}, and set $\widetilde{\b\Upsilon}=\widehat{\b \Upsilon}$ defined by \eqref{E:NW_cov}, we obtain a sufficient condition to ensure the bootstrap validity depending only on $n,T,r$ and $p$.
\end{Rem}

\begin{corollary}\label{C:inference_cov} Under the same conditions of Theorem \ref{T:inference_cov}, if $\widetilde{\b\Upsilon}_\Sigma=\widehat{\b \Upsilon}_\Sigma$ defined by \eqref{E:NW_cov} with $\m{K}\in\K$, where $\m{K}$ is defined by \eqref{E:kernel_class}, then, uniformly in $\m{D}\subseteq[n]^2$,
\[
\|\widehat{\b\Upsilon}_\Sigma - \b\Upsilon_\Sigma\|_{\max}\lsim_\P h\left[\varpi_U(nT)^{3/p} +n^{8/p}/\sqrt{T}\right],
\]
where $h>0$ is the bandwidth parameter of the covariance estimator and $\varpi_U$ is the rate appearing in Corollary \ref{C:LS+FM}. If further, as $h, n,T\to\infty$:
\begin{enumerate}[(a)]
\item
$\varrho_\Sigma=o(1)$, where $\varrho_\Sigma$ is the rate appearing in the first result of Theorem \ref{T:inference_cov} with $d$ replaced by $n^2$;
\item
$(\log n)^{3/2}\left(\sqrt{T}\varpi_U^2+ \frac{ n^{9/p}}{\sqrt{T}}+\frac{n^{6/p}}{T^{1/2-1/p}} +\frac{1}{n^{1/2-9/p}}\right) = o(1);$
\item
$(\log n)^3h\left[\varpi_U(nT)^{3/p} +d^{8/p}/\sqrt{T}\right] = o(1)$, then,
\end{enumerate}
\[
\sup_{\m{D}}\sup_{\tau\in(0,1)}\left|\P\left[S_\m{D}^\Sigma\leq c^*_\Sigma(\tau)\right] -\tau\right|=o(1),\]
where the first supremum is over all null hypotheses of the form \eqref{E:null} indexed by $\m{D}\subseteq[n]^2$.
\end{corollary}

The next theorem shows how well the elements of the partial autocovariance matrix are estimated and gives the conditions under which the bootstrap test is valid.  In the  calculation the partial covariance \eqref{eq2.6}, we use the residual $\widehat{V}_{i,j,t}$  of the LASSO regression of $\widehat{U}_{i,t}$ onto $\widehat{\b U}_{-ij,t}$. Define $\widetilde{\b\Pi}_\m{D}:=(\widetilde{\pi}_{i,j})_{(i,j)\in\m{D}}$ for $\m{D}\subseteq[n]^2$ and recall from Section \ref{S:cov_inference} that $\widetilde{\pi}_{i,j}:=\frac{1}{T}\sum_{t=1}^TV_{i,j,t}V_{j,i,t}$ for $i,j\in[n]$, $\b\Pi_\m{D}:=\E\left(\widetilde{\b \Pi}_\m{D}\right)$, and $\b\Upsilon_\Pi$ denotes the covariance of $\b\Pi_\m{D}$.

\begin{theorem}\label{T:inference_partial_cov} 
For $\m{D}\subseteq [n]^2$, let $\widetilde{\b Q}:=\sqrt{T}(\widetilde{\b\Pi}_\m{D} - \b\Pi_\m{D})$ and $\b{\m{H}}$ be a zero-mean Gaussian vector with the same covariance matrix of $\widetilde{\b Q}$, i.e., $\b{\m{H}}\sim N(\b 0,\b\Upsilon_\Pi)$.
Under the same assumptions and notation of Theorem \ref{T:inference_cov}, with $\b\Upsilon_\Sigma$ and $\underline{c}$ replaced by $\b\Upsilon_\Pi$ and $\underline{b}$, respectively in condition (c), we have
\begin{align*}
    \rho(\widetilde{\b Q},\b{\m{H}})&\lsim \frac{\log T
    }{\underline{b}^2}\left[ \frac{\log d}{T^{(1/2)-\kappa}}  +\frac{K^{1-4/p}\log d}{T^{(r/2)(1-4/p) -1}} +\frac{(\log d)^{3/2}}{T^{1/4}} + \frac{ d^{4/p}(\log d)^2}{T^{1/2-2/p}}\right]\\
    &\qquad +\frac{\left[  d(\log d)^{(3/4)p-4}\log T \log (dT)  \right]}{T^{1/4}\underline{b}^{p/(p-4)}}^{\frac{2}{p-4}}+ \frac{d^{1/(p/2+1)}\sqrt{\log d}+ K}{T^{r\kappa -1/2}},
\end{align*}
where $d,\kappa:=\kappa(p,r)$ are defined in Theorem \ref{T:inference_cov}.

\noindent Let $\widehat{\b Q}:=\sqrt{T}(\widehat{\b\Pi}_\m{D} - \b\Pi_\m{D})$, then
\[
\rho(\widehat{\b Q},\b{\m{H}})\lsim\rho(\widetilde{\b Q},\b{\m{H}})  + \inf_{\delta>0}\left[\delta_1\sqrt{1\lor \log (d/\delta)} + \P(\|\widehat{\b Q}-\widetilde{\b Q}\|_\infty>\delta)\right]
.\]
Let $\widetilde{\b \Upsilon}_\Pi$ be any positive semi definite estimator of $\b\Upsilon_\Pi$ and $\b{\m{H}}^*|\b X,\b Y\sim N(\b 0,\widetilde{\b \Upsilon}_\Pi)$, then
\[
\rho(\widehat{\b Q},\b{\m{H}}^*)\lsim \rho(\widehat{\b Q},\b{\m{H}}) + \inf_{\delta>0}\left[\delta \log d(1\lor |\log d|) + \P(\|\widetilde{\b\Upsilon}_\Pi - \b\Upsilon_\Pi\|_{\max}>\delta)\right].
\]
\end{theorem}

Similar comments as those appearing in Remarks \ref{R:inference_cov} apply to Theorem \ref{T:inference_partial_cov} as well, which results in the following corollary.

\begin{corollary}\label{C:inference_partial_cov} Under the same conditions of Theorem \ref{T:inference_partial_cov}, if $\widetilde{\b\Upsilon}_\Pi=\widehat{\b \Upsilon}_\Pi$ defined by \eqref{E:NW_partial_cov} with $\m{K}\in\K$, where $\m{K}$ is defined by \eqref{E:kernel_class}, then, uniformly in $\m{D}\subset[n]^2$,
\[
\|\widehat{\b\Upsilon}_\Pi - \b\Upsilon_\Pi\|_{\max}\lsim_\P h\left[((1 + \widetilde{s}_1) \varpi_U + \varrho_\chi n^{1/p})(1+\widetilde{s}_1)^3(nT)^{3/p} + \frac{(1+\widetilde{s}_2)^4n^{8/p}}{\sqrt{T}}\right] ,
\]
where $h>0$ is the bandwidth of the covariance estimator, $\widetilde{s}_k:=\max_{(i,j)\in\m{D}}\|\b\chi_{i,j}\|_k$ for $k\in\{0,1,2\}$, $\varpi_U$ is the rate appearing in Corollary \ref{C:LS+FM}, and $\varrho_\chi$ is the rate appearing in Corollary \ref{C:FM+LASSO} with $s_0$ and $s_1$ replaced by $\widetilde{s}_0$ and $\widetilde{s}_1$, respectively, and $l=0$. If further, as $h, n,T\to\infty$:
\begin{enumerate}[(a)]
\item
$\varrho_\Pi=o(1)$, where $\varrho_\Pi$ is the rate in the first result of Theorem \ref{T:inference_partial_cov} with $d$ replaced by $n^2$;
\item
$(\log n)^{3/2}\left[(1+\widetilde{s}_1+\varrho_\chi)^2 \left(\sqrt{T}\varpi_U^2+ \frac{ n^{9/p}}{\sqrt{T}}+\frac{n^{6/p}}{T^{1/2-1/p}} +\frac{1}{n^{1/2-9/p}}\right) + \varrho^2_\chi n^{4/p}\sqrt{T}\right] = o(1);$
\item
$(\log n)^3h\left\{[(1 + \widetilde{s}_1) \varpi_U + \varrho_\chi n^{1/p}](1+\widetilde{s}_1)^3(nT)^{3/p} + \frac{(1+M_2)^4n^{8/p}}{\sqrt{T}}\right\} = o(1)$, then,
\end{enumerate}
\[
\sup_{\m{D}}\sup_{\tau\in(0,1)}|\P\left[S_\m{D}^\Pi\leq c^*_\Pi(\tau)\right] -\tau|=o(1),\]
where the first supremum is over all null hypotheses of the form \eqref{E:null} indexed by $\m{D}\subseteq[n]^2$.
\end{corollary}

\begin{Rem} As opposed to the case of testing covariance, when testing partial covariance in high dimensions, the sparse structure plays a role in terms of $\widetilde{s}_0$ appearing in the conditions (b) and (c). Therefore,  these assumptions restrict the cases where the proposed partial covariance test has the correct asymptotic size.  For instance, in the case of a complete dense partial covariance structure, we are likely to have $\widetilde{s}_0$ of the order of $n$ and, therefore, conditions $(b)$ and $(c)$ are not expected to hold.
\end{Rem}

\section{Applications}\label{S:Applications}

\subsection{Factor Models and Network Structure in Asset Returns}

We illustrate our methodology by studying the factor structure of asset returns. We consider monthly close-to-close excess returns from a cross-section of 9,456 firms traded on the New York Stock Exchange. The data starts on November 1991 and runs until December 2018. There are 326 monthly observations in total. In addition to the returns, we also consider 16 monthly factors: Market ($\mathsf{MKT}$), Small-minus-Big ($\mathsf{SMB}$), High-minus-Low ($\mathsf{HML}$), Conservative-minus-Aggressive ($\mathsf{CMA}$), Robust-minus-Weak ($\mathsf{RMW}$), earning/price ratio, cash-flow/price ratio, dividend/price ratio, accruals, market beta, net share issues, daily variance, daily idiosyncratic variance, 1-month momentum, and 36-month momentum. The firms are grouped according to 20 industry sectors as in \cite{tMmG1999}. The following sectors are considered:\footnote{The number between parenthesis indicates the number of firms in our sample that belong to each sector.} Mining (602), Food (208), Apparel (161), Paper (81), Chemical (513), Petroleum (48), Construction (68), Primary Metals (133), Fabricated Metals (186), Machinery (710), Electrical Equipment (782), Transportation Equipment (166), Manufacturing (690), Railroads (25), Other transportation (157), Utilities (411), Department Stores (67), Retail (1018), Financial (3419), and Other (11).

We start the analysis by looking at the correlation matrix for  monthly returns of a sample of nine sectors: Mining, Food, Petroleum, Construction, Manufacturing, Utilities, Department Stores, Retail, and Financial. Figure \ref{F:corrRet} plots the correlations that are larger than 0.15 in absolute value. We test for the null of a diagonal covariance matrix. The null hypothesis is strongly rejected with $p$-value much lower than 1\% for all sectors. To conduct the test of the covariance matrix we use the simple sample estimator as described in the paper. However, the correlations plotted in Figure \ref{F:corrRet} and in the subsequent ones are based on the nonlinear shrinkage estimator proposed by \cite{oLmW2020}.

\begin{figure}[ht]
\centering
\includegraphics[width=\linewidth]{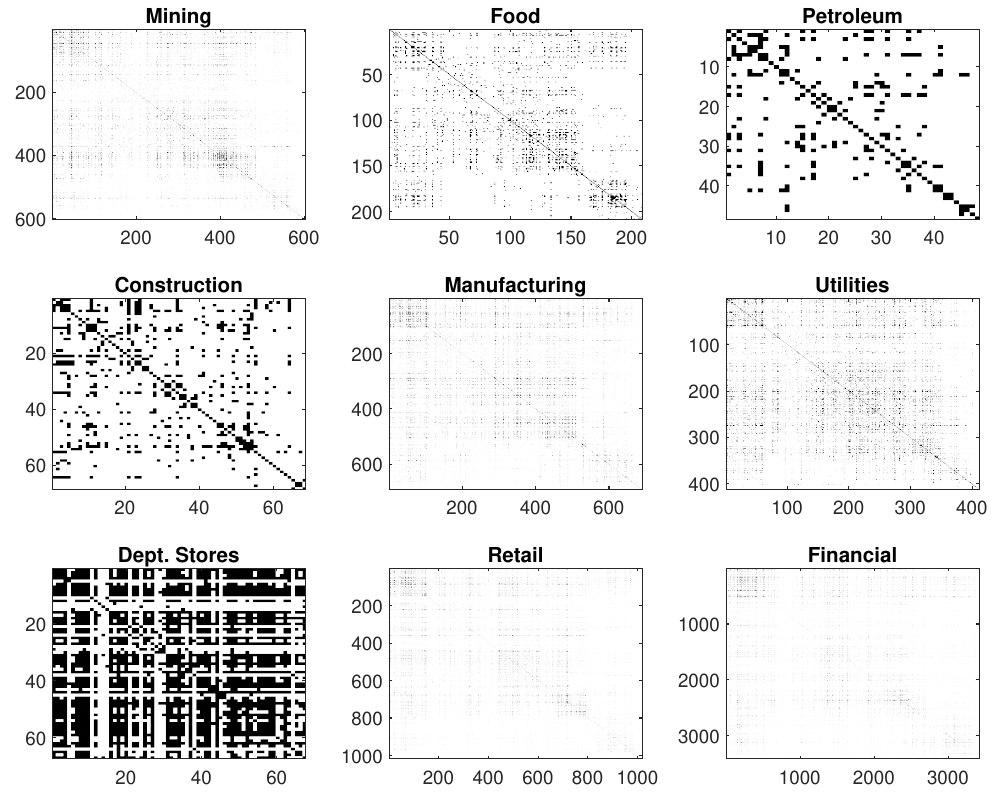}
\caption{Correlations of monthly returns.}
\begin{minipage}{1\linewidth}
\begin{footnotesize}
We estimate the correlations between all pairs of returns from specific sectors. The correlations that are higher than 0.15 in absolute value are shown as black dots.
\end{footnotesize}
\end{minipage}
\label{F:corrRet}
\end{figure}

We proceed by regressing the daily returns on the observed 16 factors. Figure \ref{F:corr1st} presents the estimated correlations for the first-stage residuals, namely factor-adjusted returns. We focus on the nine sectors as before. The first-stage regression is efficient in removing the correlation within specific sectors in some cases. The most notable ones are Financial and Retail, followed by Construction, Petroleum, and Manufacturing. On the other hand, Utilities, Department Stores, Mining, and Food still display a dense covariance matrix.

\begin{figure}[ht]
\centering
\includegraphics[width=\linewidth]{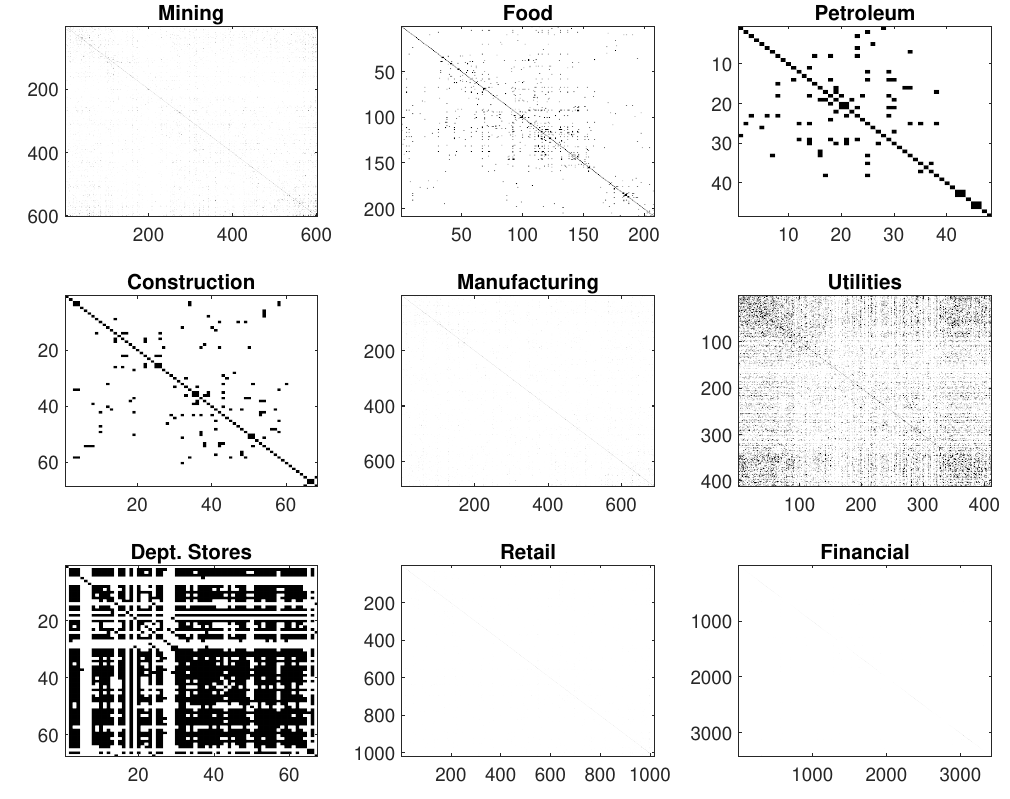}
\caption{Correlations of first-stage residuals.}
\begin{minipage}{1\linewidth}
\begin{footnotesize}
We estimate the correlations between all pairs of residuals from the first-stage OLS regression on 16 observed factors from specific sectors. The correlations that are higher than 0.15 in absolute value are shown as black dots.
\end{footnotesize}
\end{minipage}
\label{F:corr1st}
\end{figure}

The second step is to conduct a principal component analysis on the residuals  of the first stage. The eigenvalue ratio procedure selects two factors, while all four information criteria points to a single factor. We proceed with two factors. Note that, by construction, the principal component factors are orthogonal to all the 16 risk factors considered in the first stage. Figure \ref{F:corr2nd} shows the estimated correlations for the residuals (idiosyncratic component) of the second-stage. The latent factor is not able to reduce the correlations within each sector. However, when we consider the partial correlations the conclusions are much different. As can be seen from Figure \ref{F:parcorr} that the partial correlation matrices are (almost) diagonal. In addition, we are not able to reject the null of a diagonal covariance matrix at a 5\% significance level.

\begin{figure}[ht]
\centering
\includegraphics[width=\linewidth]{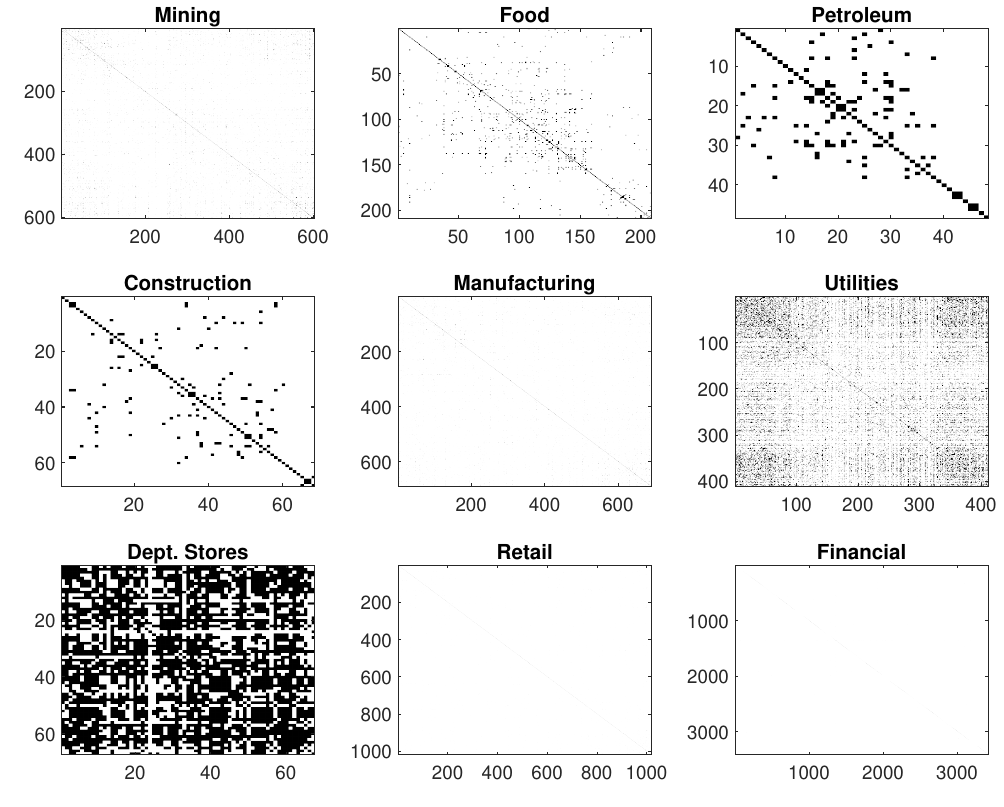}
\caption{Correlations of second-stage residuals.}
\begin{minipage}{1\linewidth}
\begin{footnotesize}
We estimate the correlations between all pairs of residuals from the second-stage principal component analysis from specific sectors. The correlations that are higher than 0.15 in absolute value are shown as black dots.
\end{footnotesize}
\end{minipage}
\label{F:corr2nd}
\end{figure}

\begin{figure}[ht]
\centering
\includegraphics[width=\linewidth]{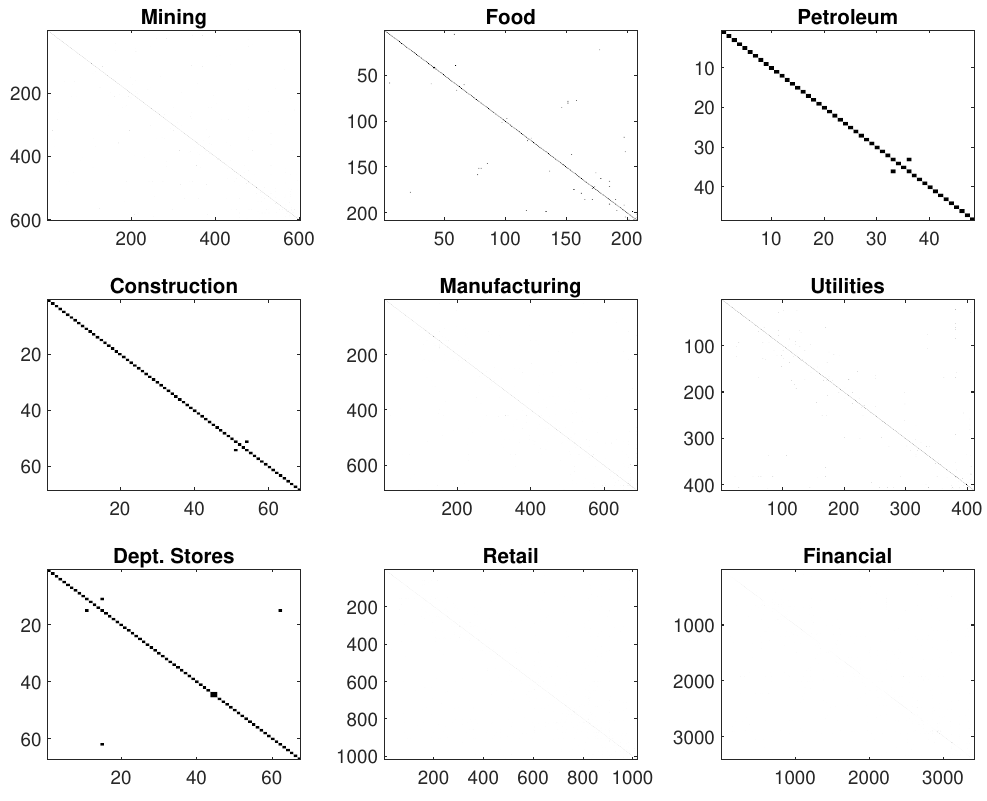}
\caption{Partial correlations of second-stage residuals.}
\begin{minipage}{1\linewidth}
\begin{footnotesize}
We estimate the partial correlations between all pairs of residuals from the second-stage LASSO regression from specific sectors. The correlations that are higher than 0.15 in absolute value are shown as black dots.
\end{footnotesize}
\end{minipage}
\label{F:parcorr}
\end{figure}

To shed some light on the links among different sectors, we report how often variables from sector $i$ are selected in the third-stage LASSO regression for firms in sector $j$. The numbers are normalized by the total number of firms in each sector and are presented in Figure \ref{F:crossparcorr}. The most interesting fact is that covariates from the financial sector are the ones most frequently selected for all the other sectors. Other sectors, such as Mining, Chemical, Machinery, Electrical Equipment, Manufacturing, and Retail are also frequently selected.  This may indicate that there are industry factors, specifically a ``financial factor'', that is unmodeled in the first two stages. However, if we augment the set of regressors in the first stage by the value-weighted portfolio from the financial sector, although the remaining dependence among firms is attenuated, particularly for Department Stores, we do not get close to an exact factor model.  This finding suggests that there are hidden links among firms. 

\begin{figure}[ht]
\centering
\includegraphics[width=0.9\linewidth,height=0.4\linewidth]{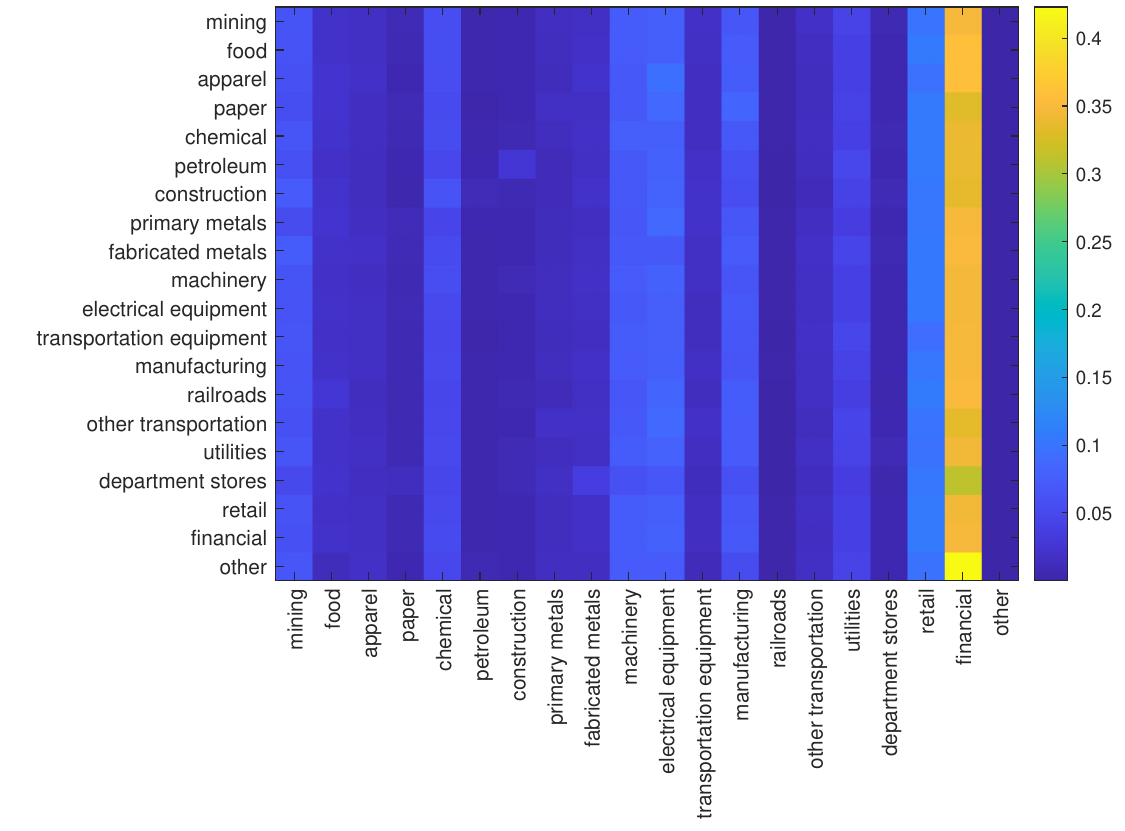}
\caption{Variable Selection Frequency.}
\begin{minipage}{1\linewidth}
\begin{footnotesize}
We report how often the variables from column sectors are selected in the third-stage LASSO regression for firms on row sectors. The numbers are normalized by the total number of firms in each sector.
\end{footnotesize}
\end{minipage}
\label{F:crossparcorr}
\end{figure}

\subsection{Forecasting US Industrial Production}

The second application consists of forecasting monthly US industrial production using a large set of monthly macroeconomic variables. We compare four different models: (1) Autoregressive model; (2) Sparse LASSO Regression (SR); (3) Principal Component Regression (PCR); and (4) \texttt{FarmPredict}.

We use variables from the August 2022 vintage of the FRED-MD database, which is a large monthly macroeconomic dataset designed for empirical analysis in data-rich macroeconomic environments. The dataset is updated in real-time through the FRED database and is available from Michael McCraken's webpage.\footnote{https://research.stlouisfed.org/econ/mccracken/fred-databases/. For further details, we refer to \cite{mMsN2016}.} .

Our sample extends from January 1960 to December 2019 (719 observations), and only variables with all observations in the period are used (122 variables). The dataset is divided into eight groups: (i) output and income; (ii) labor market; (iii) housing; (iv) consumption, orders, and inventories; (v) money and credit; (vi) interest and exchange rates; (vii) prices; and (viii) stock market. Finally, all series are transformed in order to become stationary.

In order to highlight the gains of exploring all relevant information in the dataset, we construct one-step ahead forecasts for the first-order difference of the logarithm of the monthly industrial production index (IP, growth rate): $Y_{IP,t}$.

We compare the following models:
\begin{enumerate}
\item
\textbf{Autoregressive model (\texttt{AR}):}
\[
\widehat{Y}_{IP, t+1|t}^{(\texttt{AR})} = \widehat{\phi}_{0} + \widehat{\phi}_{1}\widehat{Y}_{IP,t} + \ldots + \widehat{\phi}_{p}\widehat{Y}_{IP,t-p+1},
\]
where $\widehat{\phi}_{0},\widehat{\phi}_{1},\ldots,\widehat{\phi}_{p}$ are OLS estimates. The value of $p$ is selected by BIC.

\item
\textbf{Sparse regression (\texttt{SR}):}
\[
\widehat{Y}_{IP, t+1|t}^{(\texttt{SR})} = \widehat{\beta}_{0} + \widehat{\b\beta}_{1}'\b Y_{t} + \ldots + \widehat{\b\beta}_{p}'\b Y_{t-p+1},
\]
$\widehat{\beta}_{0},
\widehat{\b\beta}_{1}\ldots,
\widehat{\b\beta}_{p}$ are LASSO estimates and $\b Y_t=(Y_{1t},\ldots,Y_{nt})$ with $n=122$. The penalty parameter is selected by modified BIC as in \cite{hWbLcL2009}.
\item
\textbf{Principal Component Regression (\texttt{PCR}):}
\[
\widehat{Y}_{IP,t+1|t}^{(\texttt{PCR})} = \widehat{\pi}_{0} +  \widehat{\b\pi}_{1}'\widehat{\b{F}}_t +\cdots+\widehat{\b\pi}_{q}'\widehat{\b{F}}_{t-q+1},
\]
where $\widehat{\b F}_{t}$ is the estimate of the $(k \times 1)$ vector of factors $\b F_t$ given by the first $k$ principal components of $\b Y_t-\widehat{\b\mu}$ with $\widehat{\b\mu}$ being the sample average of $\b Y_t$. The parameters of the model are computed by OLS regression of $Y_{j,t}$ on a constant and lags of $\widehat{\b F}_t$. The lag $q$ is selected by BIC.

\item
\textbf{AR - Principal Component Regression (\texttt{AR-PCR}):}
\[
\widehat{Y}_{IP,t+1|t}^{(\texttt{AR-PCR})} = \widehat{\alpha}_{0} + \widehat{\b\alpha}_{1}'Y_{IP,t} + \ldots + \widehat{\alpha}_{p}Y_{IP,t-p+1} + \widehat{\b\varrho}_{1}'\widehat{\b{F}}_t +\cdots+\widehat{\b\varrho}_{p}'\widehat{\b{F}}_{t-p+1},
\]
where $\widehat{\b F}_{t}$ is the estimate of the $(k \times 1)$ vector of factors $\b F_t$ given by the first $k$ principal components of $\b Y_t-\widehat{\b\mu}$ with $\widehat{\b\mu}$ being the sample average of $\b Y_t$. The parameters of the model are computed by OLS regression of $Y_{j,t}$ on a constant, its own lags and lags of $\widehat{\b F}_t$. The lag orders $p$ and $q$ are selected by BIC.
\item
\textbf{\texttt{FarmPredict}:}
\[
\widehat{Y}_{i,t+1|t}^{(\texttt{FarmPredict})} =\widehat{\b\mu}_{j}+
\widehat{\b\lambda}_i'\widehat{\b P}_{j1}'\widehat{\b F}_t+\cdots+\widehat{\b\lambda}_i'\widehat{\b P}_{jp}'\widehat{\b F}_{t-p+1}+\widehat{\b\theta}_{1i}'\widehat{\b U}_{t} + \ldots + \widehat{\b\theta}_{pi}'\widehat{\b U}_{t-p+1},
\]
where
$\widehat{\b U}_t=\left(\widehat{U}_{1,t},\ldots,\widehat{U}_{n,t}\right)'$ and $\widehat{U}_{i,t}=Y_{i,t}-\widehat{\b\lambda}_i'\widehat{\b F}_t$, $i\in[n]$. The estimates $\widehat{\theta}_{0i},
\widehat{\b\theta}_{1i}\ldots,
\widehat{\b\theta}_{pi}$, $i\in[n]$, are given by LASSO. The penalty parameter is selected by the modified BIC and the value of $p$ is set to 24.
\end{enumerate}

The forecasts are based on a rolling-window framework of a fixed length of 480 observations, starting in January 1960. Therefore, the forecasts start on January 1990. The last forecasts are for December 2019. Note that the \texttt{AR} model only considers information concerning the own past of the variable of interest. \texttt{SR} and \texttt{PCR}/\texttt{AR-PCR}/ expand the information by two opposing routes. While $\texttt{SR}$ uses a sparse combination of the set of variables, $\texttt{PCR}$ and $\texttt{AR-PCR}$ consider a factor structure (dense model). In the case of $\texttt{AR-PCR}$, lags of the dependent variable are also included. \texttt{FarmPredict} combines these two approaches and uses the full information available. The number of factors is set to 1.

Figure \ref{F:rmse} reports the ratios of  cumulative MSE of the \texttt{FarmPredict} model against the cumulative MSE of the other benchmarks over the forecasting period. Several conclusions emerge from the plot. First, \texttt{FarmPredict} outperforms the PCR  model over the entire out-of-sample period. It is also, in general, superior to the AR, SR, and AR-PCR models, apart from 2004 and 2008. During this period, the economy experienced housing and financial crises. Furthermore, the number of out-of-sample forecasting periods was also quite small.  It is clear that the performance of \texttt{FarmPredict} improves drastically after 2008, and over the entire sample, the MSE ratio of the \texttt{FarmPredict} model over the AR benchmark is 0.9080, while the SR, PCR, and AR-PCR models have the following ratios, respectively: 0.9217, 1.0249, and 0.9215. 

\begin{figure}[ht]
\centering
\includegraphics[width=0.9\linewidth,height=0.4\linewidth]{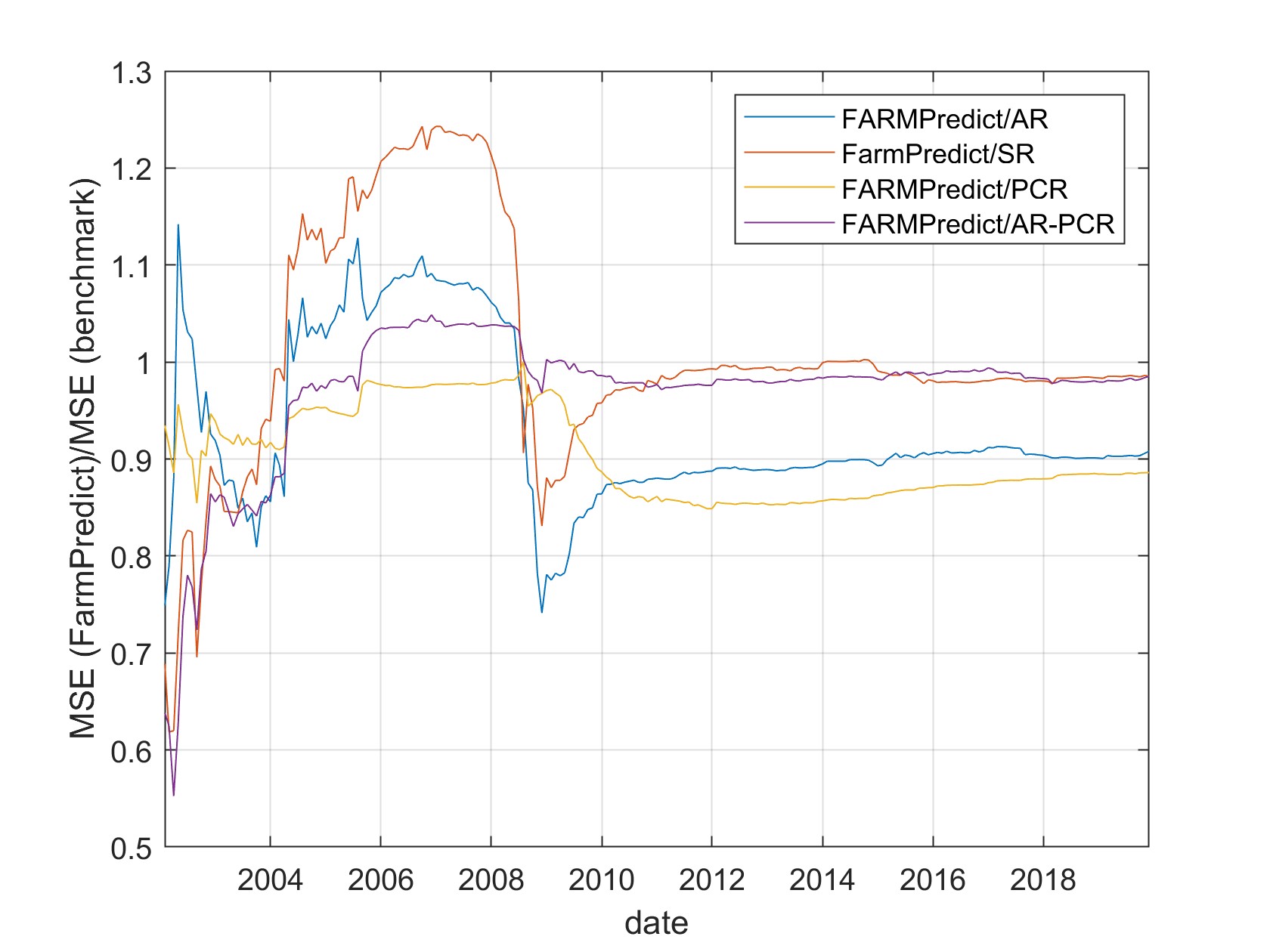}
\caption{Ratios of cumulative MSEs. }
\begin{minipage}{1\linewidth}
\begin{footnotesize}
In the figure we report the cumulative ratios of the mean squared errors (MSE) of the \texttt{FarmPredict} model and the other benchmarks over the rolling windows.
\end{footnotesize}
\end{minipage}
\label{F:rmse}
\end{figure}

\section{Conclusions}\label{S:Conclusions}

We propose a new methodology that bridges the gap between sparse regressions and factor models and evaluates the gains of increasing the information set via factor augmentation. Our proposal consists of several steps. In the first one, we filter the data for known factors (trends, seasonal adjustments, covariates). In the second step, we estimate a latent factor structure. Finally, in the last part of the procedure, we estimate a sparse regression for the idiosyncratic components. We also propose a new test for remaining structures in both high-dimensional covariance and partial covariance matrices. Our test can be used to evaluate the benefits of adding more structure to the model. Our paper has also a number of important side results. First, we proved the consistency of kernel estimation of long-run covariance matrices in high dimensions where both the number of observations and variables grows. Second, we derive the theoretical properties of factor estimation on the residuals of a first-step process. Third, the proposed test can be used as a diagnostic tool for factor models.

We evaluate our methodology with simulations and real data. The simulations show the test has good size and power properties even when the true number of factors is unknown and must be determined from the data. If the number of factors is underestimated, we observe size distortions. In particular, this is the case when the eigenvalue ratio test is used to determine the number of latent factors. The simulations also show that there are major informational gains when combining factor models and sparse regressions in a forecasting exercise. Two applications are considered.

\begin{acks}[Acknowledgments]
Medeiros gratefully acknowledges the partial financial support from CNPq and CAPES. Fan's research is partially supported by ONR grant N00014-22-1-2340 and NSF grants DMS-2210833, DMS-2053832, and DMS-2052926. We are grateful to Caio Almeida, Matteo Barigozzi, Gilberto Boareto, Gustavo Bulhões, Giuseppe Cavaliere, Frank Diebold, Bruno Ferman, Marcelo Fernandes, Claudio Flores, Conrado Garcia, Eric Ghysels, Alexander Giessing, Nathalie Gimenes, Marcelo J. Moreira, Henrique Pires, Yuri Saporito, and Rodrigo Targino for helpful comments. We also thank seminar participants at the SofiE online seminar series, Princeton University, the University of Amsterdam, the University of Pennsylvania, the University of Illinois at Urbana-Champaign, the Federal University of São Carlos, Rutgers University, the University of North Carolina at Chapel Hill, the University of Chicago, the University of California at Riverside, and Columbia University for a number of valuable comments. Finally, we are deeply grateful to Michele Lenza, Eduardo F. Mendes, and Michael Wolf for the careful reading of the paper and the many insightful discussions which led to a much-improved version of this manuscript. This manuscript has been also presented at a number of conferences, and we thank all the participants for their very useful comments.
\end{acks}

\bibliography{ref}

\pagebreak
\begin{supplement}

\setcounter{section}{0}
\setcounter{table}{0}
\setcounter{figure}{0}
\setcounter{page}{1}
\renewcommand{\theequation}{S.\arabic{equation}}
\renewcommand{\thefigure}{S.\arabic{figure}}
\renewcommand{\thetable}{S.\arabic{table}}
\renewcommand{\thesection}{S.\arabic{section}}
\renewcommand{\thetheorem}{S.\arabic{theorem}}
\renewcommand{\thelemma}{S.\arabic{lemma}}
\renewcommand{\thecorollary}{S.\arabic{corollary}}

\stitle{}

\startcontents[sections]

\begin{center}
\vspace{0.5cm}
CONTENTS
\vspace{0.5cm}
\end{center}

\printcontents[sections]{l}{1}{\setcounter{tocdepth}{3}}

\newpage

\section{Introduction}

The goal of this supplement is to provide additional results as well as the proofs of all theoretical results in the main body of the paper. 

This Supplementary Material is organized as follows. We start by discussing guidelines for practical implementation of the method in Section \ref{S:Guide}.  Section \ref{S:Simulation} provides some additional simulation results. Section \ref{App:Exp} presents results for the case with geometric mixing and Exponential Tails. Section \ref{App: Main Results} contains all the proofs of the results on paper. Finally, Section \ref{S:lemmas} collects auxiliary lemmas.

\section{Guide to Practice}\label{S:Guide}

The methodology in this paper involves several steps. The first step consists of identifying known covariates that we may want to control for. It may involve the removal of deterministic trends and seasonal effects, for instance. This can be done either by parametric or nonparametric regressions. It is important to notice, however, that the convergence rates of the estimations in the subsequent steps will be influenced by the convergence rate of the estimation in the first part of the procedure.

After the data are filtered in the first step, one can test for the remaining covariance structure. If the covariance matrix of the filtered data is (almost) diagonal, there is no need to estimate a latent factor structure, and the practitioner may jump directly to the third step.
On the other hand, if the covariance of the filtered data is dense, a latent factor model should be considered, and the number of factors must be determined. To determine the number of factors, we consider either the eigenvalue ratio test of \cite{sAaH2013} or the information criteria put forward in \cite{jBsN2002}. The factors can be estimated by the usual methods.

The next step involves a sparse regression in order to estimate any remaining links between idiosyncratic components. Before running the last step, we may test for a diagonal covariance matrix of the idiosyncratic terms. If the null is not rejected, there is no need for additional estimation. In case of rejection, we can proceed with a LASSO regression. We recommend that the penalty term is selected by some Information Criterion (IC) as advocated by \cite{hWbLcL2009} and \cite{mMeM2016}. If out-of-sample forecasting is the goal, an additional final step is necessary. In this case, lags of factors and idiosyncratic terms must be determined. The practitioner may rely on the usual information criteria available. 

Finally, concerning the estimation of the long-run matrices, the usual methods discussed in the literature can be used here to select the kernel and the bandwidth. We use the simple Bartlett kernel with bandwidth given as $\lfloor T/3 \rfloor$.

\section{Simulation}\label{S:Simulation}

In this section, we report simulation results divided into two parts. In the first one, we evaluate the finite-sample properties of the test for the remaining covariance structure. In the second part, we highlight the informational gains when considering both the common factors and the idiosyncratic component. We simulate 1,000 replications of the following model for various combinations of sample size ($T$) and number of variables ($n$):
\begin{align}
Y_{i,t}&=\b\lambda_i'\b{F}_t+U_{i,t},\,i=[n],\,t=[T],\label{E:simul1}\\
\b{F}_t&=0.8  \b{F}_{t-1} + \b E_{t},\label{E:simu2}\\
U_{i,t}&=   I(i=1)(\theta_{12} U_{2t} + \theta_{13} U_{3t} + \theta_{14} U_{4t} + \theta_{15} U_{5t})+ V_{i,t}
\label{E:simul3}\\
V_{i,t} &=\phi V_{it-1} + \epsilon_{i,t}
\end{align}
where $I(\cdot)$ is the indicator function, $\{\epsilon_{i,t}\}$ is a sequence of independent t-distributed random variables with 10 degrees of freedom, and $\{\b E_{t}\}$ is a sequence of $r$-dimensional mutually independent random vectors t-distributed with 10 degrees of freedom. Furthermore, $\{\epsilon_{i,t}\}$ and $\{\b E_{t}\}$ are mutually independent for all time periods, factors and variables. For each Monte Carlo replication, the vector of loadings is sampled from a Gaussian distribution with mean -6 and standard deviation 0.2 for $i=1$ and mean 2 and unit variance for $i=2,\ldots,n$. The coefficients $\theta_{12}$, $\theta_{13}$, $\theta_{14}$, and $\theta_{15}$ are equal to zero or 0.8, 0.9, -0.7, and 0.5, respectively. We set the number of factors, $r$, equal to 3. $\phi$ can be either 0 or 0.5. Note that equation \eqref{E:simul3} is a special case of equation \eqref{E:U_reg} with $\b{W}_{i,t}=\b{U}_{-i,t}$.

\subsection{Test for Remaining Covariance Structure}\label{S:SimulCov}

We start by reporting results for the test of no remaining structure on the covariance matrix of $\b{U}_t=(U_{1t},\ldots,U_{nt})'$. The null hypothesis considered is that all the covariances between the first variable ($i=1$) and the remaining ones are all zero. For size simulations we set $\theta_{12}=\theta_{13}=\theta_{14}=\theta_{15}=0$ in the DGP. To evaluate the effects of factor estimation as well as the methods in selecting the number of factors, we consider the following scenarios: (1) factors are known, and there is no estimation involved; (2) factors are estimated by principal components, but the number of factors is known; (3) the number of factors is determined by the eigenvalue ratio procedure of \cite{sAaH2013}; (4)-(7) the number of factors is determined by one of the four information criteria proposed by \cite{jBsN2002} as defined by
\[
\begin{matrix*}[l]
\textnormal{IC}_1=\log[S(r)]+r\frac{n+T}{nT}\log\left(\frac{nT}{n+T}\right) &
\textnormal{IC}_2=\log[S(r)]+r\frac{n+T}{nT}\log C_{nT}^2\\
\textnormal{IC}_3=\log[S(r)]+r\frac{\log C_{nT}^2}{C_{nT}^2} &
\textnormal{IC}_4=\log[S(r)]+r\frac{(n+T-k)\log(nT)}{nT}.
\end{matrix*}
\]
where $S(r)=\frac{1}{nT}\|\b R  - \widehat{\b\Lambda}_r\widehat{\b F}_r\|_2^2$ and $C_{nT}:=\sqrt{\min(n,T)}$.

Table \ref{T:sizeSimul1} and \ref{T:sizeSimul2} reports the results of the empirical size of test for different significance levels.  We consider the case of $\phi=0$ in Table \ref{T:sizeSimul1} and $\phi=0.5$ in \ref{T:sizeSimul2}.  The tables present the results when the factors are known in panel (a), the factors are unknown but the number of factors is known in panel (b), or the number of factors is estimated either by the information criterion $\textnormal{IC}_1$ in panel (c) or the eigenvalue ratio procedure in panel (d).

A number of facts emerge from the inspection of the results in Table \ref{T:sizeSimul1}. First, size distortions are small when the factors are known. In this case, the test is undersized when the pair $(n,T)$ is small. When the factor is not known but the true number of factors is available, the size distortions are high only when $T=100$ and $n=50$ due to inaccurate estimation of factors. However, the distortions disappear when the pair $(T,n)$ grows. In this case, the empirical size is similar to the situation reported in Panel (a). The finite performance of the test in the case where the number of factors is selected by information criterion $\textnormal{IC}_1$ is almost indistinguishable from the case reported in Panel (b). However, the results with the eigenvalue ratio procedure are much worse when $T=100$ and $n=50$. In this case, the procedure selects fewer factors than the true number $r=3$. For instance, the procedure selects 2 or fewer factors in 36\%  of the replications. Just as comparison, for $T=100$ and $n=50$, $\textnormal{IC}_1$ underdetermines the number of factors only in 3.10\% of the cases.  The latter also confirms that overestimation of the number of factors will not have a big adversarial effect. For all the other combinations of $T$ and $n$ all the data-driven methods select the correct number of factors in almost all replications.

When the idiosyncratic components are autocorrelated the size distortions are higher, as reported in Table \ref{T:sizeSimul2}. This is mainly caused by the well-known difficulties in the estimation of the long-run covariance matrix.

Table \ref{T:powerSimul1} report the results of the empirical power with $\phi=0$, $\beta_{12}=0.8$,  $\beta_{13}=0.9$, $\beta_{14}=-0.7$, and $\beta_{15}=-0.5$ in the DGP. When the factors are known, the test always rejects the null and the empirical power is one for any significance level. On the other hand, when factors must be estimated but the number of factors is known, the power decreases, as depicted in panel (b) in the table. Nevertheless, for $T=500,700$ the power is reasonably high, especially when the test is conducted at a $10\%$ significance level. For $T=100$, the performance deteriorates as $n$ grows. The results are similar when data-driven procedures are used to determine the number of factors, and the conclusions are mostly the same.

Table \ref{T:powerSimul2} report power results in a similar setting as above but with $\phi=0.5$. The above conclusions are mostly the same if $\phi=0$ or $\phi=0.5$.

The main message of the simulation exercise is that the finite-sample performance of the proposed tests depends on the correct selection of factors. Nevertheless, for the DGP considered here, the usual data-driven methods available in the literature to determine the true number of factors seem to work reasonably well.

\subsection{Informational Gains}\label{S:SimulPredict}

The goal of this simulation is to compare, in a prediction environment, the three-stage method developed in the paper by evaluating the information gains in predicting $Y_{1t}$ by three different methods. First, the predictions are computed from a LASSO regression of $Y_{1t}$ on all the other $n-1$ variables. This is the Sparse Regression (SR) approach. Second, we consider a principal component regression (PCR), i.e., an ordinary least squares (OLS) regression of the variable of interest on factors computed from the pool of other variables. Finally, we consider predictions constructed from the method proposed here, the \texttt{FarmPredict} methodology. Table \ref{T:simulPred} presents the results. The table presents the average mean squared error (MSE) over 5-fold cross-validation (CV) subsamples. As in the size and power simulations, we consider different combinations of $T$ and $n$. We report results for the case where $\theta_{12}=0.8$,  $\theta_{13}=0.9$, $\theta_{14}=-0.7$, and $\theta_{15}=-0.5$ in the DGP.

According to the DGP, the theoretical MSE is 0.25 when all the information is used. When just a factor is used, the MSE is 2.21. From the table is clear that there are significant informational gains when we consider both factors and the cross-dependence between idiosyncratic components. Several conclusions emerge from the table. First, it is clear that when the sample size increases the MSE reduces. This is expected. Second, the PCR's MSE and \texttt{FarmPredict}'s MSE are close to their theoretical values of 2.21 and 0.25 when the sample increases. The performance of the \texttt{FarmPredict} is quite remarkable when $T=500$ or $T=700$ and is always superior to Sparse Regression and PCR.

\begin{table}[tb]
\caption{\textbf{Simulation Results: Size with $\phi=0$.}}
\label{T:sizeSimul1}
\begin{minipage}{0.9\linewidth}
\begin{footnotesize}
The table reports the empirical size of the test of the remaining covariance structure. Panel (a) reports the case where the factors are known, whereas Panel (b) considers that the factors are unknown but the number of factors is known. Panels (c) and (d) present the results when the number of factors is determined, respectively, by the eigenvalue ratio test and the information criterion $IC_1$. Factors are estimated by the usual principal component algorithm. Three nominal significance levels are considered: 0.01, 0.05, and 0.10. The table reports the results for the case where $\phi=0$ in \eqref{E:simul3}.
\end{footnotesize}
\end{minipage}
\resizebox{0.9\linewidth}{!}{
\begin{threeparttable}
\begin{tabular}{lllllllllllll}
\hline
&&\multicolumn{11}{c}{\underline{\textbf{Panel(a): Known factors}}}\\
                 && \multicolumn{3}{c}{\underline{$T=100$}} && \multicolumn{3}{c}{\underline{$T=500$}} && \multicolumn{3}{c}{\underline{$T=700$}} \\
                 && 0.10 & 0.05 & 0.01 && 0.10 & 0.05 & 0.01 && 0.10 & 0.05 & 0.01 \\
\hline
$n=0.5\times T$  && 0.08 & 0.03 & 0.01 && 0.10 & 0.05 & 0.01 && 0.09 & 0.04 & 0.01\\
$n=1\times T$    && 0.06 & 0.02 & 0.00 && 0.07 & 0.03 & 0.01 && 0.10 & 0.05 & 0.01\\
$n=2\times T$    && 0.07 & 0.02 & 0.00 && 0.07 & 0.02 & 0.00 && 0.08 & 0.04 & 0.00\\
$n=3\times T$    && 0.05 & 0.01 & 0.00 && 0.08 & 0.04 & 0.01 && 0.07 & 0.04 & 0.01\\
\\
&&\multicolumn{11}{c}{\underline{\textbf{Panel(b): Known number of factors}}}\\
                 && \multicolumn{3}{c}{\underline{$T=100$}} && \multicolumn{3}{c}{\underline{$T=500$}} && \multicolumn{3}{c}{\underline{$T=700$}} \\
                 && 0.10 & 0.05 & 0.01 && 0.10 & 0.05 & 0.01 && 0.10 & 0.05 & 0.01 \\
\hline
$n=0.5\times T$  && 0.23 & 0.13 & 0.02 && 0.14 & 0.06 & 0.02 && 0.11 & 0.05 & 0.01\\
$n=1\times T$    && 0.13 & 0.06 & 0.01 && 0.09 & 0.04 & 0.01 && 0.12 & 0.05 & 0.01\\
$n=2\times T$    && 0.09 & 0.04 & 0.01 && 0.07 & 0.04 & 0.01 && 0.09 & 0.04 & 0.00\\
$n=3\times T$    && 0.06 & 0.02 & 0.00 && 0.07 & 0.04 & 0.01 && 0.07 & 0.03 & 0.01\\
\\
&&\multicolumn{11}{c}{\underline{\textbf{Panel(c): Information criterion ($\textnormal{IC}_1$)}}}\\
                 && \multicolumn{3}{c}{\underline{$T=100$}} && \multicolumn{3}{c}{\underline{$T=500$}} && \multicolumn{3}{c}{\underline{$T=700$}} \\
                 && 0.10 & 0.05 & 0.01 && 0.10 & 0.05 & 0.01 && 0.10 & 0.05 & 0.01 \\
\hline
$n=0.5\times T$  && 0.23 & 0.12 & 0.02 && 0.13 & 0.06 & 0.02 && 0.10 & 0.06 & 0.01 \\
$n=1\times T$    && 0.13 & 0.08 & 0.02 && 0.10 & 0.03 & 0.01 && 0.11 & 0.05 & 0.01 \\
$n=2\times T$    && 0.11 & 0.05 & 0.01 && 0.07 & 0.04 & 0.01 && 0.10 & 0.05 & 0.01 \\
$n=3\times T$    && 0.08 & 0.03 & 0.01 && 0.07 & 0.03 & 0.01 && 0.07 & 0.03 & 0.01 \\
\\
&&\multicolumn{11}{c}{\underline{\textbf{Panel(d): Eigenvalue ratio}}}\\
                && \multicolumn{3}{c}{\underline{$T=100$}} && \multicolumn{3}{c}{\underline{$T=500$}} && \multicolumn{3}{c}{\underline{$T=700$}} \\
                 && 0.10 & 0.05 & 0.01 && 0.10 & 0.05 & 0.01 && 0.10 & 0.05 & 0.01 \\
\hline
$n=0.5\times T$  && 0.46 & 0.35 & 0.23 && 0.13 & 0.06 & 0.02 && 0.10 & 0.05 & 0.01 \\
$n=1\times T$    && 0.12 & 0.06 & 0.02 && 0.08 & 0.04 & 0.01 && 0.12 & 0.05 & 0.01 \\
$n=2\times T$    && 0.09 & 0.04 & 0.01 && 0.08 & 0.04 & 0.01 && 0.09 & 0.04 & 0.00 \\
$n=3\times T$    && 0.06 & 0.02 & 0.00 && 0.08 & 0.04 & 0.01 && 0.06 & 0.03 & 0.01 \\
\hline
\end{tabular}
\end{threeparttable}}
\end{table}

\begin{table}[htbp]
\caption{\textbf{Simulation Results: Size with $\phi=0.5$.}}
\label{T:sizeSimul2}
\begin{minipage}{0.9\linewidth}
\begin{footnotesize}
The table reports the empirical size of the test of the remaining covariance structure. Panel (a) reports the case where the factors are known, whereas Panel (b) considers that the factors are unknown but the number of factors is known. Panels (c) and (d) present the results when the number of factors is determined, respectively, by the eigenvalue ratio test and the information criterion $IC_1$. Factors are estimated by the usual principal component algorithm. Three nominal significance levels are considered: 0.01, 0.05, and 0.10. The table reports the results for the case where $\phi=0.5$ in \eqref{E:simul3}.
\end{footnotesize}
\end{minipage}
\resizebox{0.9\linewidth}{!}{
\begin{threeparttable}
\begin{tabular}{lllllllllllll}
\hline
&&\multicolumn{11}{c}{\underline{\textbf{Panel(a): Known factors}}}\\
                 && \multicolumn{3}{c}{\underline{$T=100$}} && \multicolumn{3}{c}{\underline{$T=500$}} && \multicolumn{3}{c}{\underline{$T=700$}} \\
                 && 0.10 & 0.05 & 0.01 && 0.10 & 0.05 & 0.01 && 0.10 & 0.05 & 0.01 \\
\hline
$n=0.5\times T$  && 0.09 &0.04 &0.01 &&0.12 &0.07 &0.01 &&0.10 &0.05 &0.01\\
$n=1\times T$    && 0.07 &0.03 &0.00 &&0.07 &0.03 &0.01 &&0.11 &0.06 &0.01\\
$n=2\times T$    && 0.08 &0.02 &0.00 &&0.08 &0.03 &0.00 &&0.09 &0.05 &0.00\\
$n=3\times T$    && 0.05 &0.02 &0.00 &&0.09 &0.04 &0.01 &&0.07 &0.04 &0.01\\
\\
&&\multicolumn{11}{c}{\underline{\textbf{Panel(b): Known number of factors}}}\\
                 && \multicolumn{3}{c}{\underline{$T=100$}} && \multicolumn{3}{c}{\underline{$T=500$}} && \multicolumn{3}{c}{\underline{$T=700$}} \\
                 && 0.10 & 0.05 & 0.01 && 0.10 & 0.05 & 0.01 && 0.10 & 0.05 & 0.01 \\
\hline
$n=0.5\times T$  && 0.25 &0.15 &0.3 &&0.15 &0.07 &0.02 &&0.12 &0.06 &0.01\\
$n=1\times T$    && 0.13 &0.07 &0.01 &&0.09 &0.04 &0.01 &&0.14 &0.06 &0.02\\
$n=2\times T$    && 0.09 &0.04 &0.01 &&0.08 &0.04 &0.01 &&0.09 &0.05 &0.00\\
$n=3\times T$    && 0.08 &0.02 &0.00 &&0.08 &0.04 &0.01 &&0.08 &0.03 &0.01\\
\\
&&\multicolumn{11}{c}{\underline{\textbf{Panel(c): Information criterion ($\textnormal{IC}_1$)}}}\\
                 && \multicolumn{3}{c}{\underline{$T=100$}} && \multicolumn{3}{c}{\underline{$T=500$}} && \multicolumn{3}{c}{\underline{$T=700$}} \\
                 && 0.10 & 0.05 & 0.01 && 0.10 & 0.05 & 0.01 && 0.10 & 0.05 & 0.01 \\
\hline
$n=0.5\times T$  &&0.45 &0.41 &0.29 &&0.15 &0.07 &0.02 &&0.11 &0.06 &0.01\\
$n=1\times T$    &&0.15 &0.09 &0.02 &&0.10 &0.04 &0.01 &&0.14 &0.06 &0.01\\
$n=2\times T$    &&0.09 &0.04 &0.01 &&0.09 &0.04 &0.01 &&0.09 &0.05 &0.00\\
$n=3\times T$    &&0.07 &0.03 &0.00 &&0.10 &0.04 &0.01 &&0.08 &0.03 &0.01\\
\\
&&\multicolumn{11}{c}{\underline{\textbf{Panel(d): Eigenvalue ratio}}}\\
                && \multicolumn{3}{c}{\underline{$T=100$}} && \multicolumn{3}{c}{\underline{$T=500$}} && \multicolumn{3}{c}{\underline{$T=700$}} \\
                 && 0.10 & 0.05 & 0.01 && 0.10 & 0.05 & 0.01 && 0.10 & 0.05 & 0.01 \\
\hline
$n=0.5\times T$  && 0.25 &0.14 &0.04 &&0.14 &0.07 &0.02 &&0.13 &0.06 &0.01\\
$n=1\times T$    && 0.15 &0.07 &0.02 &&0.10 &0.04 &0.01 &&0.13 &0.06 &0.02\\
$n=2\times T$    && 0.11 &0.05 &0.01 &&0.08 &0.05 &0.01 &&0.10 &0.05 &0.00\\
$n=3\times T$    && 0.08 &0.03 &0.01 &&0.09 &0.04 &0.01 &&0.08 &0.03 &0.01\\
\hline
\end{tabular}
\end{threeparttable}}
\end{table}

\begin{table}[tb]
\caption{\textbf{Simulation Results: Power ($\phi=0$).}}
\label{T:powerSimul1}
\begin{minipage}{0.9\linewidth}
\begin{footnotesize}
The table reports the empirical power of the test of the remaining covariance structure. Panel (a) reports the case where the factors are known, whereas Panel (b) considers that the factors are unknown but the number of factors is known. Factors are estimated by the usual principal component algorithm. Three nominal significance levels are considered: 0.01, 0.05, and 0.10.
\end{footnotesize}
\end{minipage}
\resizebox{0.9\linewidth}{!}{
\begin{threeparttable}
\begin{tabular}{lllllllllllll}
\hline
&&\multicolumn{11}{c}{\underline{\textbf{Panel(a): Known factors}}}\\
                 && \multicolumn{3}{c}{\underline{$T=100$}} && \multicolumn{3}{c}{\underline{$T=500$}} && \multicolumn{3}{c}{\underline{$T=700$}} \\
                 && 0.10 & 0.05 & 0.01 && 0.10 & 0.05 & 0.01 && 0.10 & 0.05 & 0.01 \\
\hline
$n=0.5\times T$  && 1 & 1 & 1 && 1 & 1 & 1 && 1 & 1 & 1\\
$n=1\times T$    && 1 & 1 & 1 && 1 & 1 & 1 && 1 & 1 & 1\\
$n=2\times T$    && 1 & 1 & 1 && 1 & 1 & 1 && 1 & 1 & 1\\
$n=3\times T$    && 1 & 1 & 1 && 1 & 1 & 1 && 1 & 1 & 1\\
\\
&&\multicolumn{11}{c}{\underline{\textbf{Panel(b): Known number of factors}}}\\
                 && \multicolumn{3}{c}{\underline{$T=100$}} && \multicolumn{3}{c}{\underline{$T=500$}} && \multicolumn{3}{c}{\underline{$T=700$}} \\
                 && 0.10 & 0.05 & 0.01 && 0.10 & 0.05 & 0.01 && 0.10 & 0.05 & 0.01 \\
\hline
$n=0.5\times T$  && 0.35 & 0.19 & 0.03 && 0.99 & 0.98 & 0.83 & 0.99 & 0.99 & 0.95\\
$n=1\times T$    && 0.20 & 0.08 & 0.01 && 0.82 & 0.60 & 0.11 & 0.95 & 0.81 & 0.32\\
$n=2\times T$    && 0.15 & 0.07 & 0.01 && 0.82 & 0.55 & 0.11 & 0.94 & 0.82 & 0.34\\
$n=3\times T$    && 0.09 & 0.03 & 0.00 && 0.79 & 0.52 & 0.10 & 0.94 & 0.80 & 0.31\\
&&\multicolumn{11}{c}{\underline{\textbf{Panel(c): Eigenvalue ratio}}}\\
                 && \multicolumn{3}{c}{\underline{$T=100$}} && \multicolumn{3}{c}{\underline{$T=500$}} && \multicolumn{3}{c}{\underline{$T=700$}} \\
                 && 0.10 & 0.05 & 0.01 && 0.10 & 0.05 & 0.01 && 0.10 & 0.05 & 0.01 \\
\hline
$n=0.5\times T$  && 0.14 &0.09 &0.02 &&0.99 &0.97 &0.83&& 0.99 &0.99 &0.94 \\
$n=1\times T$    && 0.18 &0.07 &0.01 &&0.84 &0.59 &0.13&& 0.95 &0.81 &0.33 \\
$n=2\times T$    && 0.17 &0.07 &0.01 &&0.83 &0.54 &0.11&& 0.94 &0.82 &0.34 \\
$n=3\times T$    && 0.08 &0.03 &0.00 &&0.82 &0.53 &0.10&& 0.95 &0.81 &0.34 \\
\\
&&\multicolumn{11}{c}{\underline{\textbf{Panel(d): Information criterion ($\textnormal{IC}_1$)}}}\\
                 && \multicolumn{3}{c}{\underline{$T=100$}} && \multicolumn{3}{c}{\underline{$T=500$}} && \multicolumn{3}{c}{\underline{$T=700$}} \\
                 && 0.10 & 0.05 & 0.01 && 0.10 & 0.05 & 0.01 && 0.10 & 0.05 & 0.01 \\
\hline
$n=0.5\times T$  && 0.15 &0.09 &0.02 &&0.99 &0.97 &0.83&& 0.99 &0.99 &0.94 \\
$n=1\times T$    && 0.20 &0.07 &0.01 &&0.84 &0.60 &0.13&& 0.95 &0.81 &0.33 \\
$n=2\times T$    && 0.17 &0.07 &0.01 &&0.83 &0.55 &0.11&& 0.94 &0.82 &0.34 \\
$n=3\times T$    && 0.09 &0.03 &0.00 &&0.82 &0.56 &0.10&& 0.95 &0.81 &0.34 \\
\hline
\end{tabular}
\end{threeparttable}}
\end{table}

\begin{table}[htbp]
\caption{\textbf{Simulation Results: Power ($\phi=0.5$).}}
\label{T:powerSimul2}
\begin{minipage}{0.9\linewidth}
\begin{footnotesize}
The table reports the empirical power of the test of the remaining covariance structure. Panel (a) reports the case where the factors are known, whereas Panel (b) considers that the factors are unknown but the number of factors is known. Factors are estimated by the usual principal component algorithm. Three nominal significance levels are considered: 0.01, 0.05, and 0.10.
\end{footnotesize}
\end{minipage}
\resizebox{0.9\linewidth}{!}{
\begin{threeparttable}
\begin{tabular}{lllllllllllll}
\hline
&&\multicolumn{11}{c}{\underline{\textbf{Panel(a): Known factors}}}\\
                 && \multicolumn{3}{c}{\underline{$T=100$}} && \multicolumn{3}{c}{\underline{$T=500$}} && \multicolumn{3}{c}{\underline{$T=700$}} \\
                 && 0.10 & 0.05 & 0.01 && 0.10 & 0.05 & 0.01 && 0.10 & 0.05 & 0.01 \\
\hline
$n=0.5\times T$  && 1 & 1 & 1 && 1 & 1 & 1 && 1 & 1 & 1\\
$n=1\times T$    && 1 & 1 & 1 && 1 & 1 & 1 && 1 & 1 & 1\\
$n=2\times T$    && 1 & 1 & 1 && 1 & 1 & 1 && 1 & 1 & 1\\
$n=3\times T$    && 1 & 1 & 1 && 1 & 1 & 1 && 1 & 1 & 1\\
\\
&&\multicolumn{11}{c}{\underline{\textbf{Panel(b): Known number of factors}}}\\
                 && \multicolumn{3}{c}{\underline{$T=100$}} && \multicolumn{3}{c}{\underline{$T=500$}} && \multicolumn{3}{c}{\underline{$T=700$}} \\
                 && 0.10 & 0.05 & 0.01 && 0.10 & 0.05 & 0.01 && 0.10 & 0.05 & 0.01 \\
\hline
$n=0.5\times T$  && 0.38 & 0.18 & 0.03 && 1.00 & 1.00 & 0.91 && 1.00 & 1.00 & 1.00\\
$n=1\times T$    && 0.21 & 0.09 & 0.02 && 0.89 & 0.69 & 0.13 && 1.00 & 0.92 & 0.39\\
$n=2\times T$    && 0.18 & 0.07 & 0.01 && 0.98 & 0.59 & 0.13 && 1.00 & 0.96 & 0.36\\
$n=3\times T$    && 0.10 & 0.03 & 0.00 && 0.91 & 0.66 & 0.11 && 1.00 & 0.92 & 0.40\\
&&\multicolumn{11}{c}{\underline{\textbf{Panel(c): Eigenvalue ratio}}}\\
                 && \multicolumn{3}{c}{\underline{$T=100$}} && \multicolumn{3}{c}{\underline{$T=500$}} && \multicolumn{3}{c}{\underline{$T=700$}} \\
                 && 0.10 & 0.05 & 0.01 && 0.10 & 0.05 & 0.01 && 0.10 & 0.05 & 0.01 \\
\hline
$n=0.5\times T$  && 0.15 & 0.11 &0.03 &&1.00 &1.00 &0.96 &&1.00 &1.00 &1.00\\
$n=1\times T$    && 0.22 & 0.08 &0.01 &&0.99 &0.70 &0.16 &&1.00 &0.95 &0.38\\
$n=2\times T$    && 0.17 & 0.07 &0.01 &&0.94 &0.62 &0.12 &&0.98 &0.88 &0.37\\
$n=3\times T$    && 0.09 & 0.03 &0.00 &&0.89 &0.59 &0.11 &&1.00 &0.87 &0.40\\
&&\multicolumn{11}{c}{\underline{\textbf{Panel(d): Information criterion ($\textnormal{IC}_1$)}}}\\
                 && \multicolumn{3}{c}{\underline{$T=100$}} && \multicolumn{3}{c}{\underline{$T=500$}} && \multicolumn{3}{c}{\underline{$T=700$}} \\
                 && 0.10 & 0.05 & 0.01 && 0.10 & 0.05 & 0.01 && 0.10 & 0.05 & 0.01 \\
\hline
$n=0.5\times T$  && 0.15 & 0.11 &0.03 &&1.00 &1.00 &0.96 &&1.00 &1.00 &1.00\\
$n=1\times T$    && 0.22 & 0.08 &0.01 &&0.99 &0.70 &0.16 &&1.00 &0.95 &0.38\\
$n=2\times T$    && 0.17 & 0.07 &0.01 &&0.94 &0.62 &0.12 &&0.98 &0.88 &0.37\\
$n=3\times T$    && 0.09 & 0.03 &0.00 &&0.89 &0.59 &0.11 &&1.00 &0.87 &0.40\\
\hline
\end{tabular}
\end{threeparttable}}
\end{table}

\begin{table}[tb]
\caption{\textbf{Simulation Results: Informational Gains}}
\label{T:simulPred}
\begin{minipage}{0.9\linewidth}
\begin{footnotesize}
The table reports the average mean squared error (MSE) of three different prediction models over 5-fold cross-validation subsamples. The goal is to predict the first variable using information from the remaining $n-1$. Panel (a) considers the case of Sparse Regression (SR) where $Y_{1t}$ is LASSO-regressed on all the other variables. Panel (b) shows the results of Principal Component Regression (PCR). Finally, Panel (c) presents the results of \texttt{FarmPredict}. ``N/A'' means ``not available''. Note that there is no factor selection for Sparse Regression. ``Known Number'' means that the number of factors is known.
\end{footnotesize}
\end{minipage}
\resizebox{0.9\linewidth}{!}{
\begin{threeparttable}
\begin{tabular}{lllllllllllll}
\hline
&&\multicolumn{11}{c}{\underline{\textbf{Panel(a): Sparse Regression (SR)}}}\\
&& \multicolumn{3}{c}{\underline{Known Number}} && \multicolumn{3}{c}{\underline{Eigenvalue Ratio}} && \multicolumn{3}{c}{\underline{Information Criterion ($\textnormal{IC}_1$)}} \\
                 && $T=100$ & $T=500$ & $T=700$ && $T=100$ & $T=500$ & $T=700$ && $T=100$ & $T=500$ & $T=700$ \\
\hline
$n=0.5\times T$  && 0.60 & 0.35 & 0.34 && N/A & N/A& N/A&& N/A& N/A& N/A\\
$n=1\times T$    && 0.42 & 0.38 & 0.32 && N/A & N/A& N/A&& N/A& N/A& N/A\\
$n=2\times T$    && 0.40 & 0.35 & 0.31 && N/A & N/A& N/A&& N/A& N/A& N/A\\
$n=3\times T$    && 0.40 & 0.35 & 0.30 && N/A & N/A& N/A&& N/A& N/A& N/A\\
\\
&&\multicolumn{11}{c}{\underline{\textbf{Panel(b): Principal Component Regression (PCR)}}}\\
&& \multicolumn{3}{c}{\underline{Known Number}} && \multicolumn{3}{c}{\underline{Eigenvalue Ratio}} && \multicolumn{3}{c}{\underline{Information Criterion ($\textnormal{IC}_1$)}} \\
                 && $T=100$ & $T=500$ & $T=700$ && $T=100$ & $T=500$ & $T=700$ && $T=100$ & $T=500$ & $T=700$ \\
\hline
$n=0.5\times T$  &&3.82 &3.12 &3.01 &&4.69 &3.12 &3.01 &&3.26 &3.04 &2.34\\
$n=1\times T$    &&3.09 &2.35 &2.34 &&4.05 &3.35 &3.34 &&3.22 &3.02 &2.32\\
$n=2\times T$    &&3.14 &2.97 &2.21 &&4.13 &3.97 &2.21 &&3.29 &3.21 &2.27\\
$n=3\times T$    &&3.83 &3.00 &2.33 &&3.83 &3.00 &2.33 &&3.12 &3.00 &2.28\\
&&\multicolumn{11}{c}{\underline{\textbf{Panel(c): FarmPredict}}}\\
&& \multicolumn{3}{c}{\underline{Known Number}} && \multicolumn{3}{c}{\underline{Eigenvalue Ratio}} && \multicolumn{3}{c}{\underline{Information Criterion ($\textnormal{IC}_1$)}} \\
                 && $T=100$ & $T=500$ & $T=700$ && $T=100$ & $T=500$ & $T=700$ && $T=100$ & $T=500$ & $T=700$ \\
\hline
$n=0.5\times T$  && 0.50 &0.33 &0.31 &&0.52 &0.33 &0.31 &&0.50& 0.34 &0.30 \\
$n=1\times T$    && 0.32 &0.29 &0.28 &&0.37 &0.29 &0.28 &&0.53& 0.28 &0.27 \\
$n=2\times T$    && 0.27 &0.27 &0.26 &&0.28 &0.27 &0.26 &&0.32& 0.28 &0.28 \\
$n=3\times T$    && 0.22 &0.21 &0.21 &&0.22 &0.21 &0.21 &&0.34& 0.27 &0.27 \\
\hline
\end{tabular}
\end{threeparttable}}
\end{table}

\section{Results for the case with Geometric Mixing and Exponential Tails}\label{App:Exp}

This section extends the results from the main text (Theorems \ref{T:LS}-\ref{T:inference_partial_cov}) to the case  where the random quantities admit an exponential tail and a strong mixing coefficient with exponential decay (see  Assumption \ref{Ass:MomentsExp}(a)-(c) for a precise statement).

Before we begin, we need some additional pieces of notation. For any convex function $\psi:\R^+\to\R^+ $ such that $\psi(0)=0$ and $\psi(x)\to\infty$ as $x\to\infty$ and (real-valued) random variable $X$, we denote its Orlicz-norm by $\viii{x}_\psi$, which is defined by $\viii{X}_\psi:=\inf \left\{C>0:\E\left[\psi\left(\frac{|X|}{C}\right)\right]\leq 1\right\}$.  In particular, we have the $\ell^p$ Orlicz-norm of $X$ by $\viii{X}_p$ for $p\in[0,\infty)$ by setting $\psi(x)=x^p$ and $\viii{X}_{e^\gamma}$ the exponential Orlicz-norm for $\gamma>0$ by setting $\psi(x)=\exp(x^\gamma)-1$ for $\gamma\geq 1$ and $\psi(x)$ is the convex hull of $x\mapsto \exp(x^\gamma)-1$ for  $\gamma \in(0,1)$ (to ensure convexity).  Also,  when $\b X$ is a random vector, we define its Orlicz-norm by $\viii{\b X}_\psi:=\sup_{\|\b{u}\|\leq 1}\viii{\b{u}'\b X}_\psi$.

We strengthen Assumption (\ref{Ass:Moments}.c) by adding the following condition. Recall that $\{\alpha_{m}\}_m$ denotes the strong mixing coefficients  $\{\b Z_t\}_t$ as defined in \eqref{E:strong_mixing_coef}.

\begin{assumption}[\textbf{Moments and Dependency: Exponential Case}]\label{Ass:MomentsExp} . There are universal constants $C, K_1,\gamma_1,\gamma_2>0$ such for all $t,s\in [T]$, $T\geq 2$ and $i\in[n]$,
\begin{enumerate}[(a)]

\item $\viii{\b Z_{t}}_{e^{\gamma_2}}\leq C$
\item  $\alpha_{m}\leq \exp(-K_1m^{\gamma_1})$ for  $1\leq m<T$.
\item  $\gamma < 1$ where $\gamma$ is defined by $1/\gamma=1/\gamma_1+2/\gamma_2$
\item $\viii{n^{-1/2}\left[\b U_s'\b U_t - \E(\b U_s'\b U_t)\right]}_{e^{\gamma_2}}\leq C$
\item $\viii{n^{-1/2}\sum_{i=1}^n\lambda_{j,i}U_{i,t}}_{e^{\gamma_2}}\leq C$
\item $\viii{\|(\b X_i'\b X_i/T)^{-1}\|}_{e^{\gamma_2}}\leq C$
\item  $\frac{(\log n)^{(2/\gamma) - 1}}{T}\leq \frac{1}{C_1}$ where $C_1$ is positive constant only depending on $C$, $K_1$, $\gamma$ and $\gamma_1$.
\end{enumerate}
\end{assumption}

\begin{theorem}\label{T:LS_exp}  Under Assumptions \ref{Ass:DGP},\ref{Ass:Moments}, and \ref{Ass:MomentsExp}:
\[
\|\widehat{\b R}-\b R\|_{\max}\lsim_\P \frac{\sqrt{r}k[\log(n)\log(nTk)]^{1/\gamma_2} \sqrt{\log(nk)}}{\sqrt{T}}.
\]
\end{theorem}

\begin{theorem}\label{T:FM_exp} Under Assumptions \ref{Ass:DGP} --\ref{Ass:MomentsExp} , let $\varrho_R$ be a non-negative sequence of $n$ and $T$ such that $\|\widehat{\b R}-\b R\|_{\max} \lsim_\P \varrho_R$. Then
\begin{enumerate}[(a)]
\item
$\max_{t\leq T}\|\widehat{\b F}_t - \b H \b F_t\|_2\lsim_\P \frac{1}{\sqrt{T}} + \frac{[\log T]^{1/\gamma_2}}{\sqrt{n}} + \varrho_R[\log (nT)]^{1/\gamma_2}$
\item
$\max_{i\leq n}\|\widehat{\b\lambda}_i - \b H \b\lambda_i\|_2\lsim_\P \sqrt{\frac{\log n}{T}} + \frac{1}{\sqrt{n}} + \varrho_R$
\item
$\|\widehat{\b U} - \b U\|_{\max}\lsim_\P (\log T)^{1/\gamma_2}\sqrt{\frac{\log n}{T}}+\frac{(\log T)^{1/\gamma_2}}{\sqrt{n}} +  \varrho_R[\log (nT)]^{1/\gamma_2}$,
\end{enumerate}
provided that $\sqrt{\log n/T} + [\log(nT) \varrho_R]^{1/\gamma_2} \lsim 1$.
\end{theorem}

\begin{theorem}\label{T:Oracle_bounds_exp} Let $\varrho_U$ be a non-negative sequence of $n$ and $T$ such that $\|\widehat{\b U}- \b U\|_{\max} \lsim_\P \varrho_U$ and  assume that Assumptions \ref{Ass:Moments} and \ref{Ass:MomentsExp}  hold. For every $\epsilon>0$ there is a constant $0<C_\epsilon<\infty$ such that if the penalty parameter is set $\xi\geq C_\epsilon\xi_0$, then for any minimizer $\widehat{\b\theta_i}$ of \eqref{E:LASSO_obj_fun},  with probability at least $1-\epsilon$:
\[
\max_{i\in[n]} \left[(\widehat{\b\theta}_i - \b\theta_i)'\E\left(\b W_{i,t} \b W_{i,t}'\right) (\widehat{\b\theta}_{i} - \b\theta_i)
+ \xi\|\widehat{\b\theta}_i - \b\theta_i\|_1\right]\leq 8 \frac{\xi^2s_0}{b^2},
\]
provided that $\frac{\xi_1 s_0}{b}\leq K_\epsilon$ where $s_k:=\max_{i\in[n]}\|\b\theta_i\|_k$ for $k\in\{0,1,2\}$, $K_\epsilon$ is a positive constant only depending on $\epsilon$, and 
\begin{align*}
\xi_0 &:=\left(1+s_2\right)\sqrt{\frac{\log[n(l+1)]}{T}}+\left(1+s_1\right)\left[[\log(nT)]^{1/\gamma_2}\varrho_U + \varrho_U^2\right], \\
\xi_1 &:= \sqrt{\frac{\log [n(l+1)]}{T}}+\left[[\log(nT)]^{1/\gamma_2}\varrho_U + \varrho_U^2\right].
\end{align*}
\end{theorem}

\begin{theorem}\label{T:prediction_bounds_exp} Under Assumptions \ref{Ass:DGP}, \ref{Ass:Moments}, and  \ref{Ass:MomentsExp} , let $\varrho_\gamma, \varrho_U, \varrho_\theta, \varrho_P,\varrho_\lambda$ and $\varrho_F$ be non-negative sequence on $n$ and $T$ such that, uniformly in $i\in[n]$, $\|\widehat{\b\gamma}_i-\b\gamma_i\|_{1} \lsim_\P \varrho_\gamma$, $\|\widehat{\b U}-\b U\|_{\max} \lsim_\P \varrho_U$, $\|\widehat{\b \theta}_i-\b\theta_i\|_{1} \lsim_\P \varrho_\theta$, $\|\widehat{\b P}-\b P\|_{2} \lsim_\P \varrho_P$, $\|\widehat{\b \lambda}_i-\b\lambda_i\|_{2} \lsim_\P \varrho_\lambda$, and $\|\widehat{\b F}_t-\b F_t\|_{2} \lsim_\P \varrho_F$, respectively. Then, for every $t\geq 1$, 
\[
\max_{i\in[n]}\left|\widehat{Y}_{i,t}-\widetilde{Y}_{i,t}\right|\lsim_\P (\varrho_\gamma+\varrho_\theta) (\log n)^{1/\gamma_2} + \varrho_U s_1+ \varrho_P + \varrho_\lambda +\varrho_F,\]
where $s_1$ is defined in Theorem \ref{T:Oracle_bounds}.
\end{theorem}

\begin{theorem}\label{T:inference_cov_exp} 
For $\m{D}\subseteq [n]^2$, let $\widetilde{\b J}:=\sqrt{T}(\widetilde{\b\Sigma}_\m{D} - \b\Sigma_\m{D})$ and $\b{\m{G}}$ be a zero-mean Gaussian vector with the same covariance matrix of $\widetilde{\b J}$, i.e., $\b{\m{G}}\sim N(\b{0},\b\Upsilon_\Sigma)$.
Under Assumptions \ref{Ass:DGP}--\ref{Ass:MomentsExp}, if further
\begin{enumerate}[(a)]
\item
$\{\b U_t:t\in[T]\}$ is fourth-order stationary process for each $T$;
\item
the minimum eigenvalue of $\b\Upsilon_\Sigma$ is greater or equal to $\underline{c}$, for some $\underline{c}>0$,
\end{enumerate}
 then,
\begin{align*}
    \rho(\widetilde{\b J},\b{\m{G}})&\lsim \frac{(\log T)^{\gamma_1 + 1} \log d + \big[\log (dT)\big]^{2/\gamma}(\log d)^2\log T}{\sqrt{T}\underline{c}^2} \\
    &\qquad + \frac{(\log d)^{2} +(\log d)^{3/2}\log T+ \log d (\log T)^{\gamma_1+1}\log (dT)}{T^{1/4}\underline{c}^2},
\end{align*}
where $d:=|\m{D}|$.

\noindent Let $\widehat{\b J}:=\sqrt{T}(\widehat{\b\Sigma}_\m{D} - \b\Sigma_\m{D})$, then
\[
\rho(\widehat{\b J},\b{\m{G}})\lsim\rho(\widetilde{\b J},\b{\m{G}})  + \inf_{\delta>0}\left[\delta_1\sqrt{1\lor \log (d/\delta)} + \P(\|\widehat{\b J}-\widetilde{\b J}\|_\infty>\delta)\right]
.\]
Let $\widetilde{\b \Upsilon}_\Sigma$ be any positive semi definite estimator of $\b\Upsilon_\Sigma$ and $\b{\m{G}}^*|\b X,\b Y\sim N(\b 0,\widetilde{\b \Upsilon}_\Sigma)$, then
\[
\rho(\widehat{\b J},\b{\m{G}}^*)\lsim \rho(\widehat{\b J},\b{\m{G}}) + \inf_{\delta>0}\left[\delta \log d(1\lor |\log d|) + \P(\|\widetilde{\b\Upsilon}_\Sigma - \b\Upsilon_\Sigma\|_{\max}>\delta)\right].
\]
\end{theorem}

\begin{theorem}\label{T:inference_partial_cov_exp} 
For $\m{D}\subseteq [n]^2$, let $\widetilde{\b Q}:=\sqrt{T}(\widetilde{\b\Pi}_\m{D} - \b\Pi_\m{D})$ and $\b{\m{H}}$ be a zero-mean Gaussian vector with the same covariance matrix of $\widetilde{\b Q}$, i.e., $\b{\m{H}}\sim N(\b 0,\b\Upsilon_\Pi)$.
Under the same assumptions and notation of Theorem \ref{T:inference_cov_exp} with $\b\Upsilon_\Sigma$ and $\underline{c}$ replaced by $\b\Upsilon_\Pi$ and $\underline{b}$, respectively in condition (c), we have
\begin{align*}
     \rho(\widetilde{\b Q},\b{\m{H}})&\lsim \frac{(\log T)^{\gamma_1 + 1} \log d + \big[\log (dT)\big]^{2/\gamma}(\log d)^2\log T}{\sqrt{T}\underline{b}^2} \\
    &\qquad + \frac{(\log d)^{2} +(\log d)^{3/2}\log T+ \log d (\log T)^{\gamma_1+1}\log (dT)}{T^{1/4}\underline{b}^2},
\end{align*}
where $d:=|\m{D}|$.

\noindent Let $\widehat{\b Q}:=\sqrt{T}(\widehat{\b\Pi}_\m{D} - \b\Pi_\m{D})$, then
\[
\rho(\widehat{\b Q},\b{\m{H}})\lsim\rho(\widetilde{\b Q},\b{\m{H}})  + \inf_{\delta>0}\left[\delta_1\sqrt{1\lor \log (d/\delta)} + \P(\|\widehat{\b Q}-\widetilde{\b Q}\|_\infty>\delta)\right]
.\]
Let $\widetilde{\b \Upsilon}_\Pi$ be any positive semi definite estimator of $\b\Upsilon_\Pi$ and $\b{\m{H}}^*|\b X,\b Y\sim N(\b 0,\widetilde{\b \Upsilon}_\Pi)$, then
\[
\rho(\widehat{\b Q},\b{\m{H}}^*)\lsim \rho(\widehat{\b Q},\b{\m{H}}) + \inf_{\delta>0}\left[\delta \log d(1\lor |\log d|) + \P(\|\widetilde{\b\Upsilon}_\Pi - \b\Upsilon_\Pi\|_{\max}>\delta)\right].
\]
\end{theorem}

%\begin{align}
%\xi_0 &= (1+\max_i\|\b \theta_i\|_2) \sqrt{\log[n(l+1)]}T^{-1/2}+(1+\max_{i} \|\b\theta_i\|_1)\big[[\log(nT)]^{1/\gamma_2}\eta + \eta^2\big]\nonumber, \\
%\xi_1 &:= \sqrt{\frac{\log [n(l+1)]}{T}} +  [\log (nT)]^{1/\gamma_2}\eta + \eta^2\nonumber  
%\end{align}
%for the exponential case; and 

%in the polynomial case (Assumption \ref{Ass:Moments}(c)) and $g(n)= (\log n)^{1/\gamma_2}$ in the exponential case (Assumption \ref{Ass:Moments}(d)).

\section{Proof of the Theorems} \label{App: Main Results}

\subsection{Proof of Theorems \ref{T:LS} and  \ref{T:LS_exp}}

We first upper bound $|\widehat{R}_{i,t}-R_{i,t}|$. Recall that $\widehat{R}_{i,t}-R_{i,t}=(\widehat{\b\gamma}_i-\b\gamma_i)'\b X_{i,t}$. Then, by subsequent application of Hölder's inequality, we have
\begin{align}\label{E:LS_decomposition0}
    |\widehat{R}_{i,t}-R_{i,t}| 
    \leq \|\widehat{\b\gamma}_i-\b\gamma_i\|\|\b X_{i,t}\|\leq\|\widehat{\b\Sigma}_i^{-1}\| \|\widehat{\b v}_i\|\|\b X_{i,t}\|\leq k\|\widehat{\b\Sigma}_i^{-1}\| \|\widehat{\b v}_i\|_\infty\|\b X_{i,t}\|_\infty,
\end{align}
where $\widehat{\b\Sigma}_i:=\b X_i'\b X_i/T$ and $\widehat{\b v}_i:= \b X_i'\b R_i/T$. Therefore, 
\begin{equation}\label{E:LS_decomposition}
  \|\widehat{\b R}- \b R\|_{\max}\leq k\left(\max_{i}\left\|\widehat{\b\Sigma}_i^{-1}\right\|\right)\left(\max_{i,j}\left|\widehat{v}^{(j)}_i\right|\right)\left(\max_{i,t,j}\left|X^{(j)}_{i,t}\right|\right),
\end{equation}
where $\widehat{v}_i^{(j)}$ and $X_{i,t}^{(j)}$ denote the $j$-th component of $\widehat{\b v}_i$ and $\b X_{i,t}$ respectively. 

Under Assumption \ref{Ass:Moments}, $\viii{\|\widehat{\b\Sigma}_i^{-1}\|}_\psi\leq C$. By definition, $R_{i,t} = \b \lambda_i'\b F_t + U_{i,t}$, then 

\[
\viii{R_{i,t}}_\psi\leq \|\b \lambda_i\| \viii{\b F_t}_\psi + \viii{U_{i,t}}_\psi\leq (1+ \sqrt{r}\|\Lambda\|_{\max})\viii{\b Z_{i,t}} \leq (1+C\sqrt{r})C
\] 
under Assumptions \ref{Ass:Factor_Model} and \ref{Ass:Moments}. Also,  $\b X_{i,t}R_{i,t}$ is a function of $\b Z_{t}$. Then, $\{\b X_{i,t}R_{i,t}\}_t$ is a zero-mean sequence with strong mixing coefficient upper bounded by the strong mixing coefficient of $\{\b Z_t\}_t$. 

For the polynomial case, applying the union bound followed by Markov's inequality we conclude that 
\[
\max_{i}\left\|\widehat{\b\Sigma}_i^{-1}\right\|\lsim_\P n^{1/p}
\quad\textnormal{and}\quad 
\max_{i,t,j}\left|X^{(j)}_{i,t}\right|\lsim_\P (nkT)^{1/p}.
\]

Also, by the Cauchy-Schwartz inequality 
\[
\viii{X_{i,t}^{(j)}R_{i,t}}_{(p+\epsilon)/2}\leq \viii{X_{i,t}^{(j)}}_{p+\epsilon}\viii{R_{i,t}}_{p+\epsilon}\leq (1+C\sqrt{r})C^2.
\]
Then, by Lemma \ref{L:tail_bound_sup_poly}, $\max_{i,j}|\widehat{v}^{(j)}_i|\lsim_\P\frac{\mathscr{R}_\alpha\sqrt{r}(nk)^{2/p}}{\sqrt{T}}$. Plugging the last three probability bounds back on \eqref{E:LS_decomposition} yields the result for the polynomial case. 

Similarly, for the exponential case, the union bound followed by Lemma \ref{L:Orlicz_equivalence} gives us 
\[
\max_{i}\|\widehat{\b\Sigma}_i^{-1}\|\lsim_\P  (\log n)^{1/\gamma_2}\quad 
\textnormal{and}\quad \max_{i,t,j}|X^{(j)}_{i,t}|\lsim_\P [\log(nkT)]^{1/\gamma_2}.
\]
Also, by Lemma \ref{L:CS_exp}, 
\[
\viii{X_{i,t}^{(j)}R_{i,t}}_{e^{\gamma_2/2}}\lsim  \viii{X_{i,t}^{(j)}}_{e^{\gamma_2}}^2\lor \viii{R_{i,t}}_{e^{\gamma_2}}^2\lsim 1.
\] 
Hence, $\max_{i,j}|\widehat{v}^{(j)}_i|\lsim_\P\sqrt{\frac{\log (nk)}{T}}$ by Lemma \ref{L:tail_bound_sup_exp}. Plugging the last three probability bounds back on \eqref{E:LS_decomposition} yields the result for the exponential case, 

\subsection{Proof of Theorems \ref{T:FM} and \ref{T:FM_exp}}

The proof is an adaption of the proof of Theorem 4 and Corollary 1 in \cite{FLM2013}, henceforth FLM, to accommodate (i) the serial dependency (strong mixing sequences), (ii) polynomial tails, and (iii) the estimation error in the sample covariance matrix.  For part (a), we use expression (A.1) in \cite{Bai2003} to obtain the following identity
\begin{equation}\label{E:Bai_identity}
\widehat{\b F}_t - \b H \b F_t = \left(\frac{\b V}{n}\right)^{-1}\left[\frac{1}{T}\sum_{s=1}^T\widehat{\b F}_s\frac{\E(\b U_s'\b U_t)}{n} + \frac{1}{T}\sum_{s=1}^T\left(\widehat{\b F}_s\widetilde{\zeta}_{st} + \widehat{\b F}_s\widetilde{\eta}_{st} +\widehat{\b F}_s\widetilde{\xi}_{st}\right)\right],
\end{equation}
where $\b V$ is a $(r\times r)$ diagonal matrix whose diagonal is given by the eigenvalues of $\widehat{\b\Lambda}'\widehat{\b\Lambda}$ in decreasing fashion with $\widehat{\b\Lambda}:=\widehat{\b R}\widehat{\b F}/T$ ; and $\widetilde{\zeta}_{st},\widetilde{\eta}_{st}$ and $\widetilde{\xi}_{st}$ are defined before Lemma \ref{L:FM_Basic}.

By Assumptions \ref{Ass:Factor_Model}(d) and \ref{Ass:Moments} we have $\|\b R\|_{\max}\leq r\|\b\Lambda\|_{\max}\|\b F\|_{\max} +\|\b U\|_{\max}\lsim_\P g(nT)$. Applying Lemma \ref{L:compatibility_cond} we conclude that $\|\widehat{\b\Sigma} - \widetilde{\b\Sigma}\|_{\max} \lsim_\P \varrho_R[g(nT) + \varrho_R]\lsim_\P 1$. Finally, $\psi
g_\alpha(n)/\sqrt{T}\lsim 1$ also by assumption. Then,  $\|(\frac{\b V}{n})^{-1}\| \lsim_\P 1$ by Lemma \ref{L:10FLM}. Using the results (a)-(d) of Lemma \ref{L:9FLM} we can bound in probability each of the terms in brackets of \eqref{E:Bai_identity} in $\ell_2$ norm, uniformly in $t\leq T$. Result (a) follows since
\[
\max_{t\in[T]}\|\widehat{\b F}_t - \b H \b F_t\|\lsim_\P \left[\tfrac{1}{\sqrt{T}} + \frac{g(T)}{\sqrt{n}} + g(nT)\varrho_R\right],
\]
where $g(x)= x^{1/p}$ under Assumption (\ref{Ass:Moments}.c) and $g(x)=[\log (x)]^{1/\gamma_2}$ under Assumption (\ref{Ass:Moments}.d).

For part (b) we use the fact that $\widehat{\b\Lambda}:=\widehat{\b R}\widehat{\b F}/T$ and set $\widehat{\b F}'\widehat{\b F} = \b I_r$ to write
\begin{equation}\label{E:lambda_identity}
\widehat{\b\lambda}_i - \b H \b\lambda_i = \frac{1}{T}\sum_{t=1}^T\b H \b F_t  \widetilde{U}_{i,t} + 	\frac{1}{T}\sum_{t=1}^T\widehat{R}_{i,t}(\widehat{\b F}_t - \b H\b F_t) + \b H\left(\frac{1}{T}\sum_{t=1}^T\b F_t\b F_t' - \b I_r\right)\b\lambda_i.
\end{equation}
The first term can be upper bounded in $\ell_2$ norm, uniformly in $i\leq n $, by
\[
\sqrt{r}\|\b H\|\max_{i\leq n}\max_{j\leq r}\left|\frac{1}{T}\sum_{t=1}^T F_{jt} \widetilde{U}_{i,t}\right|\lsim_\P  g_1(n)/\sqrt{T} + \varrho_R,
\]
 where the equality follows from Lemma \ref{L:10FLM}(b) and (e), and $g_1(x)= \mathscr{R}_\alpha x^{2/p}$ under Assumption (\ref{Ass:Moments}.c) and $g_1(x)=\sqrt{\log (x)}$ under Assumption (\ref{Ass:Moments}.d). The $\ell_2$-norm of the second term is upper bounded uniformly in $i\leq n $ by
 \[
 \left(\max_{i\leq n}\frac{1}{T}\sum_{t=1}^T\widehat{R}_{i,t}^2 \frac{1}{T}\sum_{t=1}^T\| \widehat{\b F}_t - \b H\b F_t\|^2\right )^{1/2} \lsim_\P \left[\frac{1}{T} +(1/\sqrt{n}+)^2\right]^{1/2},
 \]
where the first term after the equality follows from Lemma \ref{L:10FLM}(d) together with the theorem's assumption and the second term from Lemma \ref{L:8FLM}(e). Finally, the last term of \eqref{E:lambda_identity} is upper bounded by
\[
\|\b H\|\|\max_{i\leq n}\b\lambda_i\|\left\|\frac{1}{T}\sum_{t=1}^T\b F_t\b F_t' - \b I_r\right\|\lsim_\P 1/\sqrt{T},
\]
where the last term is $\lsim_\P 1/\sqrt{T}$ by the maximum inequality and Assumption \ref{Ass:Moments}.
Plugging the last three displays back into \eqref{E:lambda_identity} yields result (b).

For part (c) we have $\|\widehat{\b U}-\b U\|_{\max} = \| \b\Lambda\b F' - \widehat{\b\Lambda}\widehat{\b F}'  +\widehat{\b R} - \b R\|_{\max} \leq \|  \widehat{\b\Lambda}\widehat{\b F}' - \b\Lambda\b F' \|_{\max}   +\|\widehat{\b R} - \b R\|_{\max}$. The last term is $\lsim_\P \varrho_R$ by assumption. For the first term we use the decomposition
\begin{align}\label{E:factor_decomposition}
\widehat{\b\lambda}_i'\widehat{\b F}_t - \b\lambda_i'\b F_t&=(\widehat{\b\lambda}_i - \b H\b\lambda_i)'(\widehat{\b F}_t - \b H\b F_t) +  (\b H\b\lambda_i)'(\widehat{\b F}_t - \b H\b F_t) \nonumber\\
 &\qquad+  (\widehat{\b\lambda}_i - \b H\b\lambda_i)'\b H\b F_t +\b\lambda_i'(\b H'\b H - \b I_r)\b F _t.
\end{align}
Therefore, we can upper bound the left hand side as
\begin{align*}
|\widehat{\b\lambda}_i'\widehat{\b F}_t - \b\lambda_i'\b F_t|&\leq\|\widehat{\b\lambda}_i - \b H\b\lambda_i\|\|\widehat{\b F}_t - \b H\b F_t\|  + \|\b H\b\lambda_i\|\|\widehat{\b F}_t - \b H\b F_t\| \\
 &\qquad+  \|\widehat{\b\lambda}_i - \b H\b\lambda_i\|\|\b H\b F_t\| +\|\b\lambda_i\|\|\b F _t\|\|\b H'\b H - \b I_r\|.
\end{align*}

Now, we bound in probability, uniformly in $i\leq n$ and $t\leq T$, each of the four terms above. The first one is given by parts (a) and (b). $\max_{i\leq n}\|\b H\b\lambda_i\|\leq \|\b H\| \max_{i\leq n}||\b \lambda_i\|\lsim_\P r\|\b\Lambda\|_{\max} \lsim_\P 1$ by Lemma \ref{L:10FLM}(b) and  Assumption \ref{Ass:Factor_Model}(d). Thus, the second term is bounded by part (a). For the third term, $\max_{t\leq T}\|\b H\b F_t\|\leq \|\b H\|\max_{t\leq T}||\b F_t\|\lsim_\P g(T)$ by Lemma \ref{L:10FLM}(b) and Assumption \ref{Ass:Factor_Model}. Finally, $\|\b H'\b H - \b I_r\| \lsim_\P 1/\sqrt{T} + 1/\sqrt{n} + \varrho_R$ by Lemma \ref{L:10FLM}(c). Hence, the last term is $\lsim_\P g(T)(1/\sqrt{T} + 1/\sqrt{n} + \varrho_R)$ by Assumptions \ref{Ass:Factor_Model}(d) and \ref{Ass:Moments}.

\subsection{Proof of Theorems \ref{T:Oracle_bounds} and \ref{T:Oracle_bounds_exp}}

For $i\in[n]$,  let $\m{Q}_i(\b a):=\|\widehat{\b U}_{i,\cdot}-\b a' \widehat{\b W}_{i,\cdot}\|_2^2/T$ for $\b a\in\R^d$. We have that $L(\widehat{\b\theta}_i)+\xi\|\widehat{\b\theta}_i\|_1\leq \m{Q}(\b \theta_i) +\xi\|\b \theta_i\|_1$ by definition of $\widehat{\b\theta}_i$. Also,  since $\m{Q}_i(\b\theta)$ is a quadratic function,  it implies that $(\widehat{\b\theta}_i -\b\theta_i)'\widehat{\b\Sigma}_i(\widehat{\b\theta}_i -\b\theta_i)\leq  2\widehat{\b E}_i'(\widehat{\b\theta}_i -\b\theta_i)+\xi(\|\b\theta_i\|_1 - \|\widehat{\b\theta}_i\|_1)$ where $\widehat{\b\Sigma}_i:= \widehat{\b W}_{i,\cdot} \widehat{\b W}_{i,\cdot}'/T$ and $\widehat{\b E}_i :=(\widehat{\b U}_{i,\cdot}-\b \theta_i' \widehat{\b W}_{i,\cdot})'\widehat{\b W}_{i,\cdot}/T$.  By Holder's inequality, we have $|\widehat{\b E}_i'(\widehat{\b\theta}_i -\b\theta_i)|\leq \|\widehat{\b E}_i\|_\infty\|\widehat{\b\theta}_i -\b\theta_i\|_1$. Then, for $\xi\geq 4\|\widehat{\b E}_i\|_\infty$ we have
 \begin{equation}\label{E:basic_inequality}
 (\widehat{\b\theta}_i -\b\theta_i)'\widehat{\b\Sigma}_i(\widehat{\b\theta}_i -\b\theta_i)\leq \xi/2\|\widehat{\b\theta}_i -\b\theta_i\|_1 +\xi(\|\b\theta_i\|_1 - \|\widehat{\b\theta}_i\|_1).
 \end{equation}
For any index set $\S\subseteq [n]$, by the decomposability of the $\ell_1$ norm (refer to Definition 1 in \cite{Negahban2012}) followed by the triangle inequality, we have $\|\widehat{\b\theta_i}\|_1= \|\widehat{\b\theta}_{i,\S}\|_1 +\|\widehat{\b\theta}_{i,\S^c}\|_1 \geq \|\b\theta_{i,\S}\|_1 - \|\widehat{\b\theta}_{i,\S} -\b\theta_{i,\S}\|_1+\|\widehat{\b\theta}_{i,\S^c}\|_1 $ and $\|\widehat{\b\theta}_i -\b\theta_i\|_1 = \|\widehat{\b\theta}_{i,\S} -\b\theta_{i,\S}\|_1 + \|\widehat{\b\theta}_{i,\S^c} -\b\theta_{i,\S^c}\|_1\leq \|\widehat{\b\theta}_{i,\S} -\b\theta_{i,\S}\|_1 + \|\widehat{\b\theta}_{i,\S^c} -\b\theta_{i,\S^c}\|_1  $. Plugging it back in \eqref{E:basic_inequality} yields
 \begin{equation}\label{E:basic_inequality_after_decomposition}
 2(\widehat{\b\theta}_i-\b\theta_i)'\widehat{\b\Sigma}_i(\widehat{\b\theta}_i -\b\theta_i) + \xi\|\widehat{\b\theta}_{i,\S^c} -\b\theta_{i,\S^c}\|_1\leq
  3\xi\|\widehat{\b\theta}_{i,\S} -\b\theta_{i,\S}\|_1 +4\xi\|\b\theta_{i,\S^c}\|_1.
 \end{equation}
We then conclude that any minimizer of \eqref{E:LASSO_obj_fun} obeys $\widehat{\b\theta}_i -\b\theta_i \in \C(\m{S}_i,3):=\{\b x\in\R^d:\|\b x_{\S^c_i}\|_1\leq 3\|\b x_{\S_i}\|_1\|_1\}$ where $\S_i:=\{j:\theta_{i,j}\neq 0\}$ is the support of $\b\theta_i$.

Recall the definition of the compatibility constant appearing in \cite{vandegeer2009}, reproduced below for convenience.
\begin{Def}\label{D:GIF}
For an $n\times n$ matrix $\b{M}$, a set $\S\subseteq\{1,\ldots,n\}$  and a scalar $\zeta \geq 0$, the compatibility constant is given by
\begin{equation}\label{E:GIF}
\kappa(\b{M},\S,\zeta):=\inf\left\{\frac{\|\b{x}\|_{\b{M}}\sqrt{|\S|}}{\|\b{x}_\S\|_1}:\b{x}\in\R^n:\|\b{x}_{\S^c}\|_1\leq\xi\|\b{x}_{\S}\|_1 \right\},
\end{equation}
where $\|\b{x}\|_{\b{M}} = \sqrt{\b{x}'{\b{M}} \b{x}}$.
Moreover, we say that $(\b{M},\S,\zeta)$ satisfies the compatibility condition if $\kappa(\b M,\S,\zeta)>0$.
\end{Def}
Notice that the square of the compatibility constant is close related to the minimum of the $\ell_1$-norm of the eigenvalues of $\b M$, restricted to a cone in $\R^n$. By definition of the compatibility constant $\widehat{\kappa}_i:=\kappa( \widehat{\b\Sigma}_i,\m{S}_0,3)$, we have that $\|\widehat{\b\theta}_{i,\S} -\b\theta_{i,\S}\|_1\leq \sqrt{(\widehat{\b\theta}_i -\b\theta_i)'\widehat{\b\Sigma}_i(\widehat{\b\theta}_i -\b\theta_i)} \sqrt{|\S_i|}/\widehat{\kappa}_i$. Apply this inequality to \eqref{E:basic_inequality_after_decomposition} and use the fact that $4ab<a^2 + 4b^2$ for non-negative $a,b\in\R$, to obtain
\begin{equation}\label{E:oracle_hat}
(\widehat{\b\theta}_i -\b\theta_i)'\widehat{\b\Sigma}_i(\widehat{\b\theta}_i -\b\theta_i) + \xi\|\widehat{\b\theta}_{i} -\b\theta_i\|_1\leq 4\xi^2|\S_i|/\widehat{\kappa}_i^2,
\end{equation}
provided that $\xi\geq 4\|\widehat{\b E}_i\|_\infty$.

Let $\b\Sigma_i:=\E(\b W_{i,t} \b W_{i,t}')$ and $\kappa_i:=\kappa( \b\Sigma_i,\m{S}_0,3)$ .  Note that $\min_i\kappa_i^2\geq b^2>0$ under Assumption (\ref{Ass:Moments}.g) because
\[\lambda_{\min}\left[\E\left(\b{\m{U}}_t\b{\m{U}}_t'\right)\right]\leq\min_i \lambda_{\min}(\b\Sigma_i)\leq \min_i \min_{\b x\in\C(\m{S}_i,3)}\left\{\frac{\b x'\b\Sigma_i \b x}{\b x'\b x}\right\}\leq \min_i \kappa_i^2,\]
where in the first inequality we use Cauchy interlacing theorem and in the last on we use the fact that $\b x' \b x \geq \b x_{\m{S}_i}' \b x_{\m{S}_i}$ and the bound $\|\b x_{\m{S}_i}\|_1\leq \sqrt{|\m{S}_i}\|\b x_{\m{S}_i}\|_2$.
Also, we may use Lemma \ref{L:compatibility_cond} with $\zeta =3$ and $\alpha =1/2$ to assert that if $\|\widehat{\b\Sigma}_i-\b\Sigma_i\|_{\max}\leq \xi_1$ for some $\xi_1>0$, such that $32\xi_1 |\m{S}_i|/\kappa_i^2\leq 1$ then, $\widehat\kappa_i^2 \geq \kappa_i^2/2$ and $\b x'\b\Sigma_i \b x/2\leq  \b x'\widehat{\b \Sigma}_i \b x$ for $\b x\in\C(\m{S}_i,3)$.  Use the last three inequalities in \eqref{E:oracle_hat} to conclude that, for $i\in[n]:$
\begin{equation}\label{E:oracle}
\tfrac{1}{2}(\widehat{\b\theta}_i -\b\theta_i)'\b\Sigma_i(\widehat{\b\theta}_i -\b\theta_i) + \xi\|\widehat{\b\theta}_{i} -\b\theta_i\|_1\leq 8\xi^2|\S_i|/b^2,
\end{equation}
provided  that $\xi\geq 4\|\widehat{\b E}_i\|_\infty$ and $\|\widehat{\b\Sigma}_i-\b\Sigma_i\|_{\max}\leq \xi_1$ such that $32\xi_1 |\m{S}_i|/b\leq 1$.

\subsubsection{Probabilistic Bounds}

We now bound in probability the events $\{\xi\geq 4\|\widehat{\b E}_i\|_\infty\}$ and $\{\|\widehat{\b\Sigma}_i-\b\Sigma_i\|_{\max}\leq \xi_1\}$,  uniformly in $i\in[n]$. For the former, by definition of $\b W_{i,t}$,  we have $\max_{i,t}\|\b W_{i,t}\|_\infty\leq \|\b U\|_{\max}$ and $\max_{i,t}\|\widehat{\b W}_{i,t} - \b W_{i,t}\|_{\infty}\leq \|\widehat{\b U} - \b U\|_{\max}$.  Also,  by the definition of $V_{i,t}$, the triangle inequality followed by Hölder's inequality,  we have $\max_{i,t}|V_{i,t}|\leq \max_{i,t}|U_{i,t}| +\max_{i,t}\|\b\theta_i\|_1\b|\b W_{i,t}\|_\infty\leq (1+\max_i\|\b\theta_i\|_1)\|\b U\|_{\max}$.  Similarly, we conclude $\max_{i,t}|\widehat{V}_{i,t} - V_{i,t}| \leq  (1+\max_i\|\b\theta_i\|_1)\|\widehat{\b U} - \b U\|_{\max}$.

First, we decompose
\[T\widehat{\b E}_i=\b W_{i,\cdot} \b V_{i,\cdot} + \b W_{i,\cdot} (\widehat{\b V}_{i,\cdot}-\b V_{i,\cdot})+(\widehat{\b W}_{i,\cdot}  -\b W_{i,\cdot} )\b V_{i,\cdot}+(\widehat{\b W}_{i,\cdot}  -\b W_{i,\cdot} )(\widehat{\b V}_{i,\cdot}-\b V_{i,\cdot}),\]
and bound each term individually as
\begin{align*}
\max_i\|\b W_{i,\cdot} (\widehat{V}_{i,\cdot}-V_{i,\cdot})/T\|_\infty&\leq\max_{i,t}\|\b W_{i,t}\|_\infty\max_{i,t}|\widehat{V}_{i,t} - V_{i,t}| \\ &\leq (1+\max_{i} \|\b\theta_i\|_1)\|\b U\|_{\max}\|\widehat{\b U} - \b U\|_{\max},
\end{align*}
\begin{align*}
\max_i\|(\widehat{\b W}_{i,\cdot}  -\b W_{i,\cdot} )V_{i,\cdot}/T\|_\infty&\leq\max_{i,t}\|\widehat{\b W}_{i,t}  -\b W_{i,t} \|_\infty\max_{i,t}| V_{i,t}| \\ &\leq (1+\max_{i} \|\b\theta_i\|_1)\|\b U\|_{\max}\|\widehat{\b U} - \b U\|_{\max},
\end{align*}
and
\begin{align*}
\max_i\|(\widehat{\b W}_{i,\cdot}  -\b W_{i,\cdot} )(\widehat{V}_{i,\cdot}-V_{i,\cdot})/T\|_\infty&\leq\max_{i,t}\|\widehat{\b W}_{i,t} -\b W_{i,t}\|_\infty\max_{i,t}|\widehat{V}_{i,t} - V_{i,t}| \\ &\leq (1+\max_{i} \|\b\theta_i\|_1)\|\widehat{\b U} - \b U\|_{\max}^2.
\end{align*}

Then,
\begin{align}\label{E:deviation_bound_decomposition}
\max_{i}\|\widehat{\b E}_i\|_\infty&\leq \max_{i}\|\b W_{i,\cdot} \b V_{i,\cdot} /T\|_\infty\\
&\qquad +(1+\max_{i} \|\b\theta_i\|_1)(2\|\b U\|_{\max}\|\widehat{\b U} - \b U\|_{\max} +\|\widehat{\b U} - \b U\|_{\max}^2 )\nonumber.    
\end{align}

We now bound $\viii{\b W_{i,t}}_\psi $ and $\viii{V_{i,t}}_\psi$. For the former, $\viii{\b W_{i,t}}_\psi \leq \sup_t \viii{\b U_{\cdot,t}}_\psi$ and for the latter
\begin{align}
    \viii{V_{i,t}}_\psi &= \viii{U_{i,t}-\b\theta_i' \b W_{i,t}}_\psi\nonumber\\
    &\leq  \viii{U_{i,t}}+\viii{\b\theta_i' \b W_{i,t}}_\psi\nonumber\\
    &\leq  \viii{U_{i,t}}+\|\b \theta_i\|_2\viii{ \b W_{i,t}}_\psi\nonumber\\
    &\leq  (1+\|\b \theta_i\|_2)\sup_t\viii{ \b U_{\cdot,t}}_\psi.\label{E:V_bound}
\end{align}

By definition, we have that $\b W_{i,t}V_{i,t} = f_i(\b U_{\cdot,t},\dots, \b U_{\cdot, t-l})$ for some mapping $f_i$, for each $i\in[n]$ and $t\in[T]$. Then, by Lemma \ref{L:mixing_lag}, we have that the strong mixing coefficients of the sequence $\{W_{i,t}^{(j)}V_{i,t}\}_t$ are such that $\widetilde{\alpha}_{m}\leq \alpha_{(m-l)\lor 0}$, where $W_{i,t}^{(j)}$ denote the $j$-th component of $\b W_{i,t}$ for $j\in[d]$. 

Under Assumption \ref{Ass:Moments}(c), by the Cauchy-Schwartz inequality and the last two bounds, we have
\begin{align*}
\viii{W_{i,t}^{(j)}V_{i,t}}_{p/2+\epsilon/2}&\leq \viii{W_{i,t}^{(j)}}_{p+\epsilon}\viii{V_{i,t}}_{p+\epsilon}\\
&\leq (1+\|\b \theta_i\|_2)\sup_t\viii{ \b U_{\cdot,t}}_{p+\epsilon}^2\leq C^2\left(1+\max_i\|\b \theta_i\|_2\right).
\end{align*}

Therefore, by Lemma \ref{L:tail_bound_sup_poly} followed by Lemma \ref{L:mixing_lag}, we conclude that for $p\in[4,\infty)$
\[
\max_{i,j}|\sum_{t=1}^T W_{i,t}^{(j)}V_{i,t}| \lsim_\P\left(1+\max_i\|\b \theta_i\|_2\right) \mathscr{L}_\alpha\left[n(l+1)\right]^{2/p}\sqrt{T},
\]
where $\mathscr{L}_\alpha:=\big(\sum_{t=0}^{T-1}(m+1)^{(p/2)-2}\alpha_{(m-l)\lor 0}^{1-p/(p+\epsilon)}\big)^{2/p}$. By assumption $\|\widehat{\b U}- \b U\|_{\max} \lsim_\P\varrho_U$, and the union bound followed by Markov's inequality give us $\|\b U\|_{\max} \lsim_\P nT^{1/p}$. Then, from \eqref{E:deviation_bound_decomposition}, we obtain
\begin{align}
\max_{i}\|\widehat{\b E}_i\|_\infty&\lsim_\P (1+\max_i\|\b \theta_i\|_2) \mathscr{L}_\alpha\big[n(l+1)\big]^{2/p}T^{-1/2}\\
&\qquad +(1+\max_{i} \|\b\theta_i\|_1)\left[(nT)^{1/p}\varrho_U + \varrho_U^2\right]\nonumber=:\xi_0.  
\end{align}

Similarly, under Assumption \ref{Ass:Moments}(d) and Lemma \ref{L:CS_exp} we have
\begin{align*}
\viii{W_{i,t}^{(j)}V_{i,t}}_{e^{\gamma_2/2}}&\lsim\viii{W_{i,t}^{(j)}}_{e^{\gamma_2}}^2\lor \viii{V_{i,t}}_{e^{\gamma_2}}^2\\
&\lsim  (1+\|\b \theta_i\|_2)\sup_t\viii{ \b U_{\cdot,t}}_{e^{\gamma_2}}
\lsim(1+\max_i\|\b \theta_i\|_2).
\end{align*}

Therefore, by Lemma \ref{L:tail_bound_sup_exp} we conclude that 
\[
\max_{i,j}\left|\sum_{t=1}^T W_{i,t}^{(j)}V_{i,t}\right| \lsim_\P(1+\max_i\|\b \theta_i\|_2) \sqrt{T\log [n(l+1)]}.
\]
The union bound followed by Lemma \ref{L:Orlicz_equivalence} give us $\|\b U\|_{\max} \lsim_\P [\log (nT)]^{1/\gamma_2}$. Finally, we use \eqref{E:deviation_bound_decomposition} once again to obtain
\begin{align}
\max_{i}\|\widehat{\b E}_i\|_\infty&\lsim_\P (1+\max_i\|\b \theta_i\|_2) \sqrt{\log[n(l+1)]}T^{-1/2}\\
&\qquad +(1+\max_{i} \|\b\theta_i\|_1)\left\{[\log(nT)]^{1/\gamma_2}\varrho_U + \varrho_U^2\right\}\nonumber:=\xi_0.  
\end{align}

We now bound $\max_i\|\widehat{\b\Sigma}_i-\b\Sigma_i\|_{\max}$. Let $\widetilde{\b \Sigma}_i:=\b W_{i,\cdot}\b W_{i,\cdot}'/T$. From the definition of $\b W_{i,t}$ and Lemma \ref{L:compatibility_cond} we have that 
\[\max_i\|\widehat{\b\Sigma}_i-\widetilde{\b\Sigma}_i\|_{\max}\leq \|\widehat{\b U} - \b U\|_{\max}(2\|\b U\|_{\max}  + \|\widehat{\b U} - \b U\|_{\max}).\]
Also by definition, we have that $\b W_{i,t}\b W_{i,t}' = g_i(\b U_{\cdot,t},\dots, \b U_{\cdot, t-l})$ for some function $g_i$ for each $i\in[n]$ and $t\in[T]$. Then, by Lemma \ref{L:mixing_lag} we have that the strong mixing coefficients of the sequence $\{\b W_{i,t}\b W_{i,t}'\}_t$, $\check{\alpha}_{m}\leq \alpha_{(m-l)\lor 0}$. Let $W_{i,t}^{(j)}$ denote the $j$-th component of $\b W_{i,t}$ for $j\in[d]$.

Under Assumption \ref{Ass:Moments}(c) and  the Cauchy-Schwartz inequality we have
\[
\viii{W_{i,t}^{(j_1)}W_{i,t}^{(j_2)}}_{p/2+\epsilon/2}\leq \viii{W_{i,t}^{(j_1)}}_{p+\epsilon}\viii{W_{i,t}^{(j_2)}}_{p+\epsilon}\leq \sup_t\viii{ \b U_{\cdot,t}}_{p+\epsilon}^2\leq C^2;\qquad j_1,j_2\in[d].
\]

Therefore, by Lemma \ref{L:tail_bound_sup_poly} followed by Lemma \ref{L:mixing_lag}, we conclude that for $p\in[4, \infty)$
\[\max_{i,j_1,j_2}\left|\sum_{t=1}^T W_{i,t}^{(j)} W_{i,t}^{(j_2)} -\E\left(W_{i,t}^{(j)} W_{i,t}^{(j_2)}\right)\right| \lsim_\P \mathscr{L}_\alpha\left[n(l+1)\right]^{4/p}\sqrt{T},\]
where $\mathscr{L}_\alpha$ is defined above. Hence, by the triangle inequality
\begin{align}
    \max_i\|\widehat{\b\Sigma}_i-\b\Sigma_i\|_{\max}&\leq  \max_i\|\widetilde{\b\Sigma}_i-\b\Sigma_i\|_{\max} + \max_i\|\widehat{\b\Sigma}_i-\widetilde{\b\Sigma}_i\|_{\max}\\
    &\lsim_\P \frac{\mathscr{L}_\alpha\big[n(l+1)\big]^{4/p}}{\sqrt{T}} +  (nT)^{1/p}\varrho_U + \varrho_U^2=:\xi_1\nonumber.
\end{align}

Similarly, under Assumption \ref{Ass:Moments}(d) and Lemma \ref{L:CS_exp} we have, for $j_1,j_2\in[d]$, 
\[
\viii{W_{i,t}^{(j_1)}W_{i,t}^{(j_2)}}_{e^{\gamma_2/2}}\lsim \viii{W_{i,t}^{(j_1)}}_{e^{\gamma_2}}\lor \viii{W_{i,t}^{(j_2)}}_{e^{\gamma_2}}\leq C_{\gamma_2}\sup_t\viii{ \b U_{\cdot,t}}_{e^{\gamma_2}}^2\leq C_{\gamma_2}C.
\]
Thus, by Lemma \ref{L:tail_bound_sup_exp} we conclude that 
\[\max_{i,j_1,j_2}\left|\sum_{t=1}^T W_{i,t}^{(j)} W_{i,t}^{(j_2)} -\E\left(W_{i,t}^{(j)} W_{i,t}^{(j_2)}\right)\right|\lsim_\P \sqrt{T\log [n(l+1)]}.\]
Therefore, by the triangle inequality
\begin{align}
    \max_i\|\widehat{\b\Sigma}_i-\b\Sigma_i\|_{\max}&\leq  \max_i\|\widetilde{\b\Sigma}_i-\b\Sigma_i\|_{\max} + \max_i\|\widehat{\b\Sigma}_i-\widetilde{\b\Sigma}_i\|_{\max}\\
    &\lsim_\P \sqrt{\frac{\log [n(l+1)]}{T}} +  [\log (nT)]^{1/\gamma_2}\varrho_U + \varrho_U^2=:\xi_1\nonumber.
\end{align}

\subsection{Proof of Theorems \ref{T:prediction_bounds} and \ref{T:prediction_bounds_exp}}
Decompose the prediction error as
\begin{align*}
\widehat{Y}_{i,t}-\widetilde{Y}_{i,t}&=(\widehat{\b\gamma}_i-\b\gamma_i)'\b X_{i,t} +\widehat{\b\lambda}_i'\widehat{\b P}\widehat{\b G}_t - \b\lambda_i'\b P\b G_t +\widehat{\b\theta}_i' \widehat{\b W}_{i,t} - \b\theta_i'\b W_{i,t} \\
&= (\widehat{\b\gamma}_i-\b\gamma_i)'\b X_{i,t}+\big[\b\lambda_i'\b P + (\widehat{\b\lambda}_i'\widehat{\b P}-\b\lambda_i'\b P)\big](\widehat{\b G}_t-\b G_t)- (\widehat{\b\lambda}_i'\widehat{\b P}-\b\lambda_i'\b P)\b G_t\\
&\qquad +\big[\b\theta_i +(\widehat{\b\theta}_i- \b\theta_i) \big]' (\widehat{\b W}_{i,t} - \b W_{i,t}) + (\widehat{\b\theta}_i- \b\theta_i)'\b W_{i,t},
\end{align*}
and
\begin{align*}
    \widehat{\b\lambda}_i'\widehat{\b P}-\b\lambda_i'\b P &=  \big[\b\lambda_i +(\widehat{\b\lambda}_i - \b\lambda_i)\big](\widehat{\b P} - \b P) + (\widehat{\b\lambda}_i - \b\lambda_i)\b P.
\end{align*}
For the first term we have
\begin{align*}
   \max_{i\in[n]}\left|(\widehat{\b\gamma}_i-\b\gamma_i)'\b X_{i,t}\right|\leq \max_{i\in[n]} \|\widehat{\b\gamma}_i-\b\gamma_i\|_1\max_{i\in[n]}\|\b X_{i,t}\|_\infty \lsim_\P \varrho_\gamma\max_{i\in[n]}\|\b X_{i,t}\|_\infty).
\end{align*}
For the second term, 
\[
\max_{i\in[n]}|\widehat{\b\lambda}_i'\widehat{\b P}-\b\lambda_i'\b P |\lsim_\P (\varrho_ P + \varrho_\lambda)
\] 
and, therefore,  
\[
\max_{i\in[n]}|\widehat{\b\lambda}_i'\widehat{\b  P}\widehat{\b G}_t - \b\lambda_i'\b P\b G_t|\lsim_\P (\varrho_ P + \varrho_\lambda +\varrho_F).
\]

Finally, for the last term, Hölder's inequality yields
\begin{align*}
    |\widehat{\b\theta}_i' \widehat{\b W}_{i,t} - \b\theta_i'\b W_{i,t}|
    &\leq(\|\b\theta_i\|_1 +\|\widehat{\b\theta}_i- \b\theta_i\|_1 ) \|\widehat{\b W}_{i,t} - \b W_{i,t}\|_\infty + \|\widehat{\b\theta}_i- \b\theta_i\|_1\|\b W_{i,t}\|_\infty.
\end{align*}
As a consequence, 
\[\max_{i\in[n]}|\widehat{\b\theta}_i' \widehat{\b W}_{i,t} - \b\theta_i'\b W_{i,t}|\lsim_\P \max_{i\in[n]} \|\b \theta_i\|_1\varrho_U + \varrho_\theta\max_{i\in[n]}\|\b W_{i,t}\|_\infty
\]
and
\begin{align*}
\max_{i\in[n]}|\widehat{Y}_{i,t}-\widetilde{Y}_{i,t}|&\lsim_\P \varrho_\gamma\max_{i\in[n]}\|\b X_{i,t}\|_\infty + \varrho_U\max_{i\in[n]}  \|\b \theta_i\|_1 + \varrho_\theta\max_{i\in[n]}\|\b W_{i,t}\|_\infty + \varrho_ P + \varrho_\lambda +\varrho_F.
\end{align*}
Finally, $\max_{i\in[n]}\|\b X_{i,t}\|_\infty\lsim_\P n^{1/p}$ in the polynomial case and $\max_{i\in[n]}\|\b X_{i,t}\|_\infty\lsim_\P (\log n)^{1/\gamma_2}$ in the exponential one. Similarly, $\max_{i\in[n]}\|\b W_{i,t}\|_\infty\lsim_\P d_W^{1/p}$ in the polynomial case and $\max_{i\in[n]}\|\b W_{i,t}\|_\infty\lsim_\P(\log d_W)^{1/\gamma_2}$ in the exponential one.

\subsection{Proof of Theorems \ref{T:inference_cov} and \ref{T:inference_cov_exp}}

For $\m{D}\in[n]^2$,  let $\b N_t:=(N_{1,t},\dots,N_{d,t})' :=\left[U_{i,t}U_{j,t} - \E \left(U_{i,t}U_{j,t}\right):(i,j)\in\m{D}\right]$ and $\widehat{\b N}_t:=\left[\widehat{U}_{i,t}\widehat{U}_{j,t} - \E\left(U_{i,t}U_{j,t}\right):(i,j)\in\m{D}\right]$  where $d:=|\m{D}|$.  Then,
\[
\widetilde{\b J} = \frac{1}{\sqrt{T}}\sum_{t\in[T]}\b N_t\quad\textnormal{and}\quad \widehat{\b J} = \frac{1}{\sqrt{T}}\sum_{t\in[T]}\widehat{\b N}_t.
\]
For the polynomial case (Theorems \ref{T:inference_cov}), by Cauchy Schwartz inequality and Assumption \ref{Ass:Moments}, \[
\viii{U_{i,t}U_{j,t}}_{(p+\epsilon)/2}\leq\viii{U_{i,t}}_{p+\epsilon}\viii{U_{j,t}}_{p+\epsilon}\leq C^2.
\]
Then, $\viii{N_{j,t}}_{(p+\epsilon)/2}\leq 2C^2$ for $j\in[d]$; for the exponential case (Theorem \ref{T:inference_cov_exp}). By Lemma \ref{L:CS_exp}  and Assumption \ref{Ass:MomentsExp}, $\viii{U_{i,t}U_{j,t}}_{e^{\gamma_2/2}}\lsim 1$. Then, $\viii{N_{j,t}}_{e^{\gamma_2/2}}\lsim 1$ for $j\in[d]$. Also, the mixing coefficient of $\{\b N_t,t\in[T]\}$ are upper bounded by those of $\{\b U_t,t\in[T]\}$ by Lemma \ref{L:mixing_lag}.
Therefore, we can apply Theorem \ref{T:CLT_HD}(a) to bound $\rho(\widetilde{\b J},\b{\m{G}})$ with ``$p=p/2$'' and ``$\epsilon= \epsilon/2$'' for the polynomial case and the first result of Theorem \ref{T:inference_cov} follows. Similarly,  we can apply Theorem \ref{T:CLT_HD}(b) to bound $\rho(\widetilde{\b J},\b{\m{G}})$ with ``$\gamma_1=\gamma_1$'' and ``$\gamma_2= \gamma_2/2$'' for the exponential case and the first result of Theorem \ref{T:inference_cov_exp} follows.

By triangle inequality followed Lemma \ref{L:useful_decompostion} (applied twice) and Corollary 1 in \cite{cck_anti-concentration} we have
\begin{align*}
    \rho(\widehat{\b J},\b{\m{G}})&\leq \rho(\widehat{\b J}, \widetilde{\b J}) + \rho(\widetilde{\b J},\b{\m{G}})\\
    &\lsim \rho(\widetilde{\b J},\b{\m{G}}) + \inf_{\delta_1>0}\left\{\Delta(\b{\m{G}},\delta_1) + \P(\|\widehat{\b J} - \widetilde{\b J}\|_\infty>\delta_1)\right\}\\
    &\lsim\rho(\widetilde{\b J},\b{\m{G}}) + \inf_{\delta_1>0}\left\{\delta_1\sqrt{1\lor \log d} + \P(\|\widehat{\b J} - \widetilde{\b J}\|_\infty>\delta_1)\right\},
\end{align*}
where $\Delta(\cdot,\cdot)$ is defined by \eqref{E:levy_coef}. Recall that $\b\Upsilon$ is the variance of $\b N_t$ and  $ \b{\m{G}}^*|\b X,\b Y\sim N(\b 0,\widehat{\b\Upsilon})$. Then, on the event $\{\|\widehat{\b\Upsilon} - \b\Upsilon\|_{\max}\leq \delta_2\}$ for $\delta_2>0$ by Theorem 1.1 in \cite{fang_koike} (see also Lemma 2.1 in \cite{clt_hd} for this particular application)
\[\sup_{t\in\R}|\P(\|\b{\m{G}}\|_\infty\leq t)-\P(\| \b{\m{G}}^*\|_\infty\leq t|\b X,\b Y)|\lsim\delta_2 \log d(1\lor |\log d|),\]
and
\[\rho(\b{\m{G}}, \b{\m{G}}^*)\lsim \inf_{\delta_2>0}\left\{\delta_2 \log d(1\lor |\log d|) + \P(\|\widehat{\b\Upsilon} - \b\Upsilon\|_{\max}>\delta_2)\right\}.\]

\subsection{Proof of Corollary \ref{C:inference_cov}}

Since $S^*|\b X,\b Y$ has no point-mass, we have that $\tau = \P(S^*\leq c^*(\tau)|\b X,\b Y)$ almost surely for every $\tau\in(0,1)$ and $\m{D}\in[n]^2$. Then, under the Null, using the notation as in the proof of Theorem \ref{T:inference_cov}, we write for $\delta_1,\delta_2>0$:
\begin{align*}
\sup_\m{D}\sup_\tau\left|\P\left[S\leq c^*(\tau)\right] -\tau\right| &=  \sup_\m{D}\sup_\tau\left|\P\left[S\leq c^*(\tau)\right] -\P(S^*\leq c^*(\tau)\right|\\
&\leq  \sup_\m{D}\rho(\widehat{\b J}, \b Z^*)\\
&\lsim \varrho_\Sigma + \delta_1\sqrt{1\lor \log n} + \sup_\m{D}\P(\|\widehat{\b J} - \widetilde{\b J}\|_\infty>\delta_1)\\
&\quad + \delta_2 \log n(1\lor |\log n|) + \sup_\m{D}\P(\|\widehat{\b\Upsilon} - \b\Upsilon\|_{\max}>\delta_2).
\end{align*}

Let $\gamma_1$ and $\gamma_2$ denote the rate appearing on Lemmas \ref{L:mean_bound} and \ref{L:NW_bound}, respectively, with $\varrho_U$ equals to the rate of Collorary \ref{C:LS+FM} and $\m{D}=[n]^2$ . By assumption, $\varrho_\Sigma\lor  (\log n)^{3/2}\gamma_1\lor(\log n)^3\gamma_2 = o(1)$ as  $T,n\to\infty$ when we set $\delta_1=\gamma_1\log n$ and $\delta_2=\gamma_2\log n$. Therefore, all the terms in the last expression vanish in probability, and the result follows.

\subsection{Proof of Theorems \ref{T:inference_partial_cov} and \ref{T:inference_partial_cov_exp}}

The proof is parallel to the proof of Theorem \ref{T:inference_cov}. We outline the main differences. For $\m{D}\in[n]^2$,  let $\b K_t:=(K_{1,t},\dots,K_{d,t})' :=\left[V_{i,j,t}V_{j,i,t} - \E (V_{i,j,t}V_{j,i,t}):(i,j)\in\m{D}\right]$ and $\widehat{\b K}_t:=\left[\widehat{V}_{i,j,t}\widehat{V}_{j,i,t} - \E (V_{i,j,t}V_{j,i,t}):(i,j)\in\m{D}\right]$  where $d:=|\m{D}|$.  Then,
\[
\widetilde{\b Q} = \tfrac{1}{\sqrt{T}}\sum_{t\in[T]}\b K_t\quad\textnormal{and}\quad \widehat{\b Q} = \tfrac{1}{\sqrt{T}}\sum_{t\in[T]}\widehat{\b K}_t.\]
For the polynomial case (Theorem \ref{T:inference_partial_cov}), by Cauchy Schwartz inequality, the bound in \eqref{E:V_bound} with $\b W_{i,t} = \b U_{-ij,t}$, and Assumption \ref{Ass:Moments}, we write \[
\viii{V_{i,j,t}V_{j,i,t}}_{(p+\epsilon)/2}\leq\viii{V_{i,j,t}}_{p+\epsilon}\viii{V_{j,i,t}}_{p+\epsilon}\leq C^2_\chi,
\]
where $C_\chi:=(1+\max_{i,j\in\m{D}}\|\b \chi_{i,j}\|_2)C$. Then, $\viii{K_{j,t}}_{(p+\epsilon)/2}\leq 2C^2_\chi$ for $j\in[d]$. Also, the mixing coefficient of $\{\b K_t,t\in[T]\}$ are upper bounded by those of $\{\b U_t,t\in[T]\}$ by Lemma \ref{L:mixing_lag}.
Therefore, we can apply Theorem \ref{T:CLT_HD}(a) to bound $\rho(\widetilde{\b Q},\b{\m{H}})$ with ``$p=p/2$'' and ``$\epsilon= \epsilon/2$'' and the first result of Theorem \ref{T:inference_partial_cov} follows. The other two results can be obtained as in the proof do Theorem \ref{T:inference_cov}, with $\widehat{\b J}$, $\b{\m{G}}$, and $\b{\m{G}}^*$ replaced by $\widehat{\b Q}$, $\b{\m{H}}$, and $\b{\m{H}}^*$, respectively.

For the exponential case (Theorem \ref{T:inference_partial_cov_exp}), we replace the Cauchy Schwartz inequality and Assumption \ref{Ass:Moments} used above by Lemma \ref{L:CS_exp} and Assumption \ref{Ass:MomentsExp}, respectively, and conclude that 
\[
\viii{V_{i,j,t}V_{j,i,t}}_{e^{\gamma_2/2}}\lsim\viii{V_{i,j,t}}_{e^{\gamma_2}}^2\lor \viii{V_{j,i,t}}_{e^{\gamma_2}}^2\leq C^2_\chi. 
\]
Finally, we apply Theorem \ref{T:CLT_HD}(b) to bound $\rho(\widetilde{\b Q},\b{\m{H}})$ with ``$\gamma_1 = \gamma_1$'' and ``$\gamma_2= \gamma_2/2$'' and the first of Theorem \ref{T:inference_partial_cov_exp} follows. 

\subsection{Proof of Corollary \ref{C:inference_partial_cov}}

The proof is identical to the proof of Corollary \ref{C:inference_cov} with Lemmas \ref{L:mean_bound} and \ref{L:NW_bound} replaced by Lemmas \ref{L:V_bond} and \ref{L:NW_bound_partial}, respectively.

\section{High-Dimension Central Limit Theorem for Strong Mixing sequences}

The setup for this section is the following.  Let $\{\b X_1,\dots, \b X_T\}$ be a sequence of zero-mean random vectors taking value in $\R^d$, whose strong mixing coefficient we denote by $\{\alpha_n:n\in\N\}$. Define $\b S=T^{-1/2}\sum_{t=1}^T\b X_t$ and let $\b Z\sim N(0,\b\Sigma)$ where $\b\Sigma = \E(\b S \b S')$. The next result upper bounds
\[
\rho(\b S, \b Z):=\sup_{A\in\m{R}}|\P( \b S\in A) - \P(\b Z \in A)|,
\]
where $\m{R}$ is the class of all rectangles in the from $\bigtimes_{j=1}^d (a_j,b_j]$ for some $-\infty\leq a_j\leq b_j\leq \infty$ and $j\in[d]$. We also define for $s>0$
\begin{equation}\label{E:levy_coef}
    \Delta(\b Z,s):= \sup_{z\in\R^d}\{\P(\b Z\leq z+s) - \P(\b Z\leq z)\}.
\end{equation}

Note that we may assume, without loss of generality, that all the elements on the diagonal of $\b\Sigma$ are positive. Otherwise, we could exclude the entries of $\b S$ (and $\b Z)$ that are $0$ almost surely. Also, since $\rho(\b D\b S, \b D\b Z)=
\rho(\b S, \b Z)$ for any diagonal matrix $\b D$, we may assume that all the elements in the diagonal of $\b\Sigma$ are $1$, i.e.,  $\b\Sigma$ is a correlation matrix. Finally,  we assume that $\b \Sigma$ is positive definite and denote by $\sigma_*^2\in(0,1)$ its smallest eigenvalue. 

\begin{theorem}\label{T:CLT_HD}[High-Dimension Central Limit Theorem for Strong Mixing sequences] 

\begin{enumerate}[(a)]
    \item \underline{Polynomial Case}: If for some $p\in[4,\infty)$ and $\epsilon>0$, $\viii{X_{i,t}}_{p+\epsilon}\leq B$ for $i\in[d]$ and $t\in [T]$ for some constant $B$ that might dependent on $T$ and $p$; and $\alpha_n\leq  K n^{-r}$ for $r\geq 0$ and a constant $K$ that might depend on $d$, then
\begin{align*}
    \rho\left(\b S,\b Z\right)&\lsim \frac{B^2}{\sigma_*^2}\left[\frac{\mathscr{A}_2}{\sqrt{T}} + \frac{\mathscr{A}_1}{T^{(1/2)-\kappa}}  +\frac{K^{1-2/p}}{T^{(r/2)(1-2/p) -1}}\right]\log T\log d\\
    &\qquad +\frac{(B\mathscr{A}_{\sqrt{T},4})^2(\log d)^{3/2}\log T}{T^{1/4}\sigma_*^2} + \frac{(B\mathscr{A}_{\sqrt{T},p})^2d^{2/p}(\log d)^2\log T}{T^{1/2-1/p}\sigma_*^2}\nonumber\\
    &\qquad +\frac{\left[ B\mathscr{A}_{\sqrt{T},p} d(\log d)^{(3/2)p-4}\log T \log (dT)  \right]}{T^{1/4}\sigma_*^{p/(p-2)}}^{1/(p-2)},\nonumber\\
    &\qquad+ \frac{1}{T^{r\kappa -1/2}}\left[d^{1/(p+1)}(\mathscr{A}_{T^{\kappa},p}B)^{D_p}\sqrt{1\lor \log d}+ K \right]\nonumber,
\end{align*}
where $\kappa:=\frac{1/2 + D_p/4}{r+D_p/2}\land 1/2$, $D_p := \frac{p}{p+1}$,  $\mathscr{A}_{k,p}:=\left[\sum_{n=0}^{k-1} (n+1)^{p/2-2}\alpha_n^{\epsilon/(p+\epsilon)}\right]^{\tfrac{2}{p}}$ for $k\in\N$, $\mathscr{A}_1:=\sum_{n=0}^{\lfloor T^\kappa \rfloor}\alpha_n^{1-2/p}$, and $\mathscr{A}_2:=\sum_{n=1}^{\lfloor \sqrt{T} \rfloor}n \alpha_n^{1-2/p}$. In particular, if $r>(\tfrac{p+\epsilon}{\epsilon})(p/2-1)\lor \tfrac{2}{1-2/p}$ then $\mathscr{A}_{k,p},\mathscr{A}_{1}$ and $\mathscr{A}_{2}$ can be bounded by constant that does not depend on $T$ or $d$.

\item \underline{Exponential Case}: If $\{X_{i,t}: t\in[T]\}$ fulfills the conditions of Lemma \ref{L:tail_bound_exp} for $i\in [d]$, then for $T\geq 16\lor (C\log d)^{1/\gamma -1/2}$, $d\geq 2$, and $d\sqrt{T}\geq \exp(1/\gamma) - 1 $, 
\begin{align*}
    \rho\left(\b S_{X},\b Z\right)&\lsim \frac{(\log T)^{\gamma_1 + 1} \log d + \big[\log (dT)\big]^{2/\gamma}(\log d)^2\log T}{\sqrt{T}\sigma_*^2} \\
    &\qquad + \frac{(\log d)^{2} +(\log d)^{3/2}\log T+ \log d (\log T)^{\gamma_1+1}\log (dT)}{T^{1/4}\sigma_*^2}.
\end{align*}
\end{enumerate}
\end{theorem}

\begin{proof}
We adapt the classical ``big block-small block'' technique commonly used to prove central limit theorems with dependence. Recall we assume $T\geq 4$ and consider two sequences of positive integers $a:=a_T$ and $b:=b_T$ such that $b\leq a$ and $a+b\leq T/2$. Let $m:=\lceil T/(a+b)\rceil$ and  define for $j\in[m-1]$ consecutive blocks of size $a$ and $b$
with index set $\m{A}_j:=\{((j-1)(a+b)+1,\dots, (j-1)(a+b)+a\}$ and $\m{B}_j:=\{(j-1)(a+b)+a+1,\dots j(a+b)\}$.
Finally, set $\m{A}_{m}:=\{m(a+b) +1, \dots, T\}$, which might be empty. Define,
\[
\b A_j := \sum_{t\in \m{A}_j}\b X_t,\quad j\in[m]\qquad\textnormal{and}\qquad\b B_j = \sum_{t\in\m{B}_j} \b X_t,\qquad j\in[m-1]
\]
such that
\[
\b S:= \b S_X:=\tfrac{1}{\sqrt{T}}\sum_{t=1}^T \b X_t =\tfrac{1}{\sqrt{T}}\sum_{j=1}^{m} \b A_j + \tfrac{1}{\sqrt{T}}\sum_{j=1}^{m-1} \b B_j=:\b S_A + \b S_B.
\]
Also, let $\{\widetilde{\b A}_j:j\in[m]\}$ be an independent sequence such that $\b A_j$ and $\widetilde{\b A}_j$ have the same distribution for $j\in[m]$, and define $\b S_{\widetilde{A}} := T^{-1/2}\sum_{j=1}^{m} \widetilde{\b A}_j$.

We start by applying Lemma \ref{L:useful_decompostion} with $X=\b S_X$, $Y=\b S_A$ and $d(x,y)=\|x-y\|_\infty$ to obtain, for $A\in\m{R}$ and $s>0$,
\begin{align*}
 |\P(\b S_X\in A)- \P(\b S_A\in A)|\leq \P(\|\b S_B\|_\infty>s) + \P(\b S_A\in A^s\setminus A)\lor \P(\b S_A\in A\setminus A^{-s}).
\end{align*}
The right hand side of the last expression can be upper bounded as
\begin{align*}
    \P(\b S_A\in A^s\setminus A) &= \P(\b Z\in A^s\setminus A) + \big[\P(\b S_A\in A^s\setminus A)-\P(\b Z\in A^s\setminus A)\big]\\
    &\leq \P(\b Z\in A^s\setminus A)+2\rho(\b S_A,\b Z)\\
    &\leq \P(\b Z\in A^s\setminus A)+2\rho(\b S_A,\b S_{\widetilde{A}}) + 2\rho(\b S_{\widetilde{A}},\b Z).
\end{align*}
We can proceed similarly to conclude that $\P(\b S_A\in A\setminus A^{-s})\leq \P(\b Z\in A\setminus A^{-s})+2\rho(\b S_A,\b S_{\widetilde{A}}) + 2\rho(\b S_{\widetilde{A}},\b Z)$. Since $\P(\b Z\in A^s\setminus A)\lor\P(\b Z\in A\setminus A^{-s})\leq 2\Delta(\b Z, s)$, we can take the supremum over $A\in\m{R}$ to write, for every $s>0$,
\begin{align*}
\rho(\b S_X,\b S_A)\leq \P(\|\b S_B\|_\infty>s)+2\Delta(\b Z, s) +2\rho(\b S_A,\b S_{\widetilde{A}}) + 2\rho(\b S_{\widetilde{A}},\b Z).
\end{align*}
By the triangle inequality,
\begin{align*}
\rho\left(\b S_X,\b Z\right)&\leq \rho\left(\b S_X,\b S_A\right)  + \rho\left(\b S_A,\b S_{\widetilde{A}}\right) +\rho\left(\b S_{\widetilde{A}},\b Z\right).
\end{align*}
Finally, combine the last two displays we obtain the inequality
\begin{align}
\rho\left(\b S_X,\b Z\right)&\leq 3\rho\left(\b S_{\widetilde{A}},\b Z\right)  +2\Delta(\b Z, s) + \P\left(\|\b S_B \|_\infty>s\right) + 3 \rho\left(\b S_A,\b S_{\widetilde{A}}\right).\label{E:GA_decompose2}
\end{align}
We now proceed to bound each of the terms on the right-hand side.

\textbf{Step 1.} Recall that $\{\widetilde{\b A}_j:j\in[m]\}$ is a zero-mean independent sequence, so we deal with the first term in \eqref{E:GA_decompose2} applying Theorem 2.2 in \cite{clt_hd} in the form
\[
\b S_{\widetilde{A}}:=\frac{1}{\sqrt{T}}\sum_{j=1}^m\widetilde{\b A}_j = \frac{1}{\sqrt{m}}\sum_{j=1}^m\b W_j\quad\textnormal{and}\quad \b W_j:=\sqrt{\frac{m}{T}}\widetilde{\b A}_j,
\]
which give us
 \begin{align}\label{eq:GA_decomposition}
    \rho\left(\b S_{\widetilde{A}},\b Z\right)&\lsim \log m\left[\frac{(\log d) M_\Sigma}{\sigma_*^2} +\frac{(\log d)^{3/2}\sqrt{\mu_4}}{m\sigma_*^2} + \frac{(\m{M}\log d)^2}{m\sigma_*^2}\right]\\
    &\qquad + \inf_{x>0}\left[\frac{\log d\sqrt{\log m \log (dm)H(x)}}{\sqrt{m}\sigma_*^2} + \frac{x(\log d)^{3/2}}{\sqrt{m}\sigma_*}\right],\nonumber
\end{align}
where $M_\Sigma:=\|\b\Sigma_{\b S_{X}} -\b\Sigma_{\b S_{\widetilde{A}}}\|_{\max}$, $\mu_4 := \max_{i\in[d]}\sum_{j=1}^m\viii{W_{j,i}}_4^4$, $\mathcal{M}:=\viii{\max_{j,i} |W_{j,i}|}_4$ and $H(x):=\max_{j\in[m]}\E\|\b W_j\|_\infty^4\1\{\|\b W_j\|_\infty>x\}$ for $x>0$.

From Lemma \ref{L:tail_bound_poly} we have, for $p\in[2,\infty)$ and $j\in[m]$, $\viii{\widetilde{ A}_{j,i}}_{p}=\viii{ A_{j,i}}_{p}=\viii{\sum_{t\in\m{A}_j}X_{t,i}}_{p}\lsim\sqrt{a+b}\mathscr{A}_{a,p} L_{p}\lsim\sqrt{a}\mathscr{A}_{a,p} L_{p}$ where $L_p:=\max_{i\in[d]}\max_{t\in [T]}\viii{X_{t,i}}_{p+\epsilon}$,  $W_{j,i}$ and $X_{t,i}$ denotes the $i$-th element of $\b W_j$ and $\b X_t$ respectively,  for $i\in[d]$. Thus, for $i\in[d]$, $j\in[m]$ and $p\in[2,\infty)$.
\[\viii{ W_{j,i}}_p\lesssim L_p\mathscr{A}_{a,p}\sqrt{\frac{ma}{T}}\leq L_p\mathscr{A}_{a,p}=:R_p.\]

Expression (1,12a) and (1.12b) in \cite{rio2017asymptotic} give us, for $t,s\in[T]$, $i,k\in[d]$ and $p\in[1,\infty]$,
\[
\left|\E(X_{t,i}X_{s,k})\right| \leq 2\alpha_{|t-s|}^{1-2/(p+\epsilon)}\|X_{t,i}\|_{p+\epsilon}\|X_{t,k}\|_{p+\epsilon}\leq 2L_p^2\alpha_{|t-s|}^{1-2/p}.
\]

Recall that $\b S_{\widetilde{A}}^G\sim N(\b 0,\b\Sigma_{\b S_{\widetilde{A}}})$ and $\b Z\sim N(\b 0,\b\Sigma_{\b S_{X}})$, where
\[
\b\Sigma_{\b S_{\widetilde{A}}}=\frac{1}{T}\sum_{j=1}^m \sum_{t\in\mathcal{A}_j} \sum_{s\in\mathcal{A}_j} \E(\b X_t \b X_s')\quad\textnormal{and}\quad \b\Sigma_{\b S_{X}}=\frac{1}{T} \sum_{t=1}^T \sum_{s=1}^T \E( \b X_t \b X_s').
\]
Define $\b\Delta_\ell:= \sum_{i=1}^{a-1}\sum_{j=1}^{a-1} \E (X_{i+\ell} X'_{a+j+\ell})$ for $\ell\geq 1$, then using the bound above we have
\[\|\b\Delta_\ell\|_{\max}\leq 2L_p^2\sum_{m=1}^{a-1} m \alpha_m^{1-2/p},\quad  \]
Then,
\begin{align*}
    M_\Sigma & \leq \tfrac{4 m}{T} \max_\ell\|\b\Delta_\ell\|_{\max} + \frac{m}{T}\max_{j\in[m]}\|\E (\b B_j\b B_j')\|_{\max} + \frac{1}{T}\sum_{|t-s|>a} \E (\b X_t \b X_s') \\
    & \lsim \frac{L_p^2}{T}\left[m\mathscr{A}_2 + mb\mathscr{A}_1(b)  + T^2\alpha_a^{1-2/p}\right],
\end{align*}
where $\mathscr{A}_1(m):=\sum_{n=0}^{m-1}\alpha_n^{1-2/p}$ for $m\geq 1$ and $\mathscr{A}_2:=\sum_{n=1}^{a-1}n \alpha_n^{1-2/p}$.

Also,
\begin{align*}
\mu_4 &\leq m\max_{i\in[d]}\max_{j\in[m]}\viii{W_{j,i}}_4^4\lsim m R_4^4\\
\mathcal{M}&\leq \viii{\max_{j,i} |W_{j,i}|}_p\leq  \left(md\max_{j,i} |W_{j,i}|^p\right)^{1/p}\lsim R_p(md)^{1/p}\\
    H(x) &\leq \max_{j\in[m]}\E\|\b W_j\|_\infty^p/x^{p-4}\lsim R_p^p d/x^{p-4}
\end{align*}
Plug these bounds back into \ref{eq:GA_decomposition} and set $x=\left[\frac{\log m \log (dm) R_p d}{\sigma_*^2\log d}\right]^{1/(p-2)}$ to equate the terms inside the infimum. As a consequence, we are left with 
\begin{align}\label{eq:GA_decomposition_poly}
    \rho\left(\b S_{\widetilde{A}},\b Z\right)&\lsim \frac{L_p^2\left[m\mathscr{A}_2 + mb\mathscr{A}_1(b)  + T^2\alpha_a^{1-2/p}\right]\log m \log d}{T\sigma_*^2}\\
    &\qquad +\frac{R_4^2(\log d)^{3/2}\log m}{\sqrt{m}\sigma_*^2} + \frac{R_p^2(md)^{2/p}(\log d)^2\log m}{m\sigma_*^2}\nonumber\\
    &\qquad +\frac{\left[ R_p d(\log d)^{(3/2)p-4}\log m \log (dm)  \right]}{\sqrt{m}\sigma_*^{p/(p-2)}}^{1/(p-2)}.\nonumber
\end{align}

\textbf{Step 2.} By Nazarov's inequality for Gaussian random vectors (Theorem 1 in \cite{nazarov}) we can bound the second term in \eqref{E:GA_decompose2} as
\[
\Delta\left(\b Z,s\right)\lsim s\sqrt{ 1\lor \log d}.
\]
Similarly to Step 1, Lemma \ref{L:tail_bound_poly} give us for $j\in[m]$, $i\in[d]$ and $p\in[2,\infty)$
\[\viii{\widetilde{ B}_{j,i}}_{p}=\viii{ B_{j,i}}_{p}=\viii{\sum_{t\in\m{B}_j}X_{t,k}}_{p}\lsim\sqrt{b}\mathscr{A}_{b,p} L_p,\]
where $B_{j,k}$ and $N_{t,k}$ denotes the $i$-th element of $\b B_j$ and $\b X_t$ respectively,  for $i\in[d]$; and $\mathscr{A}_{b}:=\left[\sum_{0\leq m <b} (m+1)^{p/2-2}\alpha_m^{\epsilon/(p+\epsilon)}\right]^{\tfrac{2}{p}}$. We can then bound the third term in \eqref{E:GA_decompose2} by Markov's inequality as 
\[\P\left(\|\b S_B\|_\infty>s\right) \leq d
\left(\frac{\sqrt{mb/T}\mathscr{A}_{b,p} L_{p}}{s}\right)^{p}.\]

We can set $s=s^*:=\big[d^{1/p}\sqrt{mb/T}\mathscr{A}_{b,p} L_{p}\big]^{p/(p+1)}$ to equate (up to log terms) the two terms above containing $s$ and obtain
\begin{equation*}
   \Delta\left(\b Z,s\right) + \P\left(\|\b S_B\|_\infty>s\right) \lsim \left(d^{1/p}\sqrt{\tfrac{mb}{T}}\mathscr{A}_{b,p} L_{p}\right)^{\tfrac{p}{p+1}}\sqrt{1\lor \log d}.
\end{equation*}

\textbf{Step 3.} Notice that any measurable $A\subseteq\R^2$ we have $|\P[(\b A_1, \b A_2)\in A] - \P[\widetilde{\b A}_1,\widetilde{\b A}_2\in A]|\leq \alpha_b$ where  $\{\alpha_n,n\in \N\}$. Then, the last term in \eqref{E:GA_decompose2} can be upper bounded by $(m-1)\alpha_b$ by induction, i.e,
\begin{equation*}
  \rho\left(\b S_A,\b S_{\widetilde{A}}\right) \leq (m-1)\alpha_b.
\end{equation*}

\textbf{Step 4.}  Applying
the bound derived in the steps above back into \eqref{E:GA_decompose2} we are left with

\begin{align}\label{E:GA_decompose3}
    \rho\left(\b S_{X},\b Z\right)&\lsim \frac{L_p^2\left[m\mathscr{A}_2 + mb\mathscr{A}_1(b)  + T^2\alpha_a^{1-2/p}\right]\log m \log d}{T\sigma_*^2}\\
    &\qquad +\frac{R_4^2(\log d)^{3/2}\log m}{\sqrt{m}\sigma_*^2} + \frac{R_p^2(md)^{2/p}(\log d)^2\log m}{m\sigma_*^2}\nonumber\\
    &\qquad +\frac{\left[ R_p d(\log d)^{(3/2)p-4}\log m \log (dm)  \right]}{\sqrt{m}\sigma_*^{p/(p-2)}}^{1/(p-2)},\nonumber\\
    &\qquad+ \left(d^{1/p}\sqrt{\tfrac{mb}{T}}\mathscr{A}_{b,p} L_{p}\right)^{\tfrac{p}{p+1}}\sqrt{1\lor \log d}  + m\alpha_b\nonumber\\
  &=:(I) + (II) + (III) + (IV) + (V) + (VI)\nonumber.
\end{align}

We now must choose sequences $a$ and $b$ to balance all the terms appearing on the right-hand side. Let $a =1\lor \lfloor T^{\gamma_a}\rfloor$ and $b =1\lor \lfloor T^{\gamma_b}\rfloor$ for $0\leq \gamma_b\leq \gamma_a <1$ . Then $m\asymp T^{1-\gamma_a}$, $ma/T\asymp 1$, $mb/T\asymp T^{\gamma_b-\gamma_a}$. Comparing $(I)$, $(II)$ and $(IV)$, it seems we cannot do much better than setting $\gamma_a=\gamma_a^*:=1/2$.
Since $\alpha_n \leq K n^{-r}$ then, $m\alpha_b\lesssim K T^{1-\gamma_{a}^*-r\gamma_b} = KT^{1/2-r\gamma_b}$. To equate $(V)$ with $(VI)$ (up to $\log$ terms) and respect the fact that $0\leq \gamma_b\leq \gamma_a^*$ we set
\[
\gamma_b = \gamma_b^*:= \left(\frac{1/2+D_p/4}{r+D_p/2}\right)\land 1/2;\qquad D_p:=\frac{p}{p+1}.
\]
Then,
\begin{equation*}
   (V) + (VI) \lsim \frac{1}{T^{\phi(p,r)}}\left[d^{1/(p+1)}(\mathscr{A}_{b,p} L_{p})^{D_p}\sqrt{1\lor \log d}+ K\right],
\end{equation*}
where
\[\phi(r,p) := \gamma_a^* + r\gamma_b^* - 1=\begin{cases}
\tfrac{D_p(r-1)}{4(r+D_p/2)} &; r\geq 1\\
\frac{r-1}{2} &;  0\leq r< 1.
\end{cases}  \]
Therefore,
\begin{align}\label{E:GA_decompose4}
    \rho\left(\b S_{X},\b Z\right)&\lsim \frac{L_p^2}{\sigma_*^2}\left[\frac{\mathscr{A}_2}{\sqrt{T}} + \frac{\mathscr{A}_1(T^{\gamma_b^*})}{T^{(1/2)-\gamma_b^*}}  +\frac{K_d^{1-2/p}}{T^{(r/2)(1-2/p) -1}}\right]\log T\log d\\
    &\qquad +\frac{(L_4\mathscr{A}_{\sqrt{T},4})^2(\log d)^{3/2}\log T}{T^{1/4}\sigma_*^2} + \frac{(L_p\mathscr{A}_{\sqrt{T},p})^2d^{2/p}(\log d)^2\log T}{T^{1/2-1/p}\sigma_*^2}\nonumber\\
    &\qquad +\frac{\left[ L_p\mathscr{A}_{\sqrt{T},p} d(\log d)^{(3/2)p-4}\log T \log (dT)  \right]}{T^{1/4}\sigma_*^{p/(p-2)}}^{1/(p-2)},\nonumber\\
    &\qquad+ \frac{1}{T^{\phi(p,r)}}\left[d^{1/(p+1)}(\mathscr{A}_{T^{\gamma_b^*},p} L_{p})^{D_p}\sqrt{1\lor \log d}+ K \right]\nonumber,
\end{align}
which concludes the proof of part (a).

The proof for part (b) runs parallel to the proof of part (a). In step 1, we replace Lemma \ref{L:tail_bound_poly} by Lemma \ref{L:tail_bound_exp} to conclude that $\viii{A_{i,j}}_{e^\gamma}\lsim \sqrt{a}$ and thus $\viii{W_{i,j}}_{e^\gamma}\lsim 1$, which in turn allow us to obtain
\begin{align*}
   M_\Sigma & \lsim \frac{m}{T}+ \frac{mb}{T}  + T\exp\left[-K_1(1-2/p)a^{\gamma_1}\right];\qquad p\in[2,\infty)\\
\mu_4 &\leq m\max_{i\in[d]}\max_{j\in[m]}\viii{W_{j,i}}_4^4\lsim m.
\end{align*}
We apply Lemma \ref{L:Orlicz_basic} (b) followed (d) to write
\begin{align*}
\mathcal{M}&\lsim \viii{\max_{j,i} |W_{j,i}|}_{e^\gamma}\lsim  \psi_{e^\gamma}^{-1}(md)\max_{j,i} \viii{W_{j,i}}_{e^\gamma}\lsim \psi_{e^\gamma}^{-1}(md).
\end{align*}
Similarly, we obtain $\E\|\b W_j\|_\infty^p =  \viii{\max_{i\in[d]}|W_{j,i}|}_p^p\lsim \viii{\max_{i\in[d]}|W_{j,i}|}_{e^{\gamma}}^p\lsim \big[\psi_{e^\gamma}^{-1}(d)\big]^p$. Set $x=K_x\sqrt{\log d}$ for some $K_x>0$ large enough such that, by Lemma \ref{L:tail_bound_sup_exp}, we have for $d\geq 2$ and $a\geq 4\lor K_x(\log d)^{2/\gamma-1}$,
\[
\P(\|\b W_j\|_\infty>x)=\P\left(\left\|\sum_{t\in\m{A}_j} \b  X_t\right\|_\infty>K_x\sqrt{\frac{T}{ma}}\sqrt{a\log d}\right)\leq (da)^{1-K_x^\gamma} + 2d^{1-K^2_x}.
\]
Then, by Cauchy-Schwartz inequality, 
\[
H(x)\leq \max_{j\in[m]}\sqrt{\E\|\b W_j\|_\infty^8\P(\|\b W_j\|_\infty>x)}\lsim \left[\psi_{e^\gamma}^{-1}(d)\right]^4\left[(da)^{1-K_x^\gamma} + 2d^{1-K_x^2}\right]^{1/2}
\]

Plug these bounds back into \eqref{eq:GA_decomposition}, we are left with 
\begin{align}\label{eq:GA_decomposition_exp}
    \rho\left(\b S_{\widetilde{A}},\b Z\right)&\lsim \left\{\frac{m}{T} + \frac{mb}{T} + T\exp\left[-K_1(1-2/p)a^{\gamma_1}\right]\right\}\frac{\log m \log d}{\sigma_*^2}\\
    &\qquad +\frac{(\log d)^{3/2}\log m}{\sqrt{m}\sigma_*^2} + \frac{\big[\psi_{e^\gamma}^{-1}(md)\big]^{2}(\log d)^2\log m}{m\sigma_*^2}\nonumber\\
    &\qquad +\frac{\log d\sqrt{\log m \log (dm)}\big[\psi_{e^\gamma}^{-1}(d)\big]^2\big[(da)^{1-K^\gamma} + 2d^{1-K^2}\big]^{1/4}}{\sqrt{m}\sigma_*^2}\nonumber\\
    &\qquad + \frac{(\log d)^{2}}{\sqrt{m}\sigma_*}.\nonumber
\end{align}

For step 2, we can then bound the third term in \eqref{E:GA_decompose2} by Lemma \ref{L:tail_bound_sup_exp}. Set $s=K_s\sqrt{\frac{mb}{T} \log d \log T}$ for some $K_s>0$ large enough so that, for $d\geq 2$, and $mb\geq 4\lor K_s(\log d)^{2/\gamma-1} $, we have
\begin{align*}
   \P\left(\|\b S_B\|_\infty>s\right)&=\P\left(\left\|\sum_{j=1}^{m-1}\sum_{t\in\m{B}_j} \b  X_t\right\|_\infty>K_s\sqrt{mb\log T\log d}\right)\\
   &\leq (dmb)^{1-K_s^\gamma(\log T)^{\gamma/2}} + 2d^{1-K_s^2\log T}.
\end{align*}
By plugging the bounds above back into \eqref{E:GA_decompose2} we obtain
\begin{align}\label{eq:decomposition_exp}
    \rho\left(\b S_{X},\b Z\right)&\lsim \left\{\frac{m}{T} + \frac{mb}{T} + T\exp\left[-K_1(1-2/p)a^{\gamma_1}\right]\right\}\frac{\log m \log d}{\sigma_*^2}\\
    &\qquad +\frac{(\log d)^{3/2}\log m}{\sqrt{m}\sigma_*^2} + \frac{\big[\psi_{e^\gamma}^{-1}(md)\big]^{2}(\log d)^2\log m}{m\sigma_*^2}\nonumber\\
    &\qquad +\frac{\log d\sqrt{\log m \log (dm)}\big[\psi_{e^\gamma}^{-1}(d)\big]^2\big[(da)^{1-K^\gamma_x} + 2d^{1-K_x^2}\big]^{1/4}}{\sqrt{m}\sigma_*^2}\nonumber\\
    &\qquad + \frac{(\log d)^{2}}{\sqrt{m}\sigma_*} + (dmb)^{1-K_s^\gamma(\log T)^{\gamma/2}} + 2d^{1-K_s^2\log T}.\nonumber\\
    &\qquad + \log d\sqrt{\tfrac{mb}{T} \log T}  + m\exp(-K_1b^{\gamma_1})\nonumber.
\end{align}
For the same reason discussed in the polynomial case, we cannot do much better than setting $a \asymp T^{1/2}$. Then,  $m\asymp T^{1/2}$. Set $b\asymp (\log T)^{\gamma_1}$ so that the last term is at most $T^{-1/4}$. Note that the numerator of the fourth term above can be bounded uniformly in $d$ and $T$ by a constant only depending on $\gamma$ and the choice of $K_x>1$. Also, $\left(\frac{1}{d}\right)^{K^2_s\log T-1}\lsim T^{-1}$ provided that $K_s>1$ as $d\geq 2$, and $\left[\frac{1}{d\sqrt{T}(\log T)^{\gamma_1}}\right]^{K^\gamma(\log T)^{\gamma/2}-1}\lsim T^{-1/2}$ by taking $K_s\geq 2^{1/\gamma}$. Finally, $\psi_{e^\gamma}^{-1}(x) \lsim  (\log x )^{1/\gamma}$, for $x\geq \exp(1/\gamma)-1$ by Lemma \ref{L:Orlicz_basic}(a). Result (b) follows.

\begin{align}\label{eq:decomposition_exp_2}
    \rho\left(\b S_{X},\b Z\right)&\lsim \frac{(\log T)^{\gamma_1 + 1} \log d}{\sqrt{T}\sigma_*^2}\\
    &\qquad +\frac{(\log d)^{3/2}\log T}{T^{1/4}\sigma_*^2} + \frac{\big[\psi_{e^\gamma}^{-1}(d\sqrt{T})\big]^{2}(\log d)^2\log T}{\sqrt{T}\sigma_*^2}\nonumber\\
    &\qquad +\frac{\log d\sqrt{\log T \log (d\sqrt{T})}\big[\psi_{e^\gamma}^{-1}(d)\big]^2\big[(d\sqrt{T})^{1-K^\gamma} + 2d^{1-K^2}\big]^{1/4}}{T^{1/4}\sigma_*^2}\nonumber\\
    &\qquad + \frac{(\log d)^{2}}{T^{1/4}\sigma_*} + \left(\frac{1}{d\sqrt{T}(\log T)^{\gamma_1}}\right)^{K^\gamma(\log T)^{\gamma/2}-1} + \left(\frac{1}{d}\right)^{K^2\log T-1}.\nonumber\\
    &\qquad + \frac{\log d (\log T)^{\gamma_1+1}}{T^{1/4}} \log\left(1\lor d\right) +\frac{1}{T^{1/4}}\nonumber.
\end{align}
Note that the numerator of the fourth term above can be bounded uniformly in $d$ and $T$ by a constant only depending on $\gamma$ and the choice of $K>1$. Also, $\left(\frac{1}{d}\right)^{K^2\log T-1}\lsim T^{-1}$ as $d\geq 2$ and $K>1$, and $\left[\frac{1}{d\sqrt{T}(\log T)^{\gamma_1}}\right]^{K^\gamma(\log T)^{\gamma/2}-1}\lsim T^{-1/2}$ by taking $K\geq 2^{1/\gamma}$. Finally, $\psi_{e^\gamma}^{-1}(x) \lsim  (\log x )^{1/\gamma}$ for $x\geq \exp(1/\gamma)-1$ by Lemma \ref{L:Orlicz_basic}(a). Result (b) follows.
\end{proof}

\section{Auxiliary Lemmas}\label{S:lemmas}

\subsection{Concentration Inequalities for strong mixing sequences}

\begin{lemma}\label{L:mixing_lag} Let $\{\b X_t: t\in\Z\}$ be a sequence of $d$-dimensional random vector with strong mixing coefficient $\{\alpha^{\b X}_n:n\in\N\}$.  For a non-negative integer $k$, define $\{\b Y_t:=f(\b X_t,\dots,  \b X_{t-k}): t\in\Z\}$ for some measurable $f:\R^{d(k+1)}\to \R^q$ and denote its strong mixing coefficient by $\{\alpha^{\b Y}_n:n\in\N\}$.  Then $\alpha^{\b Y}_n\leq \alpha^{\b X}_{(n-k)\lor 0}$ for $n\in\N$.
\end{lemma}
\begin{proof}
Let $\m{X}_s^t$ and $\m{Y}_s^t$ be the $\sigma$-algebra generated by $(X_s,\dots, X_t)$ and  $(Y_s,\dots, Y_t)$, respectively where $-\infty\leq s\leq t\leq \infty$. Since $\m{Y}_{-\infty}^t\subseteq \m{X}_{-\infty}^t$ and $\m{Y}_{t+n}^\infty\subseteq \m{X}_{t+n-k}^\infty$, we have $\alpha^{\b Y}_n\leq \alpha^{\b X}_{n-k}$ for $n\geq k$ and $\alpha^{\b Y}_n\leq \alpha^{\b X}_0$ for $0\leq n <k$.
\end{proof}

\begin{lemma}\label{L:tail_bound_poly}[based on Corollary 1.1 and Theorem 6.3 in \cite{rio2017asymptotic}] Let $S_T = \sum_{t=1}^T X_t$ where $\{X_t:t\in [T]\}$ is a sequence of zero mean real-valued random variables  such that $\viii{X_t}_p<\infty$ for $t\in[T]$, and  the mixing coefficients given by $\{\alpha_m:0\leq m <T\}$. Then, for $p\in[2,\infty)$ and $T\in\N$, 
\[\viii{S_T}_p\leq a_p\sqrt{T}L_{2,\alpha} +  b_p T^{1/p}L_{p,\alpha},\]
where $L_{s,\alpha}:=\left\{\int_0^1[\alpha^{-1}(u)\land T]^{s-1} Q^{s}(u){ du}\right\}^{1/s}$ for $s>0$,  $a_p$ and $b_p$ are positive constants only depending on $p$, $\alpha^{-1}(u):=\sum_{0\leq m<T}\1\{u\leq \alpha_m\}$, $Q:=\max_{t\in[T]} Q_t$ and $Q_t(u):=\sup\{x\in\R:\P(|X_t|>x)<u\}$.

If further,  $X_t\in\m{L}^q$ for $t\in[T]$ for some $q\in(p, \infty]$, then
\[\viii{S_T}_p\leq c_{p,q} \mathscr{A}_{p,q}\max_{t\in[T]}\viii{X_t}_q\sqrt{T},\]
where $\mathscr{A}_{p,q}:=\left[\sum_{0\leq m <T} (m+1)^{p-2}\alpha_m^{1-p/q}\right]^{\tfrac{1}{p}}$ and $c_{p,q}$ is a positive constant only depending on $p$ and $q$.   

If further, $\alpha_{m}\leq (m+1)^{-r}$ for some constant $r\geq 0$  and all  $m\in\N$ then

\[\viii{S_T}_p\leq d_{p,q,r}\mathscr{R}_{p,q,r}(T)\max_{t\in[T]}\viii{X_t}_q;\qquad \mathscr{R}_{p,q,r}(T):=
\begin{cases}
   \sqrt{T} &; r>\nu\\
   (\log T+  \lambda)^{\tfrac{1}{p}}\sqrt{T} &; r=\nu\\
   T^{[\frac{1}{2}+\frac{p-1 -r(1-p/q)}{p}]\land 1} &; r<\nu,
   \end{cases}
\]
where $\nu:=\frac{(p-1)}{1-p/q}$ and $d_{p,q,r}$ is a positive constant only depending on $p$, $q$ and $r$ and $\lambda\approx 1.1$.
\end{lemma}
\begin{proof}
The first result, for $p=2$, follows from Corollary 1.1 in \cite{rio2017asymptotic} by setting $a_2=b_2 = 1$ and bounding
\[
\sum_{t=1}^T\int_0^1[\alpha^{-1}(u)\land T] Q^2_t(u){ du}\leq T\int_0^1[\alpha^{-1}(u)\land T] \left(\max_{t\in[T]}Q_t\right)^2(u){ du}.
\]
For $p>2$ we rely on Theorem 6.3 in \cite{rio2017asymptotic}, the concavity of $x\mapsto x^{1/p}$ and the result for $p=2$ to write $\viii{S_T}_p\leq a_p\sqrt{T}L_{2,\alpha} +  b_p T^{1/p}L_{p,\alpha}$ with $a_p = 2^{1+2(p+1)/p}p^{1/p}\sqrt{p+1}$ and $b_p=\left[\frac{p}{p-1}4^{p+1}(p+1)^{p-1}\right]^{1/p}$.

For the second result, Markov's inequality give us $\P(|X_t|\geq x)\leq\left(\viii{X_t}_q/x\right)^q$ for $x>0$. Then, $Q_t(u)\leq  \frac{\viii{X_t}_q}{u^{1/p}}$ and $Q(u):=\max_{t\in[t]}Q(u)\leq  \frac{\mu_q}{u^{1/p}}$, where $\mu_q:=\max_{t\in[T]}\viii{X_t}_q$. Then, for $s\in[2,p]$, we have
\begin{align*}
    L_{s,q}\leq \mu_q\left\{\int_0^1[\alpha^{-1}(u)\land T]^{s-1} u^{-s/q}\d u\right\}^{1/s}\leq \mu_q\left\{\int_0^1[\alpha^{-1}(u)\land T]^{p-1} u^{-p/q}\d u\right\}^{1/s}.
\end{align*}
Combine equation (C.10) in in \cite{rio2017asymptotic}, the bound $Q(u)\leq \frac{\mu_q}{u^{1/p}}$ and the last expression to obtain
\begin{align*}
    L_{s,q}\leq \mu_q\left(\frac{p-1}{1-p/q}\right)^{1/s}\left[\sum_{m=0}^{T-1} (m+1)^{p-2} \alpha_m^{1-s/q}\right]^{1/s},\qquad s\in[2,p].
\end{align*}
We apply this bound to the first result, such that the second result follows with $c_{p,q}=2(a_p\lor b_p)\left(\frac{p-1}{1-p/q}\right)^{1/s}$.

For the last result,  if $\alpha_{m}\leq (m+1)^{-r}$ then $\mathscr{A}_{p,q}^p\leq \sum_{m=1}^T m^{p-2-r(1-p/q)}$. For $r>\nu:=\frac{p-1}{1-p/q}$, the sum is convergent as $T\to\infty$, let $C_1:=\lim _{T\to\infty}\mathscr{A}_{p,q}(T)$. For $r=\nu$ we have the harmonic series which is known to diverge at rate slower than $\log T + \lambda$ where $\lambda\approx 1.1$. Finally,  for $r<\nu$ we have $T^{-1}\sum_{m=1}^T (m/T)^{p-2-r(1-p/q)}\leq \int_0^1 x^{p-2-r(1-p/q)}\d x =\frac{1}{p-1-r(1-p/q)}=:C_2^p$. Thus,
\[
\mathscr{A}_{p,q}\leq 
\begin{cases}
   C_1 &; r>\nu\\
   (\log T+  \lambda)^{1/p} &; r=\nu;\\
   C_2 T^{\frac{p-1 -r(1-p/q)}{p}} &; r<\nu.
   \end{cases}
\]
Apply this last bound to the second result yields the last result with $d_{p,q,r}=c_{p,q}(1\lor C_1\lor C_2)$.
\end{proof}

\begin{lemma}\label{L:tail_bound_exp}[based on Theorem 1 in \cite{MPR2011} ] Let $S_T = \sum_{t=1}^T X_t$ where $\{X_t:t\in [T]\}$ is a sequence of zero mean real-valued random variables  such that
\begin{enumerate}[(a)]
\item There exist two positive constants $\gamma_1$  and $K_1$
such that the strong mixing coefficients of the sequence satisfy $\alpha(m)\leq \exp(-K_1m^{\gamma_1})$ for any $1\leq m<T$ and $T\geq 2$, 
\item There exist two positive constants $\gamma_2$  and $K_2$
such that $\sup_{1\leq t\leq T,T\in \N}\viii{X_t}_{e^{\gamma_2}}\leq K_2$,
\item $\gamma < 1$ where $\gamma$ is defined by $1/\gamma=1/\gamma_1+1/\gamma_2$.
\end{enumerate}
Then, there exist positive constants $C_1$, $C_2$,  $C_3$ and $C_4$ depending only on $ K_2, K_1, \gamma$ and $\gamma_1$ such that, for $x>0$ and $T\geq 4$,
\[\P(|S_T|\geq x)\leq T\exp\left(-\frac{x^\gamma}{C_1}\right) + \exp\left[-\frac{x^2}{C_2(1+TV)}\right] + \exp\left\{-\frac{x^2}{C_3T}\exp\left[\frac{x^{\gamma(1-\gamma)}}{C_4(\log x)^\gamma}\right]\right\},\]
where $V$ is a finite constant.

In particular, there a constant $C_5$ only depending on $C_3, C_4$ and $\gamma$ such that, for $x>1$,  
\begin{align*}
\P(|S_T|\geq x)\leq T\exp\left(-\frac{x^\gamma}{C_1}\right) + \exp\left[-\frac{x^2}{C_2(1+TV)}\right]+\exp\left(-\frac{x^2}{C_5 T}\right).
\end{align*}
Furthermore, for some constant $C_6$ only depending on $C_i$ for $i\in[4]$, $\gamma_1$, $\gamma_2$ and $V$, 
\[\viii{S_T}_{\psi_\gamma} \leq C_6\sqrt{T},\quad\textnormal{and}\quad \psi_\gamma(x):=K_\gamma x\1_{0\leq x< a_\gamma} + (\exp x^\gamma -1)\1_{x\geq a_\gamma}\]
where $a_\gamma:=(1/\gamma)^{1/\gamma}$ and  $K_\gamma:=(\exp a_\gamma^\gamma -1)/a_\gamma$.
\end{lemma}
\begin{proof}
We verify the conditions (2.6), (2.7), and (2.8) of Theorem 1 in \cite{MPR2011}, henceforth MPR.  Assumption (b) ensures that $\E\left[\exp(|X_i /K_2|^{\gamma_2} )\right]\leq 2$ then following Remark 2 in MPR. Condition (2.7) is satisfied with $b = K_2$. From expression (2.5) in MPR we have $\tau(m)\leq 2\int_0^{2\alpha(m)} Q(u) du$, where $Q:=\sup_{1\leq t\leq T, T\in\N}  Q_{|X_t|}$ with $Q_{|X_t|}$ denoting the quantile function of $|X_t|$.  Cauchy-Schwartz inequality gives us 
$\tau(m)\leq 2 \sqrt{2\alpha(m)\int_0^1 Q^2(u) du}$.  Given Assumption (b), the integral is finite. Let denote it by $M^2$.  By Assumption (a) we have that $\tau(m)\leq 2 \sqrt{2} M\exp\left(-\frac{1}{2} K_1m^{\gamma_1}\right)$, then condition (2.6) is satisfied with $a=2 \sqrt{2} M$ and $c=\frac{1}{2} K_1$. The first result follows.

For the second result, note that the function $x\mapsto x^{(1-\gamma)}/\log x$ is continuously differentiable and  coercive for $x>1$,  hence it attains its minimum over $x\in(1,\infty)$ which is given by $C_\gamma:=e(1-\gamma)>0$.  Define $C_5:=C_3/\exp\left(\frac{C_\gamma^\gamma}{C_4}\right)$, then for $x>1$ the last term of the first result can be upper bounded by $\exp\left(-\frac{x^2}{C_5 T}\right)$,  and the second result follows.

For the last result we have, by Fubini's Theorem and $C>0$,
\begin{align*}
  \E\left[\psi_\gamma\left(\frac{S_T}{\sqrt{T}C}\right)\right]
  &= \int_0^\infty\P\left[|S_T|> C\psi^{-1}(x)\sqrt{T}\right]d x\\
  &=\int_0^{a_\gamma}\P\left[|S_T|> \frac{C}{K_\gamma} x\sqrt{T}\right]d x + \int_{a_\gamma}^\infty \P\left\{|S_T|> C \left[\log (1+x)\right]^{1/\gamma} \sqrt{T}\right\}d x\\
  &=: I_1 + I_2.
\end{align*}
Lemma \ref{L:tail_bound_poly} give us $\viii{T^{-1/2}S_T}_p\leq K_p$ for any $p\in[2,\infty)$. Then, for the first integral we have
\[
I_1\leq \int_0^{\infty}\P\left[|T^{-1/2}S_T|> \frac{C}{K_\gamma} x\right]d x= \frac{K_\gamma}{C}\viii{T^{-1/2}S_T}_1\leq \frac{K_\gamma}{C}\viii{T^{-1/2}S_T}_2\leq \frac{K_\gamma K_2}{C}.
\]
For the second one, note that $a_\gamma >1$, then
\begin{align*}
    I_2 &= \int_{a_\gamma}^\infty T\left(1+x\right)^{-\tfrac{C^\gamma T^{\gamma/2}}{C_1}} dx  +\int_{a_\gamma}^\infty  \exp\left(-\frac{C^2[\log (1+x)]^{2/\gamma}}{C_2(T^{-1}+V)}\right)+\exp\left(-\frac{x^2}{C_5 T}\right) d x\\
    &\leq \int_{a_\gamma}^\infty T\left(1+x\right)^{-\tfrac{C^\gamma T^{\gamma/2}}{C_1}} dx  +\int_{a_\gamma}^\infty  \exp\left(-\frac{C^2[\log (1+x)]^{2/\gamma}}{C_7}\right) dx:=I_3 + I_4,
\end{align*}
where $C_7:=C_2(1+V)\lor C_5$. For $C^\gamma/C_1> 1$, we can bound $I_3$ as 
\[
I_3 = \frac{T(1+a_\gamma)^{1-\tfrac{C^\gamma T^{\gamma/2}}{C_1}}}{\tfrac{C^\gamma T^{\gamma/2}}{C_1}-1}\leq \frac{T^{1-\gamma/2}(1+a_\gamma)^{1-T^{\gamma/2}}}{\tfrac{C^\gamma }{C_1}-1}\leq \frac{M_\gamma}{\tfrac{C^\gamma }{C_1}-1},
\]
where $M_\gamma:=\sup_{T\geq 4}\big[T^{1-\gamma/2}(1+a_\gamma)^{1-T^{\gamma/2}}\big]$  which is finite because $a_\gamma>0$. For the $C^2/C_7>1$, $\exp\left(-\frac{C^2[\log (1+x)]^{2/\gamma}}{C_7}\right)\leq \exp\left(-[\log (1+x)]^{2/\gamma}\right)$ and
\begin{align*}
    \int_{a_\gamma}^\infty\exp\left(-[\log (1+x)]^{2/\gamma}\right) d x &\leq  ((a_\gamma\lor 2) - a_\gamma) + \int_{a_\gamma\lor 2}^\infty\exp\left(-[\log (1+x)]^{2/\gamma}\right) d x\\
    &\leq 2+ \int_{0}^\infty\exp\left(-[\log (1+x)]^{2}\right) d x\leq 4,
\end{align*}
where we use $\log 3>1$ and the last integral equals $e^{1/4}\sqrt{\pi}(\mathsf{erf}(1/2) +1)/2\leq 2$ with $\mathsf{erf}$ is Gauss error function. Then, by the dominated convergence theorem, we conclude that $\lim_{C\to\infty} I_4 = 0$, hence there exist $C_8>0$ such $I_4\leq 1/3$ for $C\geq C_8$.

Therefore, $\E\psi_\gamma(\tfrac{S_T}{\sqrt{T}C})\leq I_1 + I_3+ I_4\leq 1$ for $C \geq  C_6:=3K_\gamma K_2\lor (2C_1)^{1/\gamma} \lor (2C_7)^{1/2}\lor  \big[C_1(1+3M)\big]^{1/\gamma}\lor C_8$.
\end{proof}

\begin{lemma}\label{L:tail_bound_sup_poly} Let $\b S_T = \sum_{t=1}^T \b X_t$ where $\{\b X_t:=(X_{1,t},\dots, X_{n,t})': t\in[T]\}$ is a sequence of zero mean $n$-dimensional random vector such that $\viii{X_{i,t}}_{p+\epsilon}<\infty $ for $i\in[n]$ and $t\in[T]$  for $p\in[2,\infty)$ and $\epsilon>0$. Then,  for $T\in \N$ and $x>0$, 
\[\P(\|\b S_T\|_\infty\geq x)\leq \sum_{i=1}^n\left(\frac{c_{p,\epsilon}\sqrt{T} \mathscr{A}_i(T)\max_{t\in[T]}\viii{X_{i,t}}_{p+\epsilon}}{x}\right)^p,\]
where $\mathscr{A}_i(T):=\left[\sum_{0\leq m <T} (m+1)^{p-2}\alpha_{i,m}^{1-p/(p+\epsilon)}\right]^{\tfrac{1}{p}}$, $\alpha_{i,m}$ is the strong mixing coefficient of $\{X_{i,t}\}_t$ for $1\leq i\leq n$, and $c_p,\epsilon$ is a constant depending only on $p$ and $\epsilon$.
\end{lemma}
\begin{proof}
The result follows from the union bound, Markov inequality, and Lemma \ref{L:tail_bound_poly}.
\end{proof}

\begin{lemma}\label{L:tail_bound_sup_exp} Let $\b S_T = \sum_{t=1}^T \b X_t$ where $\{\b X_t:=(X_{1,t},\dots, X_{n,t})':1\leq t\leq T\}$  be a sequence of zero mean $n$-dimensional random variables and write $\b S_T$ for its sum. Assume: 
\begin{enumerate}[(a)]
\item There exist two positive constants $\gamma_1$  and $K_1$
such that the strong mixing coefficients $\alpha_{i,m}\leq \exp(-K_1m^{\gamma_1})$ for  $1\leq m<T$, $1\leq i\leq n$, and $T\geq 2$;
\item There exist two positive constants $\gamma_2$  and $K_2$
such that $\sup_{1\leq t\leq T,T\in \N}\viii{X_{i,t}}_{e^{\gamma_2}}\leq K_2$,
\item $\gamma < 1$ where $\gamma$ is defined by $1/\gamma=1/\gamma_1+1/\gamma_2$.
\end{enumerate}
Then there exist positive constants $C_1$ and $C_2$ depending only on $K_2, K_1, \gamma, \gamma_1$,  and the covariance structure such that, for $n\geq 2$, $T\geq 4\lor C_1(\log n)^{(2/\gamma) - 1}$,  and $K\geq 1/\sqrt{C_1C_2(\log 2)^{2/\gamma}}$, 
\begin{align*}
\P(\|\b S_T\|_\infty\geq K\sqrt{C_2 T\log n})\leq (nT)^{1-K^\gamma} +2n^{1-K^2}.
\end{align*}
In particular, $\|\b S_T\|_\infty\lsim_\P  \big(\sqrt{T\log n}\big)$ whenever $\frac{(\log n)^{(2/\gamma) - 1}}{T}=o(1)$.
\end{lemma}
\begin{proof} Write $\b S_T = (S_{1,T},\dots, S_{n,T})'$ where  $S_{i,T}=\sum_{t=1}^T X_{i,t}$ for $i\in[n]$.  The union bound followed by the second result in Lemma \ref{L:tail_bound_exp} yields, for every $x>1$ and $T\geq 4$, 
 \begin{align*}
\P(\|\b S_T\|_\infty\geq x)&\leq n\max_{i}\P(\|\b S_{i,T}\|_\infty\geq x)\\
&\leq  n\left[T\exp\left(-\frac{x^\gamma}{C_1}\right) + \exp\left(-\frac{x^2}{C_2(1+TV)}\right) +\exp\left(-\frac{x^2}{C_5T}\right)\right].
\end{align*}

Set $x = K\left[(C_1\log(nT))^{1/\gamma}\lor \sqrt{C_6T\log (n)}\right]$ for some $K>0$ where $C_6:= (C_2 +V)\lor C_5$.  For large enough $T$, we have that $x>1$ and, therefore,  $\P(\|\b S_T\|_\infty\geq x)\leq (nT)^{1-K^\gamma} +2n^{1-K^2}$. Notice that that the first in brackets is no larger than the second one provided that $(C_1\log(nT))^{2/\gamma}\leq C_6 T\log (n)$, which in turn is implied by $T\geq C_7(\log n)^{(2/\gamma) - 1}$ where $C_7 = C_6^{-1}(2C_1)^{2/\gamma}$. Therefore, for $T\geq C_7(\log n)^{(2/\gamma) - 1}$ we have that $x = K\sqrt{C_6 T\log n}$ and $x>1$ whenever $K\geq 1/\sqrt{C_6C_7(\log 2)^{2/\gamma}}$.
\end{proof}

\subsection{Factor Model Estimation}

\begin{lemma}\label{L:No_estimation} Let $a_j$ and $b_j$ denote the $j$-th eigenvalue in decreasing order of $\b\Sigma$ and $\b \Lambda\b \Lambda'$  respectively. Then, under Assumption \ref{Ass:Factor_Model}(b) and $(c)$: (a) $b_j\asymp n$ for $1\leq j\leq r$; (b) $\max_{j\leq n}|a_j-b_j\lsim 1$; and (c) $a_j\asymp n$ for $1\leq j\leq r$.
\end{lemma}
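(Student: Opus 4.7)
The plan is to exploit Weyl's inequality together with the rank structure of $\b\Lambda\b\Lambda'$, so that the three claims follow almost immediately from Assumptions \ref{Ass:Factor_Model}(b) and (c).

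For part (a), I would use the standard fact that the nonzero eigenvalues of the $(n\times n)$ matrix $\b\Lambda\b\Lambda'$ coincide with those of the $(r\times r)$ matrix $\b\Lambda'\b\Lambda$. Since $r$ is fixed while $n\to\infty$, the matrix $\b\Lambda\b\Lambda'$ has rank at most $r$, so $b_j=0$ for $j>r$ and the top $r$ eigenvalues equal the eigenvalues of $\b\Lambda'\b\Lambda$. By Assumption \ref{Ass:Factor_Model}(b), every eigenvalue of $\b\Lambda'\b\Lambda/n$ lies in $[c,C]$ for some $0<c\leq C<\infty$ independent of $n$, so multiplying by $n$ yields $b_j\asymp n$ for $1\leq j\leq r$.

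For part (b), Weyl's perturbation inequality gives, uniformly in $j\in\{1,\ldots,n\}$,
\[
|a_j-b_j|\leq \|\b\Sigma-\b\Lambda\b\Lambda'\|,
\]
and Assumption \ref{Ass:Factor_Model}(c) states exactly that the right-hand side is $O(1)$. Taking the max over $j\leq n$ yields the claim.

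Part (c) is then a direct combination: for $1\leq j\leq r$, we have $a_j=b_j+O(1)$ by (b) and $b_j\asymp n$ by (a), hence $a_j\asymp n$ as well, since the $O(1)$ perturbation is negligible compared to a quantity of exact order $n$. There is no genuine obstacle here; the only subtlety worth a sentence is justifying that the nonzero eigenvalues of $\b\Lambda\b\Lambda'$ are those of $\b\Lambda'\b\Lambda$ (standard via the SVD of $\b\Lambda$), and that Weyl's inequality applies because both $\b\Sigma$ and $\b\Lambda\b\Lambda'$ are symmetric (positive semidefinite in the case of $\b\Lambda\b\Lambda'$), so the eigenvalue ordering used in the statement is unambiguous.
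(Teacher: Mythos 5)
Your proposal is correct and follows essentially the same route as the paper's proof: part (a) via the $\b\Lambda\b\Lambda'$/$\b\Lambda'\b\Lambda$ eigenvalue correspondence and Assumption \ref{Ass:Factor_Model}(b), part (b) via Weyl's inequality and Assumption \ref{Ass:Factor_Model}(c), and part (c) by combining the two. The additional remarks about the SVD justification and symmetry are harmless elaboration, not a different argument.
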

\begin{proof}
Result $(a)$ follows from the fact that the $r$ eigenvalues of $\b\Lambda'\b\Lambda$ are also (the only $r$ non-zero) eigenvalues of $\b\Lambda\b\Lambda'$ and Assumption \ref{Ass:Factor_Model}(b). Part $(b)$ follows from Weyl's inequality that implies $\max_{j\leq n}|a_j-b_j|\leq \|\b\Sigma - \b\Lambda\b\Lambda'\|\lsim 1$, where the last equality follows from Assumption \ref{Ass:Factor_Model}(c). Finally, result $(c)$ follows from part $(a)$ and $(b)$ and the (reverse) triangle inequality.
\end{proof}

Recall that $\b\Sigma$ is the $(n\times n)$ covariance matrix of $\b U_t = \b Z_t -\b\Gamma \b W_t$. Let $\widetilde{\b\Sigma}:=\frac{1}{T}\sum_{t=1}^T \b U_t \b U_t'$  and $\widehat{\b\Sigma}$ the same as $\widetilde{\b\Sigma}$ but with $\b \Gamma$ replaced by the estimator $\widehat{\b\Gamma}$. Let $\widehat{a}_j$  denote the $j$-th eigenvalue in decreasing order of $\widehat{\b\Sigma}$.

\begin{lemma}\label{L:estimation} Under the Assumptions \ref{Ass:Factor_Model} and \ref{Ass:Moments}, let $\varrho_1$ be a non-negative sequence of $n$ and $T$ such that $\| \widehat{\b\Sigma}-\widetilde{\b\Sigma}\|_{\max}\lsim_\P \varrho_1$,  then:
\begin{enumerate}[(a)]
    \item $\| \widehat{\b\Sigma}-\b\Sigma\|_{\max} \lsim_\P \varrho_1 +  g_\alpha(n)/\sqrt{T}$ 
    \item $\max_{j\leq n}|\widehat{a}_j -a_j| \lsim_\P n(\varrho_1 +  g_\alpha(n)/\sqrt{T})$
    \item $\widehat{a}_j\asymp_P n$ for $j\leq r$ provided that $\varrho_1 +  g_\alpha(n)/\sqrt{T}\lsim 1$,
\end{enumerate}
where $g_\alpha(n)= \mathcal{A}_\alpha n^{4/p}$ under Assumptions (\ref{Ass:Moments}.c) and $g_\alpha(n)= \sqrt{\log n}$ under Assumptions (\ref{Ass:Moments}.d).
\end{lemma}
\begin{proof}
Part (a) follows by the triangle and Lemma \ref{L:tail_bound_sup_poly} or Lemma \ref{L:tail_bound_sup_exp} , since $ \|\widehat{\b\Sigma}-\b\Sigma\|_{\max}\leq\| \widehat{\b\Sigma}-\widetilde{\b\Sigma}\|_{\max} +\| \widetilde{\b\Sigma}-\b\Sigma\|_{\max} \lsim_\P \varrho_1 + g(n)/\sqrt{T}$. Part (b) follows from Weyl's inequality, the fact that $\| \widehat{\b\Sigma}-\b\Sigma\|\leq n\| \widehat{\b\Sigma}-\b\Sigma\|_{\max}$ and part $(a)$. Part $(c)$ follows from the triangle inequality combined with part $(b)$ and Lemma \ref{L:No_estimation}(c).
\end{proof}

The Lemmas \ref{L:FM_Basic}-\ref{L:10FLM} below are an adaption of Lemmas 8--10 in \cite{FLM2013}, henceforth FLM, to include the estimation error in the sample covariance matrix. To avoid confusion and make it easier for the reader to follow through with the changes we use the same notation adopted in FLM. In particular, if $\delta_{i,t}$ denotes the $(i,t)$ element of $\b \Delta:=\widehat{\b R} - \b R$ then $\widetilde{U}_{i,t} = U_{i,t} + \delta_{i,t}$ for $i\in[n]$ and $t\in[T]$. We consider that $\|\b\Delta\|_{\max}\lsim_\P \varrho_R$ for some non-negative sequence $\varrho_R$ depending on $n$ and $T$.

Define:
\begin{align*}
\widetilde{\zeta}_{st}&:=\frac{\widetilde{\b U}_s' \widetilde{\b U}_t	}{n}-\frac{\E(\b U_s'\b U_t)}{n} = \left(\frac{\b U_s' \b U_t	}{n}-\frac{\E(\b U_s'\b U_t)}{n}\right) +\left(\frac{\b U_s' \b \delta_t	}{n} + \frac{\b\delta_s' \b U_t}{n} + \frac{\b\delta_s' \b\delta_t	}{n} \right)=:	\zeta_{st}+\zeta_{st}^*\\	
\widetilde{\eta}_{st}&:=\frac{\b F_s'\sum_{i=1}^n \b\lambda_i \widetilde{U}_{i,t}	}{n} = \frac{\b F_s'\sum_{i=1}^n \b\lambda_i U_{i,t}	}{n}  + \frac{\b F_s'\sum_{i=1}^n \b\lambda_i \delta_{i,t}	}{n} =:\eta_{st}+\eta_{st}^*\\	
\widetilde{\xi}_{st}&:=\frac{\b F_t'\sum_{i=1}^n \b\lambda_i \widetilde{U}_{is}	}{n} = \frac{\b F_t'\sum_{i=1}^n \b\lambda_i U_{is}	}{n} + \frac{\b F_t'\sum_{i=1}^n \b\lambda_i \delta_{is}}{n} = 	\xi_{st}+\xi_{st}^*.
\end{align*}

\begin{lemma}\label{L:FM_Basic} Under Assumption \ref{Ass:Moments}:
\begin{enumerate}[(a)]
    \item $\zeta_{st}\lsim_\P 1/\sqrt{n}$
    \item $\eta_{st}\lsim_\P 1/\sqrt{n}$
    \item $\xi_{st}\lsim_\P 1/\sqrt{n}$
    \item $\zeta_{st}^*\lsim_\P \varrho_R +  \varrho_R^2$ 
    \item $\max_{s,t\leq T}\zeta_{st}^*\lsim_\P g(nT)\varrho_R +  \varrho_R^2$
    \item $\eta_{st}^*\lsim_\P \varrho_R$
    \item $\xi_{st}^*\lsim_\P \varrho_R$,
\end{enumerate}
 where $g(x)= x^{1/p}$ under Assumptions (\ref{Ass:Moments}.c) and $g(x)= [\log x]^{1/\gamma_2}$ under Assumptions (\ref{Ass:Moments}.d);

\end{lemma}
\begin{proof} Parts $(a)$-$(c)$ are straightforward. For $(d)$ we have that $\frac{1}{n}\b U_s' \b U_t\lsim_\P 1$ and $\frac{1}{n}\b \delta_s' \b \delta_t\leq\|\b\Delta\|_{\max}^2\lsim_\P \varrho_R^2$. Then, the other two terms in parentheses in the definition of $ \zeta_{st}^*$ are $\lsim_\P \varrho_R$ by the Cauchy-Schwartz inequality. Part $(e)$ and $(f)$ follows by similar arguments. Therefore,
\[
\max_{t\leq T}\frac{1}{T}\sum_{s=1}^T\left(\frac{1}{n}\b\delta_s'\b U_t\right)^2= \max_{t\leq T}\frac{1}{n^2}\b U_t'\left(\frac{1}{T}\sum_{s=1}^T\b\delta_s\b\delta_s'\right)\b U_t\leq \|\b\Delta\|_{\max}^2\left(\max_{t\leq T}\|\b U_t\|_1/n\right)^2
\]
and
\[
\zeta_{st}^*\leq \|\b U_s\|_\infty\|\b \delta_t\|_\infty + \|\b U_t\|_\infty\|\b \delta_s\|_\infty + \|\b\delta_t\|_\infty\|\b\delta_s\|_\infty\leq 2\|\b U\|_{\max}\|\b\Delta\|_{\max} + \|\b\Delta\|_{\max}^2.
\]
\end{proof}

\begin{lemma} \label{L:8FLM} Under Assumption \ref{Ass:Moments}:
\begin{enumerate}[(a)]
    \item $\frac{1}{T}\sum_{t=1}^T\left[\frac{1}{nT}\sum_{s=1}^T\widehat{F}_{js}\E(\b U_s'\b U_t)\right]^2 \lsim_\P 1/T$
    \item $\frac{1}{T}\sum_{t=1}^T\frac{1}{T}\sum_{s=1}^T\widehat{F}_{js}\widetilde{\zeta}_{st}^2\lsim_\P \left(1/\sqrt{n}+\varrho_R +\varrho_R^2\right)^2$
    \item$ \frac{1}{T}\sum_{t=1}^T\left[\frac{1}{T}\sum_{s=1}^T\widehat{F}_{js}\widetilde{\eta}_{st}\right]^2\lsim_\P \left(1/\sqrt{n}+\varrho_R\right)^2$
    \item $\frac{1}{T}\sum_{t=1}^T\left[\frac{1}{T}\sum_{s=1}^T\widehat{F}_{js}\xi_{st}\right]^2\lsim_\P \left(1/\sqrt{n}+\varrho_R\right)^2$
    \item $\frac{1}{T}\sum_{t=1}^T\left\|\widehat{\b F}_t -\b H \b F_t\right\|^2\lsim_\P 1/T+(1/\sqrt{n}+\varrho_R +\varrho_R^2)^2$.
\end{enumerate}

\end{lemma}
\begin{proof}
Part (a) is unaltered by the presence of a pre-estimation, so it follows directly from Lemma 8(a) in FLM. For part (b), we have that for $s,l\in[n]$ and $j\in[r]$ by Cauchy-Schwartz inequality
\[
\frac{1}{T}\sum_{t=1}^T\left[\frac{1}{T}\sum_{s=1}^T\widehat{ F}_{js}\widetilde{\zeta}_{st}\right]^2\leq\left[\frac{1}{T^2}\sum_{s,l=1}^T\left(\frac{1}{T}\sum_{t=1}^T\widetilde{\zeta}_{st}\widetilde{\zeta}_{lt} \right)^2\right]^{1/2}.
\]
Since $\widetilde{\zeta}_{st}=\zeta_{st}+\zeta_{st}^*\lsim_\P 1/\sqrt{n} +\varrho_R +\varrho_R^2)$ by Lemma \ref{L:FM_Basic},  the term in parentheses is $\lsim_\P (1/\sqrt{n}+\varrho_R+\varrho_R^2)^2$. Result $(b)$ follows.  For (c), by the triangle inequality and Lemma 8(c) in FLM, we have that  $\|\sum_{i=1}^n\lambda_{ji}\widetilde{u}_{i,t}\|\leq \|\sum_{i=1}^n\lambda_{ji} U_{i,t}\| +\|\sum_{i=1}^n\lambda_{ji}\delta_{i,t}\|\lsim_\P \sqrt{n} + n\varrho_R$. Then, we conclude
\[\frac{1}{T}\sum_{t=1}^T[\frac{1}{T}\sum_{s=1}^T\widehat{\b F}_{s}\widetilde{\eta}_{st}]^2\leq\frac{1}{Tn^2}\sum_{t=1}^T\|\sum_{i=1}^n\b U_{i,t}\lambda_i\|^2\lsim_\P 1/n+\varrho_R/\sqrt{n}+ \varrho_R^2 .\]
The proof of part (d) is analogous to (c) and is omitted.
For (e), let $[\widehat{\b F}_t -\b H \b F_t]_j$ denote the $j$-th entry of $\widehat{\b F}_t -\b H \b F_t$. Since $\b V/n$ is bounded away from zero by Lemma \ref{L:estimation}(c), the fact that $(a+b+c+d)^2\leq 4(a^2 + b^2 + c^2 + d^2)$ and using \eqref{E:Bai_identity}, we have that $\max_{j\leq r}T^{-1}\sum_t[\widehat{\b F}_t -\b H \b F_t]_j$ is upper bounded by some constant $C<\infty$ times
\begin{align*}
&\left\{\max_{j\leq r}\frac{1}{T}\sum_{t=1}^T \left[\frac{1}{T}\sum_{s=1}^T\widehat{F}_{js}\frac{\E(\b U_s'\b U_t)}{n}\right]^2 +
\max_{j\leq r}\frac{1}{T}\sum_{t=1}^T \left(\frac{1}{T}\sum_{s=1}^T\widehat{ f}_{js}\widetilde{\zeta}_{st}\right)^2\right.\\
&\left.\qquad +\max_{j\leq r}\frac{1}{T}\sum_{t=1}^T \left(\frac{1}{T}\sum_{s=1}^T\widehat{F}_{js}\widetilde{\eta}_{st} \right)^2 + \max_{j\leq r}\frac{1}{T}\sum_{t=1}^T\left(\frac{1}{T}\sum_{s=1}^T\widehat{ f}_{js}\widetilde{\xi}_{st}\right)^2\right\}.
\end{align*}
The result follows by applying the bounds from (a)--(d) to each of the terms above.

\end{proof}

\begin{lemma} \label{L:9FLM} Under Assumptions \ref{Ass:Moments} and \ref{Ass:Factor_Model}:
\begin{enumerate}[(a)]
    \item  $\max_{t\leq T}\|\frac{1}{nT}\sum_{s=1}^T\widehat{\b F}_s\E(\b U_s'\b U_t)\| \lsim_\P 1/\sqrt{T})$
\item $\max_{t\leq T}\|\frac{1}{T}\sum_{s=1}^T\widehat{\b F}_s\widetilde{\zeta}_{st}\|\lsim_\P \left[g(T)/\sqrt{n} +g(nT)\varrho_R + \varrho_R^2\right]$ 
\item $\max_{t\leq T}\|\frac{1}{T}\sum_{s=1}^T\widehat{\b F}_{s}\widetilde{\eta}_{st}\|\lsim_\P \left[g(T)/\sqrt{n} +\varrho_R\right]$
\item $\max_{t\leq T}\|\frac{1}{T}\sum_{s=1}^T\widehat{\b F}_{s}\xi_{st}\|\lsim_\P \left[g(T)(1/\sqrt{n} + \varrho_R)\right]$,
\end{enumerate}
where $g(x)= x^{1/p}$ under Assumption (\ref{Ass:Moments}.c) and $g(x)=[\log x]^{1/\gamma_2}$ under Assumption (\ref{Ass:Moments}.d).
\end{lemma}
\begin{proof}
Part (a) is unaltered by pre-estimation, so it follows directly from Lemma 9(a) in FLM. For part (b), from the Cauchy-Schwartz inequality, we have
\[\max_{t\leq T}\left\|\frac{1}{T}\sum_{s=1}^T\widehat{\b F}_s\widetilde{\zeta}_{st}\right\|\leq\left(\frac{1}{T}\sum_{s=1}^T\|\widehat{\b F}_s\|^2 \max_{t\leq T}\frac{1}{T}\sum_{s=1}^T\widetilde{\zeta}_{st}^2\right)^{1/2}.\]
The first summation inside the parentheses equals $r$ due to the normalization. For the second summation, by the triangle inequality, we have $\max_{t\leq T}\frac{1}{T}\sum_{s=1}^T\widetilde{\zeta}_{st}^2\leq \max_{t\leq T}\frac{1}{T}\sum_{s=1}^T\zeta_{st}^2 + 2\max_{t\leq T}\frac{1}{T}\sum_{s=1}^T\zeta_{st}\zeta_{st}^* + \max_{t\leq T}\frac{1}{T}\sum_{s=1}^T{\zeta_{st}^*}^2$. For the first term, the maximum inequality followed by Assumption \ref{Ass:Factor_Model}(e) yields
\[\max_{t\leq T}\frac{1}{T}\sum_{s=1}^T\zeta_{st}^2 \lsim_\P \left[\psi_{p/2}^{-1}(T)\max_{s,t}\|\zeta^2\|_{\psi_{p/2}}\right] \lsim_\P \left[\psi_{p/2}^{-1}(T)\max_{s,t}\|\zeta\|_{\psi}^2\right]\lsim_\P \left[\frac{g_1(T)}{n}\right].\]
where $g_1(x)= x^{2/p}$ or $g_1(x)=[\log x]^{2/\gamma_2}$.
The last one is $\lsim_\P (g(nT)\varrho_R+\varrho_R^2)^2$ by Lemma \ref{L:FM_Basic}(d). Then, by Cauchy-Schwartz, we have that
$\max_{t\leq T}\frac{1}{T}\sum_{s=1}^T\widetilde{\zeta}_{st}^2\lsim_\P (\sqrt{g_1(T)/n} +g(nT)\varrho_R+\varrho_R^2)^2$ and result (b) follows.

For (c), by the triangle inequality we have that $\max_{t\leq T}\|\frac{1}{n}\sum_{i=1}^n\b\lambda_i\widetilde{U}_{i,t}\|\leq \max_{t\leq T}\|\frac{1}{n}\sum_{i=1}^n\b\lambda_iU_{i,t}\| + \max_{t\leq T}\|\frac{1}{n}\sum_{i=1}^n\b\lambda_i\delta_{i,t}\|$. For the first term, the maximum inequality followed by Assumption \ref{Ass:Factor_Model}(f) yields
\[
\max_{t\leq T}\left\|\frac{1}{n}\b\Lambda'\b U_t\right\| \lsim_\P \left[\frac{g(T)}{\sqrt{n}}\max_t\left\|\frac{1}{\sqrt{n}}\b\Lambda'\b U_t\right\|\right] \lsim_\P \left[g(T)/\sqrt{n}\right].
\]
The second term is upper bounded by $r\|\b\Lambda\|_{\max}\|\b\Delta\|_{\max} \lsim_\P \varrho_R)$ by Assumption \ref{Ass:Factor_Model}(d). We obtain the result since
\begin{equation}\label{E:eta}
\max_{t\leq T}\left\|\frac{1}{T}\sum_{s=1}^T\widehat{\b F}_{s}\widetilde{\eta}_{st}\right\|\leq \left\|\frac{1}{T}\sum_{s=1}^T \widehat{\b F}_s  \b F_s' \right\| \max_{t\leq T}\left\|\frac{1}{n}\sum_{i=1}^n\b\lambda_i\widetilde{U}_{i,t}\right\| \lsim_\P \left[\frac{g(T)}{\sqrt{n}} +\varrho_R\right].
\end{equation}
For (d), by the triangle inequality, $\|\frac{1}{nT}\sum_s\sum_i \b\lambda_i\widetilde{U}_{is} \widehat{\b F}_{s}\|\leq \|\frac{1}{nT}\sum_s\sum_i \b\lambda_iU_{is} \widehat{\b F}_{s}\| + \|\frac{1}{nT}\sum_s\sum_i \b\lambda_i\delta_{is} \widehat{F}_{is}\|$. Lemma 9(d) of FLM shows that the first term is $\lsim_\P 1/\sqrt{n}$. For the second term, for each $j\in[r]$:
\[
\left\|\frac{1}{nT}\sum_s\sum_i \b\lambda_i\delta_{is} \widehat{F}_{js}\right\|^2\leq \left(\frac{1}{T}\sum_{s=1}^n\left\|\frac{1}{n}\sum_{i=1}^n \b\lambda_i\delta_{is}\right\|^2 \widehat{F}_{js}\right)\left(\frac{1}{T}\sum_{s=1}^n \widehat{F}_{js}^2\right)\lsim_\P \varrho_R^2).
\]
Thus, $\|\frac{1}{nT}\sum_s\sum_i \b\lambda_i\widetilde{U}_{i,t} \widehat{\b F}_{s}\|\lsim_\P 1/\sqrt{n} + \varrho_R)$ and by Cauchy-Schwartz inequality:
\begin{equation}\label{E:xi}
\max_{t\leq T}\left\|\frac{1}{T}\sum_{s=1}^T\widehat{\b F}_{s}\xi	_{st}\right\|\leq \max_{t\leq T}\left\|\b F_t\right\|\left\|\frac{1}{nT}\sum_s\sum_i \b\lambda_i\widetilde{U}_{i,t} \widehat{\b F}_{s}\right\|\lsim_\P \left[g(T)(1/\sqrt{n} + \varrho_R)\right].
\end{equation}
\end{proof}

\begin{lemma} \label{L:10FLM} Let $\varrho_1 +  \psi^{-1}(n^2)/\sqrt{T} \lsim 1$ where $\varrho_1$ is defined in Lemma \ref{L:estimation}. Then, under Assumption \ref{Ass:Moments} we have:
\begin{enumerate}[(a)]
    \item $\|\b V^{-1}\| \lsim_\P 1/n$
    \item $\|\b H\|\lsim_\P 1$
    \item $\|\b H'\b H - \b I_r\|_F \lsim_\P 1/\sqrt{T} + 1/\sqrt{n} + \varrho_R$
    \item $\max_{i\leq n}\frac{1}{T}\sum_{t=1}^T \widehat{R}_{i,t}^2\lsim_\P \varrho_R(h(nT) + \varrho_R) +g_\alpha(n)/\sqrt{T} +  1$
    \item $\max_{i\leq n}\max_{j\leq r}\frac{1}{T}\sum_{t=1}^T F_{jt} \widetilde{U}_{i,t} \lsim_\P g_\alpha(n)/\sqrt{T} + \varrho_R$,
\end{enumerate}
where $g_\alpha(x) = \mathscr{A}_\alpha x^{2/p}$ or $g(x) = \sqrt{\log x}$, and $h(x) = x^{1/p}$ or $h(x) = [\log x]^{1/\gamma_2}$.
\end{lemma}
\begin{proof} We have that $\b V^{-1} =\diag(1/\widehat{a}_1,\dots, 1/\widehat{a}_r)$ and $1/\widehat{a}_j \asymp_P 1/n$ for $j\leq r$ by Lemma \ref{L:estimation}(c). The result (a) then follows. The normalization tell us  $\|\widehat{\b F}\|=\sqrt{T}$, Lemma 11(a) in FLM give us $\|\b F\|\lsim_\P \sqrt{T})$, $\|\b\Lambda'\b\Lambda\|=\widetilde{a}_1 \asymp n$ by Lemma \ref{L:No_estimation}(a), and from part (a) we have $\|\b V^{-1}\|\lsim_\P 1/n)$. Result (b) then follows since by definition $\b H:=T^{-1}\b V^{-1}\widehat{\b F}'\b F\b\Lambda'\b\Lambda$.
For (c), we have by the triangle inequality,
\[\|\b H'\b H - \b I_r\|_F\leq \|\b H'\b H - \b H' \b F'\b F/T \b H\|_F + \|\b H'\b F'\b F/T\b H - \b I_r\|_F\]
For the first term, we have
\[\|\b H'(\b I_r -  \b F'\b F/T) \b H\|_F\leq\|\b H\|^2\|\b I_r -  \b F'\b F/T\|_F \lsim_\P 1/\sqrt{T}. \]
The second term is equal to $\|\b H'\b F'\b F/T\b H -  \widehat{\b F}' \widehat{\b F}/T\|_F$.

For (d) we have
\begin{align*}
\max_{i\leq n}\frac{1}{T}\sum_{t=1}^T \widehat{R}_{i,t}^2  &\leq \max_{i\leq n}\frac{1}{T}\sum_{t=1}^T (\widehat{R}_{i,t}^2 -R_{i,t}^2) + \max_{i\leq n}\frac{1}{T}\sum_{t=1}^T R_{i,t}^2 -\E(R_{i,t}^2) +\max_{i\leq n}\frac{1}{T}\sum_{t=1}^T \E(R_{i,t}^2)\\
&\leq \max_{i,t} |\widehat{R}_{i,t}^2 -R_{i,t}^2| + \max_{i\leq n}\frac{1}{T}\sum_{t=1}^T R_{i,t}^2 -\E(R_{i,t}^2) +\max_{i,t}\E(R_{i,t}^2).
\end{align*}
The last term is $\lsim 1$ by Assumption \ref{Ass:Moments}(c) or (d), the middle term $\lsim_\P g(n)/\sqrt{T}$. The first term is no larger then $\|\b\Delta\|_{\max}(2\|\b R\|_{\max} + \|\b\Delta\|_{\max}) \lsim_\P \varrho_R(h(nT) + \varrho_R))$. The result (d) then follows.

For (e) we have for each $j\leq r$:
\begin{align*}
    \max_{i\leq n}\max_{j\leq r}\left|T^{-1}\sum_t F_{jt}\widetilde{U}_{i,t}\right|&\leq \max_{i,j}\left|T^{-1}\sum_t F_{jt}U_{i,t}\right| + \max_{i,j}\left|T^{-1}\sum_t F_{jt}\delta_{i,t}\right|\\
    &\leq \max_{i,j}\left|T^{-1}\sum_t F_{jt}U_{i,t}\right| + \max_{i,j}\left(T^{-1}\sum_t F_{jt}^2 T^{-1}\sum_t\delta_{i,t}^2\right)^{1/2}
\end{align*}
The first term is $\lsim_\P g(n)/\sqrt{T}$ by the maximum inequality and Assumption \ref{Ass:Moments} and the second is $\lsim_\P \varrho_R$.
\end{proof}

\subsection{Penalized Regression Results}

\begin{lemma}\label{L:compatibility_cond} Let $\b\Sigma_{\b U}:=\b U'\b U/T$ and $\b\Sigma_{\b V}:=\b V'\b V/T$ where $\b U$ and $\b V$ are $(T\times n)$ matrices, then
\[\|\b\Sigma_{\b U}-\b\Sigma_{\b V}\|_{\max}\leq \|\b U-\b V\|_{\max}(2\|\b V\|_{\max}+\|\b U-\b V\|_{\max}).\]
Furthermore, if $\|\b\Sigma_{\b U}-\b\Sigma_{\b V}\|_{\max}\leq \alpha\kappa(\b\Sigma_{\b V},\S,3)/(|\S|(1+\zeta)^2)$ for $\S\subseteq [n]$, $\zeta>0$ and $\alpha\in[0,1]$, then for $\|\b x_{\S^c}\|_1\leq \zeta \|\b x_{\S}\|_1$ we have
\[(1-\alpha)\b x'\b\Sigma_{\b V}\b x\leq \b x'\b\Sigma_{\b U} \b x\leq (1+\alpha)\b x'\b\Sigma_{\b V}\b x.\]
Take the infimum of the expression above over $\{\b x\in\R^n:\|\b x_{\S^c}\|_1\leq \zeta \|\b x_{\S}\|_1\}$ to conclude
\[(1-\alpha)\kappa^2(\b\Sigma_{\b V},\S,\zeta)\leq \kappa^2(\b\Sigma_{\b U},\S,\zeta)\leq(1+\alpha)\kappa^2(\b\Sigma_{\b V},\S,\zeta).\]
\end{lemma}
\begin{proof} By the (reverse) triangle inequality we have $\|\b U\|_{\max} - \|\b V\|_{\max}\leq \|\b U-\b V\|_{\max}$. Also, $\|\b\Sigma_{\b U}-\b\Sigma_{\b V}\|_{\max}=\max_{1\leq i,j\leq n }|T^{-1}\sum_{t=1}^T U_{i,t}U_{jt} - V_{i,t} V_{ijt}|\leq \max_{i,j,t}|U_{i,t}U_{jt} - V_{i,t} V_{jt}|$ and $
|U_{i,t}U_{jt} - V_{i,t} V_{jt}|\leq |(U_{i,t}-V_{i,t})U_{jt} + (U_{jt}-V_{jt})V_{i,t}|\leq \|\b U-\b V\|_{\max}(\|\b U\|_{\max}+\|\b V\|_{\max}).
$. Combine the 3 bounds to obtain the first result.

For the second part of the lemma notice that for any $\b x\in\R^n$ we have $|\b x'\b\Sigma_{\b U} \b x - \b x'\b\Sigma_{\b V} \b x| = |\b x'(\b\Sigma_{\b U} -\b\Sigma_{\b V} )\b x|\leq\|\b\Sigma_{\b U}-\b\Sigma_{\b V}\|_{\max}\|\b x\|_1^2$. Also, if $\|\b x_{\S^c}\|_1\leq \zeta \|\b x_{\S}\|_1$ we have that $\|\b x\|_1=\|\b x_{\S}\|_1 + |\b x_{\S^c}\|_1\leq (1+\zeta) \|\b x_{\S}\|_1\leq (1+\zeta)\sqrt{\b x'\b\Sigma_{\b V}\b x |\S|}/\kappa(\b\Sigma_{\b V},\S,\zeta)$ where the last inequality follows from the definition of compatibility condition. Thus $|\b x'\b\Sigma_{\b U} \b x - \b x'\b\Sigma_{\b V} \b x| \leq \|\b\Sigma_{\b U}-\b\Sigma_{\b V}\|_{\max}(1+\zeta)^2\b x'\b\Sigma_{\b V}\b x |\S|/\kappa(\b\Sigma_{\b V},\S,\zeta)\leq\b x'\b\Sigma_{\b V}\b x/2 $, where the last inequality follows from the definition of compatibility condition. Therefore, we have that $(1-\alpha)\b x'\b\Sigma_{\b V}\b x\leq \b x'\b\Sigma_{\b U} \b x\leq (1+\alpha)\b x'\b\Sigma_{\b V}\b x$ whenever $\|\b x_{\S^c}\|_1\leq \zeta \|\b x_{\S}\|_1$.
\end{proof}

\begin{lemma} \label{L:bound_LASSO} Let $\b W:= (\b U,\b V)$ and $\b Z:=(\b X,\b Y)$ be $T\times (n+1)$ matrices such that $\|\b W - \b Z\|_{\max}\leq C_1$ and $\|\b Z\|_{\max}\leq C_2$, then for any $\b \delta\in\R^n$ we have
\[\|\b U'(\b V - \b U\b\delta)/T-\b X'(\b Y-\b X\b\delta)/T\|_\infty\leq  (1+\|\b\delta\|_1)C_1(2C_2+C_1) \]
\end{lemma}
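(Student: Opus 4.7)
The proof should reduce directly to the elementary product-difference bound already proven in Lemma \ref{L:compatibility_cond}. The plan is to expand the quantity inside the norm and split it via the triangle inequality into a ``cross-covariance'' piece and a ``Gram-matrix'' piece:
\[
\frac{\b U'(\b V-\b U\b\delta)}{T}-\frac{\b X'(\b Y-\b X\b\delta)}{T} \;=\; \Bigl(\frac{\b U'\b V}{T}-\frac{\b X'\b Y}{T}\Bigr) \;-\; \Bigl(\frac{\b U'\b U-\b X'\b X}{T}\Bigr)\b\delta,
\]
so $\|\,\cdot\,\|_\infty$ on the left is bounded by the sum of the $\|\,\cdot\,\|_\infty$-norms of the two terms on the right.

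For the first term, the $i$-th coordinate is $T^{-1}\sum_{t}(U_{it}V_t-X_{it}Y_t)$, which is bounded entry-wise by $\max_{i,t}|U_{it}V_t-X_{it}Y_t|$. Writing $U_{it}V_t - X_{it}Y_t=(U_{it}-X_{it})V_t+X_{it}(V_t-Y_t)$ and using $\|\b W\|_{\max}\leq \|\b W-\b Z\|_{\max}+\|\b Z\|_{\max}\leq C_1+C_2$ (reverse triangle inequality), each summand is at most $C_1(2C_2+C_1)$. This is exactly the argument used in the first half of Lemma \ref{L:compatibility_cond}, applied to the column pair $(\b U,\b V)$ versus $(\b X,\b Y)$ instead of to a pair of full matrices.

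For the second term, I would apply the elementary bound $\|A\b\delta\|_\infty\leq \|A\|_{\max}\|\b\delta\|_1$ (Hölder applied row-wise) with $A=(\b U'\b U-\b X'\b X)/T$, and then invoke the first conclusion of Lemma \ref{L:compatibility_cond} (setting the $\b U,\b V$ of that lemma to be our $\b U,\b X$, which still satisfy $\|\b U-\b X\|_{\max}\leq\|\b W-\b Z\|_{\max}\leq C_1$ and $\|\b X\|_{\max}\leq C_2$) to obtain $\|A\|_{\max}\leq C_1(2C_2+C_1)$. Hence this term is bounded by $C_1(2C_2+C_1)\|\b\delta\|_1$.

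Adding the two bounds yields $(1+\|\b\delta\|_1)\,C_1(2C_2+C_1)$, which is exactly the stated inequality. There is no real obstacle: the lemma is a direct corollary of Lemma \ref{L:compatibility_cond} combined with one Hölder step, and the only minor bookkeeping is making sure the decomposition is set up so that each piece matches the form to which Lemma \ref{L:compatibility_cond} applies.
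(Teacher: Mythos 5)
Your proof is correct and arrives at the same constants, but it uses a different decomposition than the paper's. The paper keeps the residual vectors intact: it sets $\b q := \b V - \b U\b\delta$ and $\b r := \b Y - \b X\b\delta$, bounds $\|\b r\|_\infty\le(1+\|\b\delta\|_1)C_2$ and $\|\b q-\b r\|_\infty\le(1+\|\b\delta\|_1)C_1$ by H\"older so that the $(1+\|\b\delta\|_1)$ factor is acquired up front, and then telescopes once via $\b U'\b q - \b X'\b r = (\b U-\b X)'\b q + \b X'(\b q - \b r)$. You instead split first into a cross-covariance piece $\b U'\b V/T - \b X'\b Y/T$ and a Gram piece $[(\b U'\b U - \b X'\b X)/T]\,\b\delta$, telescope each one entrywise, and let the $\|\b\delta\|_1$ factor enter only through the H\"older step $\|A\b\delta\|_\infty\le\|A\|_{\max}\|\b\delta\|_1$ on the second piece. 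The two arrangements yield the identical bound because both pair $\|\b W-\b Z\|_{\max}\le C_1$ with $\|\b Z\|_{\max}\le C_2$ and $\|\b W\|_{\max}\le C_1+C_2$. What your version buys is modularity: the Gram piece is literally the first half of Lemma \ref{L:compatibility_cond} applied to $(\b U,\b X)$, and the cross-covariance piece is the same entrywise argument applied to two columns drawn from different blocks of $\b W$ and $\b Z$ (a one-line extension, which you correctly write out rather than merely cite); the paper's version instead re-runs that argument from scratch with $\b q,\b r$ in place of columns. Both are valid.
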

\begin{proof} For convenience let $q:=V-U\delta\in\R^T$ and $r:=Y-X\delta\in\R^T$, then Hölder's inequality gives us $\|r\|_\infty\leq (1+\|\delta\|_1)\|Z\|_{\max}\leq (1+\|\delta\|_1)C_2$ and $\|q-r\|_\infty\leq (1+\|\delta\|_1)\|W-Z\|_{\max}\leq (1+\|\delta\|_1)C_1$. From the (reverse) triangle inequality we obtain $\|q\|_\infty\leq \|q-r\|_\infty + \|r\|_\infty\leq (1+\|\delta\|_1)(C_1 + C_2)$. Now, following the same steps in the proof of the previous Lemma, we can upper bound the right-hand side of the display by $\|U-X\|_{\max}\|q\|_\infty + \|q-r\|_\infty\|X\|_{\max}$, which in turn can be upper bounded by  the left-hand side of the display.
\end{proof}

\subsection{Inference Procedure Results}

\begin{lemma}\label{L:useful_decompostion} Let $X$ and $Y$ be random elements defined in the same probability space $(\Omega,\m{F},\P)$ taking values in the metric space $(S,d)$. Then for measurable $A$ and $\delta\geq 0$
\begin{align*}
-\P(Y\in A\setminus A^{-\delta}) - \P(d(X,Y)>\delta) &\leq \P(X\in A) - \P(Y\in A)\\
&\leq \P(Y\in A^\delta\setminus A) + \P(d(X,Y)>\delta),
\end{align*}
where $A^\delta:=\{x\in S:d(x,A)\leq \delta\}$, $A^{-\delta}:=S\setminus(S\setminus A)^\delta$ and $d(x,A):=\inf_{y\in A} d(x,y)$.

Let $\m{A}$ be a class of measurable subsets of $S$ then
\[
\rho_\m{A}(X,Y)\leq \inf_{\delta>0}\big[ \P(d(X,Y)>\delta)+\Delta_\mathcal{A}(Y,\delta) \big],  
\]
where $\rho_\m{A}(X,Y):=\sup_{A\in \m{A}}|\P(X\in A) -\P(Y\in A)|$ and $\Delta_\m{A}(Y,\delta):=\sup_{A\in \m{A}}\P(Y\in A^\delta\setminus A^{-\delta})$.

In particular, if $d$ is induced by a norm $\|\cdot\|$ we have for all $t\in\R$ and $\delta>0$
\begin{align*}
-\P(t-\delta\leq\|Y\|\leq t) - \P(\|X-Y\|>\delta) &\leq \P(\|X\|\leq t) - \P(\|Y\|\leq t)\\
&\leq \P(t<\|Y\|\leq t+\delta) + \P(\|X-Y\|>\delta).
\end{align*}

\begin{proof}
For the right hand side, we use $\{X\in A\}\cap\{d(X,Y)\leq \delta\}\subseteq\{Y\in A^\delta\}$ to write
\begin{align*}
\P(X\in A) &=  \P(X\in A, d(X,Y)\leq \delta)  + \P(X\in A,d(X,Y)> \delta)\\
&\leq \P(Y\in A^\delta)  + \P(d(X,Y)> \delta)\\
&=  \P(Y\in A) + \P(Y\in A^\delta\setminus A)  + \P(d(X,Y)> \delta).
\end{align*}
For the left hand side, we use $\{Y\in A^{-\delta}\}\cap\{d(X,Y)\leq \delta\}\subseteq\{X\in A\}$ to write
\begin{align*}
\P(X\in A) &\geq   \P(Y\in A^{-\delta}, d(X,Y)\leq \delta)  \\
&\geq \P(Y\in A^{-\delta})  - \P(d(X,Y)> \delta)\\
&=  \P(Y\in A) - \P(Y\in A\setminus A^{-\delta})  - \P(d(X,Y)> \delta).
\end{align*}
The first result follows. For the second one we have
\begin{align*}
|\P(X\in A) - \P(Y\in A)|&\leq  \P(d(X,Y)>\delta) +\P(Y\in A^\delta\setminus A)\lor \P(Y\in A\setminus A^{-\delta})\\
&\leq  \P(d(X,Y)>\delta) +\P(Y\in A^\delta\setminus A^{-\delta}).
\end{align*}
Take the supremum over $A\in\m{A}$ we have $\rho_\m{A}(X,Y)\leq \P(d(X,Y)>\delta)+\Delta_\mathcal{A}(Y,\delta)$. Take the infimum over $\delta>0$
to obtain the second result.

For the last one, take $A=B_t$ where $B_t$ is $\|\cdot\|$-closed ball of radius $t$ centered at the origin for $t\geq 0$ and the empty set otherwise, then $\{X\in A\}=\{\|X\|\leq t\}$ and $\{Y\in A\}=\{\|Y\|\leq t\}$. Also $\{Y\in A^\delta\setminus A\} = \{t<\|Y\|\leq t+\delta\}$ since $A^\delta = B_{t+\delta}$ and $\{Y\in A\setminus A^{-\delta}\} = \{t-\delta\leq\|Y\|\leq t\}$ since $A^{-\delta}$ is the open ball of radius $t-\delta$
\end{proof}
\end{lemma}

\begin{lemma}\label{L:mean_bound} Let $\|\widehat{\b U}- \b U\|_{\max}\lsim_\P \varrho_U$ for some positive sequence of $n$ and $T$, then
\[
\left\| \frac{1}{\sqrt{T}} (\widehat{\b U}\widehat{\b U}' - \b U\b U') \right\|_{\max}\lsim_\P \sqrt{T}\varrho_U^2+ \mathscr{R}_\alpha\left(\frac{ n^{9/p}}{\sqrt{T}}+\frac{n^{6/p}}{T^{1/2-1/p}}\right) +\frac{1}{n^{1/2-9/p}}.
\]
\end{lemma}
\begin{proof} By the triangle inequality we have
\begin{align*}
 \left\| \frac{1}{\sqrt{T}} (\widehat{\b U}\widehat{\b U}' - \b U\b U') \right\|_{\max}\leq   \left\| \frac{1}{\sqrt{T}} (\widehat{\b U} - \b U)(\widehat{\b U} - \b U)' \right\|_{\max} + 2  \left\| \frac{1}{\sqrt{T}} \b U(\widehat{\b U} - \b U)' \right\|_{\max}.
\end{align*}
For the first term, we have
\begin{align*}
       \left\| \frac{1}{\sqrt{T}} (\widehat{\b U} - \b U)(\widehat{\b U} - \b U)' \right\|_{\max}\leq\sqrt{T}\|\widehat{\b U} - \b U\|_{\max}^2 \lsim_\P \sqrt{T}\varrho_U^2).
\end{align*}
For the second term we use decomposition \eqref{E:factor_decomposition} to write
\begin{align*}
       \frac{1}{\sqrt{T}}\sum_{t=1}^T U_{i,t}(\widehat{U}_{jt} - U_{jt}) &= \frac{1}{\sqrt{T}}\sum_{t=1}^T U_{i,t}(\widehat{\b\lambda}_{j}'\widehat{\b F}_t - \b\lambda_{j}'\b F_t + \widehat{R}_{jt} - R_{jt})\\
       &= \left[(\widehat{\b\lambda}_j - \b H \b\lambda_j) +\b H\b\lambda_j\right]'\frac{1}{\sqrt{T}}\sum_{t=1}^T U_{i,t}(\widehat{\b F}_t- \b H\b F_t)\\
       &\quad + \left[(\widehat{\b\lambda}_j - \b H \b\lambda_j) +(\b H'\b H - I_r)\b\lambda_j\right]'\frac{1}{\sqrt{T}}\sum_{t=1}^T U_{i,t}\b F_t\\
       &\quad + (\widehat{\b\gamma}_j -  \b\gamma_j)'\frac{1}{\sqrt{T}}\sum_{t=1}^T U_{i,t}\b X_{jt}.
\end{align*}
Apply Cauchy-Schwartz inequality in each term followed by the triangle inequality we obtain
\begin{align}\label{E:split}
      \left\| \frac{1}{\sqrt{T}} \b U(\widehat{\b U} - \b U)' \right\|_{\max}
       &\leq \left[\max_{j\leq n}\|\widehat{\b\lambda}_j - \b H \b\lambda_j\| +\sqrt{r}\|\b H\|\|\b\Lambda\|_{\max}\right]\max_{i\leq n}\left\|\frac{1}{\sqrt{T}}\sum_{t=1}^T U_{i,t}(\widehat{\b F}_t- \b H\b F_t)\right\|\nonumber\\
       &\quad + \left[\max_{j\leq n}\|\widehat{\b\lambda}_j - \b H \b\lambda_j\| +\sqrt{r}\|\b H'\b H - I_r\|\|\b\Lambda\|_{\max}\right]\max_{i\leq n}\left\|\frac{1}{\sqrt{T}}\sum_{t=1}^T U_{i,t}\b F_t\right\|\nonumber\\
       &\quad + \max_{j\leq n}\|\widehat{\b\gamma}_j -  \b\gamma_j\|\max_{i,j\leq n}\left\|\frac{1}{\sqrt{T}}\sum_{t=1}^T U_{i,t}\b X_{jt}\right\|.
\end{align}

Recall $\delta_{i,t}:=\widehat{R}_{i,t}-R_{i,t}$. From \eqref{E:LS_decomposition0} and Holder's inequality, we conclude that
\[
    \max_{i,t}\viii{\delta_{i,t}}_{p/4}\lsim \viii{\|\widehat{\b\Sigma}_i^{-1}\|}_p \viii{\|\widehat{\b v}_i\|_\infty}_{p/2}\viii{\|\b X_{i,t}\|_\infty}_p\lsim \mathscr{R}_\alpha/\sqrt{T},
\]
where $\mathscr{R}_\alpha$ is defined in Theorem \ref{T:LS}. We now use this last result to show that
\begin{equation}\label{E:V_F}
    \viii{\|(\b V/n)(\b F_t - \b H \b F_t)\|_2}_{p/8} = O\left(\frac{\mathscr{R}_\alpha}{\sqrt{T}} +\frac{1}{\sqrt{n}}\right),
\end{equation}
which in turns uses identity \eqref{E:Bai_identity} and the fact that for any random variable $A_{st}$, by Cauchy-Schwartz inequality and the normalization $\widehat{\b F}\widehat{\b F}/T =\b I_r$, we have $\|\frac{1}{T}\sum_{s=1}^T \widehat{\b F}_s A_{st}\|\leq \sqrt{r}\left(\frac{1}{T}\sum_{s=1}^T A_{st}^2\right)^{1/2}$. Thus
\[g(A_{st}):=\viii{\left\|\frac{1}{T}\sum_{s=1}^T \widehat{\b F}_s A_{st}\right\|}_{p/8} \leq \sqrt{r} \left[\viii{\left(\frac{1}{T}\sum_{s=1}^T A_{st}^2\right)^{1/2}}_{p/8}\right].\]
\begin{enumerate}[(a)]
    \item Set $A_{st} = \E(\b U_s'\b U_t)/n$, then $g(A_{st})\lsim 1/\sqrt{T}$.
    \item Set $A_{st}=\widetilde{\zeta}_{st}:=(\widetilde{\b U}_s'\widetilde{\b U}_t - \E(\b U_s'\b U_t))/n$. By the triangle inequality, $\viii{\left(\frac{1}{T}\sum_{s=1}^T A_{st}^2\right)^{1/2}}_{p/8}\leq \max_{s\in[T]}\viii{\widetilde{\zeta}_{st}}_{p/8}$, and  $\viii{\widetilde{\zeta}_{st}}_{p/8}\leq \viii{\zeta_{st}}_{p/8} + \viii{\zeta_{st}^*}_{p/8}$. The first term is $\lsim 1/\sqrt{n}$ by Assumption \ref{Ass:Moments}(c.2). The second can be upper bounded by $\viii{\b U_s'\b\delta_t/n}_{p/8} + \viii{\b \delta_s'\b U_t/n}_{p/8} +\viii{\b \delta_s'\b\delta_t/n}_{p/8}\lsim \max_{i\in[n]}\viii{U_{i,s}}_{p/4}\viii{\delta_{i,t}}_{p/4} + \max_{i\in[n]}\viii{\delta_{i,t}}_{p/4}^2$. Thus $g(A_{st}) \lsim 1/\sqrt{n} + \mathscr{R}_\alpha/\sqrt{T} $.
    \item Set $A_{st} =\widetilde{\eta}_{st}:=\b F_s'\sum_{i=1}^n \b\lambda_i (U_{i,t}+\delta_{i,t})/n$, then apply Cauchy-Schwartz twice to obtain
    \begin{align*}
        g(A_{st})&\leq\left(\viii{\left(\frac{1}{T}\sum_{s=1}^T \|\b F_s\|^2\right)^{1/2}}_{p/4}\viii{\sum_{i=1}^n\b\lambda_{i}\frac{U_{i,t}+\delta_{i,t}}{n}}_{p/4}\right)\\
        &\lsim\viii{\sum_{i=1}^n \b\lambda_i \frac{U_{i,t}}{n}}_{p/4} + \viii{\sum_{i=1}^n \b\lambda_i\frac{\delta_{i,t}}{n}}_{p/4}.
    \end{align*}
    The first term is $\lsim 1/\sqrt{n}$  by Assumption \ref{Ass:Factor_Model}(d) and \ref{Ass:Moments}(c.1); the second is $\lsim \mathscr{R}_\alpha/\sqrt{T}$. Hence $g(A_{st})\lsim \frac{1}{\sqrt{n}} +  \mathscr{R}_\alpha/\sqrt{T}$.
    \item Set $A_{st} =\widetilde{\xi}_{st}:=\b F_t'\sum_{i=1}^n \b\lambda_i (U_{is}+\delta_{is})/n$, then apply Cauchy-Schwartz twice followed by the maximal inequality to obtain
    \begin{align*}
        g(A_{st})&\leq\viii{\|\b F_t\|^{1/2}}_{p/4}\viii{\left(\frac{1}{T}\sum_{s=1}^T\left\|\sum_{i=1}^n\b\lambda_{i}\frac{U_{is}+\delta_{is}}{n}\right\|^2\right)^{1/2}}_{p/4}\\
        &\lsim \max_{s\in[T]}\viii{\sum_{i=1}^n \b\lambda_i \frac{U_{is}}{n}}_{p/4} + \max_{s\in[T]}\viii{\sum_{i=1}^n \b\lambda_i\frac{\delta_{is}}{n}}_{p/4}.
    \end{align*}
    The first term in square brackets is $\lsim 1/\sqrt{n}$  by Assumption \ref{Ass:Factor_Model}(d) and \ref{Ass:Moments}(e); the second is $\lsim  \mathscr{R}_\alpha/\sqrt{T}$. Hence $g(A_{st}) \lsim \frac{1}{\sqrt{n}} +  \mathscr{R}_\alpha/\sqrt{T}$.
\end{enumerate}
Finally, use the identity \eqref{E:Bai_identity}, the triangle inequality twice and the bounds $(a)$-$(d)$ to obtain \eqref{E:V_F}. Also, by Holder's inequality
\begin{align*}
    \viii{\|U_{i,t}(\b V/n)(\b F_t - \b H \b F_t)\|_2}_{p/9} &\leq \viii{U_{i,t}}_{p} \viii{\|U_{i,t}(\b V/n)(\b F_t - \b H \b F_t)\|_2}_{p/8}\\
    &\lsim \frac{ \mathscr{R}_\alpha}{\sqrt{T}} +\frac{1}{\sqrt{n}}.
\end{align*}

The first term of \eqref{E:split} is $\lsim_\P n^{9/p}\left(\frac{\mathscr{R}_\alpha}{\sqrt{T}} +\frac{1}{\sqrt{n}}\right)$ due to Lemma \ref{L:10FLM}(a), the result above and the maximal inequality; the second term is $\lsim_\P(\frac{\mathscr{R}_\alpha  n^{4/p}}{T^{1/2-1/p}}  + \frac{1}{\sqrt{n}})n^{2/p}$ since, from Theorem \ref{T:LS}, we might take $\varrho_R = \frac{\mathscr{R}_\alpha  n^{4/p}}{T^{1/2-1/p}}$ in Theorem \ref{T:FM}(b). The last term is $\lsim_\P (\mathscr{R}_\alpha n^{3/p}/\sqrt{T})n^{4/p}$. Thus
\begin{equation}\label{E:rate_UU}
  \left\| \frac{1}{\sqrt{T}} \b U(\widehat{\b U} - \b U)' \right\|_{\max} \lsim_\P \mathscr{R}_\alpha\left(\frac{ n^{9/p}}{\sqrt{T}}+\frac{n^{6/p}}{T^{1/2-1/p}}\right) +\frac{1}{n^{1/2-9/p}}.
\end{equation}
The result then follows.
\end{proof}

\begin{lemma}\label{L:NW_bound} Let $\|\widehat{\b U}- \b U\|_{\max}\lsim_\P \varrho_U$ for some non-negative sequence $\varrho_U$ depending on $n$ and $T$, then, for all $ h >0$ and $\m{D}\subseteq[n]^2$,
\[
\|\widehat{\b\Upsilon}_\Sigma - \b\Upsilon_\Sigma\|_{\max}\lsim_\P h[\varrho_U(nT)^{3/p} +d^{8/p}/\sqrt{T}]
\]
in the polynomial case; and 
\[
\|\widehat{\b\Upsilon}_\Sigma - \b\Upsilon_\Sigma\|_{\max}\lsim_\P h[\varrho_U(\psi^{-1}_p(nT))^3 +\psi_{p/4}^{-1}(n^4)/\sqrt{T}] 
\]
in the exponential case, where $d:=|\m{D}|$.
\end{lemma}
\begin{proof} Let $\b i:=(i_1,i_2,i_3,i_4)\in\m{D}^2$ be a multi-index where $(i_1,i_2)\in\m{D}$ and $(i_3,i_4)\in\m{D}$. Define for $\b i$ and $|\ell|<T$:
\[\widetilde{\gamma}_{\b i}^\ell :=\frac{1}{T}\sum_{t=|\ell| +1}^T U_{i_1,t}U_{i_2,t}U_{i_3,t-|\ell|}U_{i_4,t-|\ell|};\qquad  \gamma_{\b i}^\ell := \E\widetilde{\gamma}_{\b i},\]
and $\widehat{\gamma}_{\b i}^\ell$ as $ \widetilde{\gamma}_{\b i}^\ell$ with $U$'s replaced by $\widehat{U}$'s. Also, define
\[\widetilde{\upsilon}_{\b i} := \sum_{|\ell|<T} k(\ell/h)\widetilde{\gamma}_{\b i}^\ell\qquad \upsilon_{\b i} := \sum_{|\ell|<T}\gamma_{\b i}^\ell,\]
and $\widehat{\upsilon}_{\b i}$ as $ \widetilde{\upsilon}_{\b i}$ with $U$'s replaced by $\widehat{U}$'s. Then we write
\begin{align}
  \widetilde{\upsilon}_{\b i}-\upsilon_{\b i} =   \sum_{|\ell|<T} k(\ell/h)(\widetilde{\gamma}_{\b i}^\ell - \gamma_{\b i}^\ell) + \sum_{|\ell|<T} (k(\ell/h)-1)\gamma_{\b i}^\ell.
\end{align}
By Cauchy-Schwartz inequality we have that $\viii{U_{i_1,t}U_{i_2,t}U_{i_3,t-|\ell|}U_{i_4,t-|\ell|}}_{(p+\epsilon)/4}\leq C^4$, then by Lemma \ref{L:tail_bound_poly} we obtain $\viii{\widetilde{\gamma}_{\b i}^\ell - \gamma_{\b i}^\ell}_{p/4} = \lsim \sqrt{T-|\ell|}/T = \lsim 1/\sqrt{T}$,
 the $\m{L}_{p/4}$-norm of the first term is bounded by
\[h\sum_{|\ell|<T} |h^{-1} k(\ell/h)|\viii{\widetilde{\gamma}_{\b i}^\ell - \gamma_{\b i}^\ell}_{p/4} \lsim \left( \frac{h}{\sqrt{T}}\int |k(u)|du \right)= \lsim h/\sqrt{T},\]
whereas the second term is deterministic and is shown to be $\lsim h/\sqrt{T}$ by \cite{andrews91}. Thus $\viii{\widetilde{\upsilon}_{\b i}-\upsilon_{\b i}}_{p/4} = \lsim h/\sqrt{T})$ uniformly in $ \b i\in[n]^4$. Thus, union bound followed by Markov's inequality give us, for $x>0$, 
\begin{equation}\label{E:bond1_cov}
\P(\max_{\b i}|\widetilde{\upsilon}_{\b i}-\upsilon_{\b i}|\geq x)\leq d^2\max_{\b i}\P(|\widetilde{\upsilon}_{\b i}-\upsilon_{\b i}|\geq x)\lsim\left(\frac{d^{8/p}h}{x\sqrt{T}}\right)^{p/4}.
\end{equation}

We now use the fact that for any $\b x:=(x_1,\dots,x_4)'\in\R^4$ and $\b y:=(y_1,\dots, y_4)'\in\R^4$ we have $|\prod_{i=1}^4 x_i - \prod_{i=1}^4 y_i|\lsim \sum_{i=0}^{3}\|\b x-\b y\|_\infty^{3-i}\|\b y\|_\infty^i$ combined with the fact that $\|\widehat{\b U} - \b U\|_{\max}\lsim 1$ to obtain

\begin{align*}
   \max_{\b i, \ell} |\widehat{\gamma}_{\b i}^\ell - \widetilde{\gamma}_{\b i}^\ell|&\leq \max_{\b i,t,\ell}|\widehat{U}_{i_1,t}\widehat{U}_{i_2,t}\widehat{U}_{i_3,t-|\ell|}\widehat{U}_{i_4,t-|\ell|} - U_{i_1,t}U_{i_2,t}U_{i_3,t-|\ell|}U_{i_4,t-|\ell|}|\\
   &\lsim (\|\widehat{\b U} - \b U\|_{\max}\|\b U\|_{\max}^3)\\
   &\lsim_\P \{\varrho_U[\psi^{-1}(nT)]^3\}
\end{align*}
Therefore we conclude
\begin{align}\label{E:bond2_covb}
   \max_{\b i}|\widehat{\upsilon}_{\b i} - \widetilde{\upsilon}_{\b i}|&\leq \max_{\b i, \ell} |\widehat{\gamma}_{\b i}^\ell - \widetilde{\gamma}_{\b i}^\ell|\sum_{|\ell|<T} |k(\ell/h)|\\
   &\lsim_\P \left\{ h\varrho_U[\psi^{-1}(nT)]^3\int |k(u)|du\right\} \lsim_\P h\varrho_U[\psi^{-1}(nT)]^3.\nonumber
\end{align}

The result then follows from the triangle inequality $\|\widehat{\b\Upsilon} - \b\Upsilon\|_{\max}\leq \max_{\b i}|\widehat{\upsilon}_{\b i} - \widetilde{\upsilon}_{\b i}| +\max_{\b i}|\widetilde{\upsilon}_{\b i}-\upsilon_{\b i}|$, expression \eqref{E:bond1_cov} and \eqref{E:bond2_covb}.
\end{proof}

\begin{lemma}\label{L:V_bond} Let $\|\widehat{\b U}- \b U\|_{\max}\lsim_\P \varrho_U$ and $\max_{i,j\in[n]}\|\widehat{\b\chi}_{i,j} - \b\chi_{i,j} \|_1\lsim_\P \varrho_\chi$ for some non-negative sequences $\varrho_U$ and $\varrho_\chi$ depending on $n$ and $T$,  then
\begin{enumerate}[(a)]
    \item $\max\limits_{(i,j)\in\m{D},t\in [T]} |\widehat{V}_{i,j,t}-V_{i,j,t}|\lsim_\P (1 + \widetilde{s}_1) \varrho_U + \varrho_\chi n^{1/p}$;
    \item $\max\limits_{(i,j)\in\m{D}} \left|\frac{1}{\sqrt{T}}\sum_{t=1}^T(\widehat{V}_{i,j,t}\widehat{V}_{j,i,t} - V_{i,j,t}V_{i,j,t}) \right|\lsim_\P (1+\widetilde{s}_1+\varrho_\chi)^2 \varrho_{UU} + \varrho^2_\chi n^{4/p}\sqrt{T}$,
\end{enumerate}
where $\widetilde{s}_1:=\max_{(i,j)\in\m{D}}\|\b\chi_{i,j}\|_1$ and $\varrho_{UU}$ is the rare appearing in Lemma \ref{L:mean_bound}.
\end{lemma}

\begin{proof} By the triangle inequality we have, for $i,j\in[n]$ and $t\in[T]$,
\begin{align*}
\widehat{V}_{i,j,t}-V_{i,j,t}
&= \widehat{U}_{i,t} -U_{i,t} -\widehat{\b \chi}_{i,j}' \widehat{\b U}_{-ij,t} + \b \chi_{i,j}'\b U_{-ij,t}\\
&= \widehat{U}_{i,t} -U_{i,t} -\widehat{\b \chi}_{i,j}' (\widehat{\b U}_{-ij,t} -\b U_{-ij,t})  - (\widehat{\b \chi}_{i,j} -\b \chi_{i,j})'\b U_{-ij,t}\\
&= \widehat{U}_{i,t} -U_{i,t}-\big[\b \chi_{i,j} + (\widehat{\b \chi}_{i,j} -\widehat{\b \chi}_{i,j})\big]' (\widehat{\b U}_{-ij,t} -\b U_{-ij,t})  -(\widehat{\b \chi}_{i,j} -\b \chi_{i,j})'\b U_{-ij,t},
\end{align*}

Therefore, result (a) follows by Hölder's and the triangle inequality since
\begin{align*}
    \max_{i,j,t} |\widehat{V}_{i,j,t}-V_{i,j,t}|
&\leq (\|\b \chi_{i,j}\|_1 +\| \widehat{\b \chi}_{i,j} -\b \chi_{i,j}\|_1 )\|\widehat{\b U}_{-ij,t} - \b U_{-ij,t}\|_\infty \\
&\qquad +|\widehat{U}_{i,t}- U_{i,t}|  + \|\widehat{\b \chi}_{i,j} -\b \chi_{i,j}\|_1\|\b U_{-ij,t}\|_\infty \\
    &\lsim (1+\max_{(i,j)\in\m{D}}\|\b\chi_{i,j}\|_1) \|\widehat{\b U}-\b U\|_{\max} + \max_{(i,j)\in\m{D}}\|\widehat{\b\chi}_{i,j}-\b\chi_{i,j}\|_1\|\b U\|_{\max}\\
    &\lsim_\P \big[(1 +\max_{(i,j)\in\m{D}}\|\b\chi_{i,j}\|_1) \varrho_U + \varrho_\chi n^{1/p}\big].
\end{align*}

For (b), we write for $i,j\in[n]$ and $t\in[T]$
\begin{align*}
    \widehat{V}_{i,j,t}\widehat{V}_{j,i,t} - V_{i,j,t}V_{i,j,t}&=\widehat{U}_{i,t}^2 - U_{i,t}^2 +\widehat{\b\chi}_{i,j}'(\widehat{\b U}_{-ij,t}\widehat{\b U}_{-ij,t}' - \b U_{-ij,t}\b U_{-ij,t}')\widehat{\b\chi}_{i,j}\\
    &\qquad +(\widehat{\b\chi}_{i,j} - \b\chi_{i,j})' \b U_{-ij,t}\b U_{-ij,t}'(\widehat{\b\chi}_{i,j} - \b\chi_{i,j}).
\end{align*}
Then, by Holder's and the triangle inequality, we obtain
\begin{align*}
       \max_{i,j}\left|\tfrac{1}{\sqrt{T}}\sum_{t\in[T]}(\widehat{V}_{i,j,t}\widehat{V}_{j,i,t} - V_{i,j,t}V_{i,j,t})\right| &\leq (1+\max_{i,j\in[n]}\|\widehat{\b\chi}_{i,j}\|_1)^2\|\tfrac{1}{\sqrt{T}}(\widehat{\b U}\widehat{\b U}'-\b U\b U')\|_{\max} \\ 
       &\quad +\max_{i,j\in[n]}\|\widehat{\b\chi}_{i,j} - \b\chi_{i,j}\|_1^2\|\tfrac{1}{\sqrt{T}}\b U\b U')\|_{\max}.
\end{align*}
Now, $\max_{i,j\in[n]}\|\widehat{\b\chi}_{i,j}\|_1\leq \max_{i,j\in[n]}\|\b\chi_{i,j}\|_1 +\max_{i,j\in[n]}\|\widehat{\b\chi}_{i,j} - \b\chi_{i,j}\|_1 \lsim_\P \widetilde{s}_1+\varrho_\chi$. The order in probability of $\|\tfrac{1}{\sqrt{T}}(\widehat{\b U}\widehat{\b U}'-\b U\b U')\|_{\max}$ is given by Lemma \ref{L:mean_bound}. By Cauchy-Schwartz inequality and Assumption \ref{Ass:Moments} we have $\viii{U_{i,t} U_{j,t}}_{p/2}\leq \viii{U_{i,t}}_p\viii{U_{j,t}}_p\leq C^2$, then by the triangle inequality, $\viii{\sum_{t\in[T]}U_{i,t}U_{j,t}}_{p/2}\leq \sum_{t\in[T]}\viii{U_{i,t} U_{j,t}}_{p/2}\lsim T$. Finally, by the union bound followed by Markov's inequality, $\|\tfrac{1}{\sqrt{T}}\b U\b U')\|_{\max}\lsim_\P n^{4/p}\sqrt{T}$. The result follows.
\end{proof}

\begin{lemma}\label{L:NW_bound_partial}
Let $\|\widehat{\b U}- \b U\|_{\max}\lsim_\P \varrho_U$ and $\max_{i,j\in[n]}\|\widehat{\b\chi}_{i,j} - \b\chi_{i,j} \|_1\lsim_\P \varrho_\chi$ for some non-negative sequences $\varrho_U$ and $\varrho_\chi$ depending on $n$ and $T$ then
\[
\|\widehat{\b\Upsilon}_\Pi - \b\Upsilon_\Pi\|_{\max}\lsim_\P h\left((1 + \widetilde{s}_1) \varrho_U + \varrho_\chi n^{1/p}(1+\widetilde{s}_1)^3(nT)^{3/p} + \frac{(1+\widetilde{s}_2)^4n^{8/p}}{\sqrt{T}}\right),
\]
where $\widetilde{s}_k:=\max_{(i,j)\in\m{D}}\|\b\chi_{i,j}\|_k$ for $k\in\{1,2\}$.
\end{lemma}
\begin{proof} The proof is parallel to the proof of Lemma \ref{L:NW_bound}, refer to it for details. It suffices to bound in $\max_{i,j,t}\viii{V_{i,j,t}}_{p+\epsilon}$, $\max_{i,j,t}|\widehat{V}_{i,j,t}-V_{i,j,T}|$, and $\max_{i,j,t}  |V_{i,j,t}|$, where the maximum is over $(i,j)\in\m{D}\subseteq[n]^2$ and $t\in[T]$. For the first term we have, $\max_{i,j,t}\viii{V_{i,j,t}}_{p+\epsilon}\leq (1+\max_{i,j}\|\b\chi_{i,j}\|_2)\sup_{t\in[T]}\viii{\b U_t}_{p+\epsilon}\leq (1+M_2)C$ by Assumption \ref{Ass:Moments}. Lemma \ref{L:V_bond}(a) bounds the second term. The last one is upper bounded by $(1+\max_{(i,j)\in\m{D}}\|\b\chi_{ij}\|_1)\|\b U \|_{\max} \lsim_\P \left[(1+M)(nT)^{1/p}\right]$.
\end{proof}

\subsection{Orlicz Norm Results}

In this subsection, we show some useful results for  Orlicz norms. All these results can be found in \cite{VW1996} for the case of $\gamma\geq 1$. Recall that for $\gamma\in (0,1)$ we define
\[
\psi_{e^\gamma}(x) := \co (x\mapsto \exp(x^\gamma)-1),
\] 
where $\co(f)$ denote the convex hull of $f$.

\begin{lemma}\label{L:Orlicz_basic} We have
\begin{enumerate}[(a)]
    \item $\psi_{e^\gamma}(x) = K_\gamma x\1_{0\leq x<a_\gamma} +[\exp(x^\gamma)-1]\1_{x\geq a_\gamma}$ where $K_\gamma :=(\exp a_\gamma^\gamma-1)/a_\gamma$ and $a_\gamma:=\inf\{x\in\R_+: x\geq ((1-\gamma)/\gamma)^{1/\gamma}, K_\gamma\leq \gamma\exp(x^\gamma)/x^{1-\gamma}$. Also $((1-\gamma)/\gamma)^{1/\gamma}\leq a_\gamma\leq (1/\gamma)^{1/\gamma}$.
    \item For $p\in[1,\infty)$ and $\gamma>0$, $\viii{X}_p\leq C\viii{X}_{e^\gamma}$ for some constant $C$ depending only on $p$ and $\gamma$.
    \item For $0<\gamma_1\leq \gamma_2$, $\viii{X}_{e^{\gamma_1}}\leq C\viii{X}_{e^{\gamma_2}}$ for constant $C$ depending only on $\gamma_1$ and $\gamma_2$.
    \item For random variables $X_1,\dots ,X_n$, $\viii{\max_{j\in[n]}X_j}_{\psi_{e^\gamma}}\leq C\psi^{-1}_{e^\gamma}(n)\max_{j\in[n]}\viii{X_j}_{e^{\gamma}} $ for some constant $C$ only depending on $\gamma$.
\end{enumerate}
\end{lemma}
\begin{proof}
For part (a), it is straightforward to verify that $x\mapsto \exp(x^\gamma)-1$ is convex on the interval $[((1-\gamma)/\gamma)^{1/\gamma}\lor 0,\infty)$. Also, $\psi_{e^\gamma}$ is continuous at $z_\gamma$ (hence continuous on $\R_+)$. Therefore, convexity follows as the left derivative of $\psi_{e^\gamma}$ is no larger than the right derivative at $a_\gamma$ since, by definition, $K_\gamma\leq \gamma a_\gamma^{\gamma-1} \exp(a^\gamma_\gamma)$.

For part (b), we have $x^p = x^p 1 = f(x^p) + f^*(1)$ for any $x\in\R_+$  and $f:\R_+\to \R\cup \{\pm\infty\}$, where $f^*$ denote the convex conjugate of $f$. Take $f(x)=\psi_{e^\gamma}(x^{1/p})$,  $x = X(\omega)/\viii{X}_{e^\gamma}$ (where we assume $\viii{X}_{e^\gamma}>0$, otherwise the result is trivial because $X=0$ a.s) and take expectation with respect to the law of $X$ to obtain
\[\frac{\E|X|^p}{\viii{X}_{e^\gamma}^p}\leq \E\psi_{e^\gamma}\left(\frac{X}{\viii{X}_{e^\gamma}}\right) + f^*(1)\leq  1 + f^*(1),\]
where $\psi^*(1)=\sup_{x\geq 0}\{x - \psi_{e^\gamma}(x^{1/p})\}<\infty$ and only depends on $p$ and $\gamma$, hence we might take $C= (1+\psi^*(1))^{1/p}$ to conclude.

For (d), we have that $\psi_e^\gamma$ is convex (by part (a)), nondecreasing, nonzero function vanishing at the origin. Note that $x^\gamma + y^\gamma\leq 2 (xy)^\gamma$ for $x,y\geq 1$, then $\frac{\psi_{e^\gamma}(x)\psi_{e^\gamma}(y)}{\psi_{e^\gamma}(2^{1/\gamma} xy)} \leq  \frac{\exp(x^\gamma + y^\gamma)}{\exp(2(xy)^\gamma)}\leq 1$ for $x,y\geq a_\gamma\geq 1$. The result then follows from Lemma 2.2.2 in \cite{VW1996}.

\end{proof}

\begin{lemma}\label{L:Orlicz_equivalence} If $\viii{X}_{e^\gamma}<\infty$ for $\gamma>0$ , there are constants $C_1>0$ and $C_2>0$ such that
\[ \P(|X|>x)\leq C_1\exp[-(x/C_2)^\gamma]\qquad x>0 .\]
In particular,  if $0<\viii{X}_{e^\gamma}<\infty$, we might take $C_1 = 2 + \exp \big[(a_\gamma/\viii{X}_{e^\gamma})^\gamma\big]\1_{0<\gamma<1}$ and  $C_2 = \viii{X}_{e^\gamma}$, where $a_\gamma$ is defined in Lemma \ref{L:Orlicz_basic}(a).

Conversely,  if there are constants $C_1>0$ and $C_2>0$ such that $\P(|X|>x)\leq C_1\exp[-(x/C_2)^\gamma]$ for $x>0$, then
\[\viii{X}_{e^\gamma}\leq
\begin{cases}
 C_2\big[(1+2C_1)^{1/\gamma}\lor 2K_\gamma C_1 \Gamma(1/\gamma)/\gamma\big];&0< \gamma <1\\
 C_2(1+C_1)^{1/\gamma}; & \gamma\geq 1,
\end{cases}
\]
where $\Gamma(\cdot)$ denotes the Gamma function.
\end{lemma}
\begin{proof}  If $\viii{X}_{e^\gamma}=0$ then $X=0$ a.s and the inequality holds for any choice of $C_1,C_2>0$.  For the case  when $0<\viii{X}_{e^\gamma}<\infty$ we have, by Markov inequality and the fact that $x\mapsto \exp \big[(x/\viii{X}_{e^\gamma})^\gamma\big]$ is non-decreasing
\begin{align*}
    \P(|X|\geq x) &=\P\left(\exp \big[(|X|/\viii{X}_{e^\gamma})^\gamma\big]\geq\exp \big[(x/\viii{X}_{e^\gamma})^\gamma\big]\right)\\
    &\leq \exp \big[-(x/\viii{X}_{e^\gamma})^\gamma\big]\E \exp \big[(|X|/\viii{X}_{e^\gamma})^\gamma\big].
\end{align*}
Using Lemma \ref{L:Orlicz_basic}(a), we have
\begin{align*}
\E \exp \big[(|X|/\viii{X}_{e^\gamma})^\gamma\big] &=\E \exp \big[(|X|/\viii{X}_{e^\gamma})^\gamma\big]\1_{|X|<a_\gamma}
+ \E \exp \big[(|X|/\viii{X}_{e^\gamma})^\gamma\big]\1_{|X|\geq a_\gamma}\\
&\leq   \exp \big[(a_\gamma/\viii{X}_{e^\gamma})^\gamma\big]\1_{0<\gamma<1}+ \E\psi_{e^\gamma}(|X|/\viii{X}_{e^\gamma}) + 1\\
&\leq \exp \big[(a_\gamma/\viii{X}_{e^\gamma})^\gamma\big]\1_{0<\gamma<1}+ 2.
\end{align*}
Combine the last two displays to obtain the first result.

For the converse, we have for $c>0$, by Fubini's Theorem, 
\begin{align*}
\E \exp(|X/c|^\gamma) -1 &= \int\int_0^{|x|^\gamma}c^{-\gamma}\exp (c^{-\gamma} y)  dy \P(dx)\\
&=c^{-\gamma} \int_0^\infty\P(|X|\geq x^{1/\gamma})\exp(c^{-\gamma} x) dx.
\end{align*}
Since $\P(|X|>x)\leq C_1\exp[-(x/C_2)^\gamma]$,  for $c>C_2$,  we have
\begin{align*}
\E \exp[(|X/c|)^\gamma] -1 &\leq  c^{-\gamma} C_1\int_0^\infty\exp\left[-x(C_2^{-\gamma} - c^{-\gamma})\right]dx\leq \frac{c^{-\gamma} C_1}{C_2^{-\gamma}-c^{-\gamma}} = \frac{C_1}{(c/C_2)^{\gamma}-1}.
\end{align*}
Also
\begin{align*}
\E|X| = \int_0^\infty\P(|X|>x)dx\leq  C_1\int_0^\infty\exp[-(x/C_2)^\gamma]dx = \tfrac{C_1C_2}{\gamma}\Gamma(1/\gamma).
\end{align*}

Therefore using the last two displays and Lemma \ref{L:Orlicz_basic}(a), we have, for $c>C_2$,
\begin{align*}
\E\psi_{e^\gamma}(|X|/c) &\leq \frac{K_\gamma}{c}\E|X|\1_{0<\gamma<1}+ \E \exp[(|X|/x)^\gamma] -1\\
&\leq \frac{K_\gamma C_1C_2}{c\gamma}\Gamma(1/\gamma)\1_{0<\gamma<1} + \frac{C_1}{(c/C_2)^{\gamma}-1}.
\end{align*}
For $\gamma\geq 1$ the right hand side is less or equal than $1$ for $c\geq C_2(1+C_1)^{1/\gamma}$ hence $\viii{X}_{e^\gamma}\leq C_2(1+C_1)^{1/\gamma}$.  For $0<\gamma<1$,  the right hand side is less or equal than $1$ for $c\geq C_2(1+2C_1)^{1/\gamma}\lor 2K_\gamma C_1 C_2 \Gamma(1/\gamma)/\gamma$ then $\viii{X}_{e^\gamma}\leq  C_2(1+2C_1)^{1/\gamma}\lor 2K_\gamma C_1 C_2 \Gamma(1/\gamma)/\gamma$.
\end{proof}

\begin{lemma}\label{L:CS_exp} If $\viii{X}_{e^\gamma}<\infty$ and $\viii{Y}_{e^\gamma}<\infty$ for $\gamma>0$ then $\viii{XY}_{e^{\gamma/2}}<\infty$ for $\left(\viii{X}_{e^\gamma}^2\lor \viii{Y}_{e^\gamma}^2\right)$. In particular, $\viii{XY}_{e^{\gamma/2}}\leq C\left(\viii{X}_{e^\gamma}^4\lor \viii{Y}_{e^\gamma}^2\right)$ where $C=5^{2/\gamma}$ for $\gamma\geq 1$; and $C=(1+2C_1)^{2/\gamma}\lor 2K_{\gamma/2} C_0 \Gamma(2/\gamma)/\gamma$ with $C_0 = 2(2 + \exp \big[(a_\gamma/(\viii{X}_{e^\gamma}\land\viii{Y}_{e^\gamma} ))^\gamma\big] )$ for $\gamma\in(0,1)$ provided that $\viii{Y}_{e^\gamma}\land \viii{Y}_{e^\gamma}>0$.
\end{lemma}
\begin{proof}
If $\viii{X}_{e^\gamma}=0$ or $\viii{Y}_{e^\gamma}=0$ then $XY=0$ a.s and the inequality holds trivially.  Assume that  $0<\viii{X}_{e^\gamma}<\infty$ and  $0<\viii{Y}_{e^\gamma}<\infty$. From  Lemma \eqref{L:Orlicz_equivalence} we have for $x>0$
\begin{align*}
\P(|X|>x)\leq C_X\exp[-(x/\viii{X}_{e^\gamma})^\gamma]\\
\P(|Y|>x)\leq C_Y\exp[-(x/\viii{Y}_{e^\gamma})^\gamma],
\end{align*}
for positive constants $C_X$ and $C_X$. Then, by the union bound,
\begin{align*}
\P(|XY|\geq x)&\leq \P(|X|\geq \sqrt{x})+ \P(|Y|\geq \sqrt{x})
\\
&\leq C_X \exp(-x^{\gamma/2}/\viii{X}_{e^\gamma}^\gamma)+ C_Y \exp\big[-x^{\gamma/2}/\viii{Y}_{e^\gamma}^\gamma)\\
&\leq 2(C_X\lor C_Y)\exp\left[-\left(\frac{x}{\viii{X}_{e^\gamma}^2\lor \viii{Y}_{e^\gamma}^2}\right)^{\gamma/2}\right].
\end{align*}
Apply once again  Lemma \eqref{L:Orlicz_equivalence} in the other direction to conclude.
\end{proof}

\end{supplement}

\end{document}